%
%

\documentclass[aps,pra,twocolumn,10pt,notitlepage,showpacs]{revtex4-2}

\usepackage{tikz}
\usetikzlibrary{calc,patterns}
\usepackage{ifthen}
\usepackage{amssymb}
\usepackage{mathtools}
\usepackage{dsfont}
\usepackage{bbm}
\usepackage{mleftright}\mleftright
\usepackage{thmtools} 
\usepackage{thm-restate}

\usepackage{algorithm2e}
\usepackage{enumitem}
\usepackage{float}

\usepackage{caption}
\usepackage{subcaption}

\newcommand{\eps}{\varepsilon}

\newcommand{\nrm}[1]{\left\lVert #1 \right\rVert}

\newcommand{\bigOt}[1]{\widetilde{\mathcal{O}}\left( #1 \right)}

\newcommand{\abs}[1]{\left|#1\right|}

\newcommand{\pvp}{\vec{p}{\kern 0.45mm}'}
\let\oldnabla\nabla
\renewcommand{\nabla}{\oldnabla\!}

\DeclarePairedDelimiter\bra{\langle}{\rvert}
\DeclarePairedDelimiter\ket{\lvert}{\rangle}
\DeclarePairedDelimiterX\braket[2]{\langle}{\rangle}{#1 \delimsize\vert #2}
\newcommand{\underflow}[2]{\underset{\kern-60mm \overbrace{#1} \kern-60mm}{#2}}


\providecommand{\spnorm}[1]{\left\lVert#1\right\rVert}


\long\def\ignore#1{}

\newtheorem{theorem}{Theorem}
\newtheorem{corollary}[theorem]{Corollary}
\newtheorem{lemma}[theorem]{Lemma}
\newtheorem{fact}[theorem]{Fact}
\newtheorem{prop}[theorem]{Proposition}

\newcommand{\A}{\ensuremath{\mathcal{A}}}

\newcommand{\C}{\ensuremath{\mathcal{C}}}

\newcommand{\Oo}{\ensuremath{\mathcal{O}}}

\newcommand{\I}{\ensuremath{\mathcal{I}}}

\newcommand{\U}{\ensuremath{\mathcal{U}}}

\newcommand{\M}{\mathcal{M}}
\usepackage{tikz}

\usetikzlibrary{backgrounds,fit,decorations.pathreplacing,calc}

\usepackage{qcircuit}

\newenvironment{proof}
{\noindent {\bf Proof. }}
{{\hfill $\Box$}\\	\smallskip}

\usepackage[final]{hyperref}
\hypersetup{
	colorlinks = true,
	allcolors = {blue},
}

\usepackage{comment}
\usepackage{amsmath}
\usepackage{color}
\usepackage{graphicx}
\usepackage{subcaption}
\usepackage{caption}
\usepackage{amssymb}
\usepackage{amsfonts}
\usepackage[parfill]{parskip}
\usepackage{hyperref}

\setlength{\parskip}{1em}

\usepackage[format=plain,justification=centerlast,singlelinecheck=true]{caption}


\begin{document}


\title{Analog quantum algorithms for the mixing of Markov chains}


\author{Shantanav Chakraborty$^1$}  
\email[]{shchakra@ulb.ac.be}
\author{Kyle Luh$^2$}
\email[]{kluh@cmsa.fas.harvard.edu}
\author{J\'{e}r\'{e}mie Roland$^1$}
\email[]{jroland@ulb.ac.be}
\affiliation{$^1$QuIC, Ecole Polytechnique de Bruxelles, Universit\'{e} libre de Bruxelles}
\affiliation{$^2$Center for Mathematical Sciences and Applications, Harvard University}
\begin{abstract}
The problem of sampling from the stationary distribution of a Markov chain finds widespread applications in a variety of fields. The time required for a Markov chain to converge to its stationary distribution is known as the classical mixing time. In this article, we deal with analog quantum algorithms for mixing. First, we provide an analog quantum algorithm that given a Markov chain, allows us to sample from its stationary distribution in a time that scales as the sum of the square root of the classical mixing time and the square root of the classical hitting time. Our algorithm makes use of the framework of interpolated quantum walks and relies on Hamiltonian evolution in conjunction with von Neumann measurements. 

There also exists a different notion for quantum mixing: the problem of sampling from the limiting distribution of quantum walks, defined in a time-averaged sense. In this scenario, the quantum mixing time is defined as the time required to sample from a distribution that is close to this limiting distribution. 
Recently we provided an upper bound on the quantum mixing time for Erd\"os-Renyi random graphs [Phys. Rev. Lett. 124, 050501 (2020)]. Here, we also extend and expand upon our findings therein. Namely, we provide an intuitive understanding of the state-of-the-art random matrix theory tools used to derive our results. In particular, for our analysis we require information about macroscopic, mesoscopic and microscopic statistics of eigenvalues of random matrices which we highlight here. Furthermore, we provide numerical simulations that corroborate our analytical findings and extend this notion of mixing from simple graphs to any ergodic, reversible, Markov chain.
\end{abstract}
\date{\today}
\maketitle
\section{Introduction}
Markov chain-based algorithms are applied in a plethora of fields ranging from statistical physics \cite{binder1993monte}, combinatorial optimization \cite{aarts1985statistical} to network science \cite{page1999pagerank} and form the basis of Markov chain Monte Carlo-based methods \cite{gilks1995markov}. In many of these applications, the underlying task is often to sample from the so-called steady state (also known as the stationary distribution) of the associated Markov chain. 

One way to sample from a stationary distribution is by \textit{mixing}. The Markov chain, which is represented by a stochastic matrix $P$ is applied repeatedly to some initial distribution. The resultant random walk reaches a final distribution that is close to a stationary distribution of $P$, irrespective of the initial distribution. For most applications, the Markov chain is \textit{ergodic}, implying that it has a unique stationary distribution and, \textit{reversible}, i.e.\ it satisfies detailed balance. (We refer the reader to Sec.~\ref{sec:preliminaries} for details on the definitions of these terms related to Markov chains). Henceforth, unless stated otherwise, we shall restrict our attention to ergodic, reversible Markov chains. For a given Markov chain $P$, the minimum time after which the distribution is $\epsilon$-close to the stationary distribution is known as the \textit{mixing time} of the random walk on $P$. It is well known that the \textit{mixing time} is related to the spectral gap of $P$. For an ergodic, Markov chain with spectral gap $\Delta$, the mixing time is in $\bigOt{1/\Delta}$ \footnote{Throughout the article, we use $\tilde{\mathcal{O}}(.)$ and $\tilde{\Omega}(.)$ to suppress polylogarithmic factors, i.e.\ $\tilde{\mathcal{O}}(f(n))=\mathcal{O}(f(n)\mathrm{polylog}(f(n)))$ and $\tilde{\Omega}(f(n))=\Omega(f(n)\mathrm{polylog}(f(n)))$.}.

The stationary distribution, by definition, is the limiting distribution of the resultant random walk on $P$, i.e.\ once the stationary state is reached, the random walk ceases to evolve. This implies that as $t\rightarrow\infty$, $P^t$ applied to any initial distribution converges to the stationary distribution. Thus the classical mixing time is also the time required to sample from the limiting distribution of the underlying random walk.   

In the context of quantum algorithms, there arise two notions of mixing and hence of mixing time. First, it is natural to consider whether, given a Markov chain $P$, a quantum algorithm can allow us to prepare a coherent encoding of the stationary distribution of $P$. We shall refer to this problem as $QSSamp$. Measuring the output state of such an algorithm would enable us to sample from the (classical) stationary state of $P$. Preparing such a coherent encoding also has other applications which we discuss later. 

The other notion of mixing arises from considering the limiting distribution of the underlying quantum walk itself. As quantum evolutions are unitary and hence distance-preserving, there is no inherent limiting stationary distribution for quantum walks. However, it turns out that one can define a limiting distribution of the quantum walk in a time-averaged sense. 

Starting from some initial state, one can obtain the probability that the walker is in some final state after a time $t$ which is picked uniformly at random in the interval $[0,T]$. This gives a \textit{time-averaged probability distribution} at any time $T$ and also a \textit{limiting probability distribution} as $T\rightarrow\infty$. The problem of sampling from this time-averaged limiting distribution of a quantum walk gives rise to another notion of mixing and we shall refer to this problem as $QLSamp$. The \textit{mixing time} of a quantum walk is then defined as the time after which the time-averaged probability distribution is close to the limiting probability distribution, i.e.\ the time required to solve $QLSamp$.  

In this article, we deal with both $QSSamp$ and $QLSamp$ problems. We provide the first purely analog quantum algorithm to solve the $QSSamp$ problem while for the $QLSamp$ problem, we expand and extend upon the results of Ref.~\cite{chakraborty2020fast}, where we prove an upper bound for the quantum mixing time for \textit{almost all graphs}.

Aharonov and Ta-Shma \cite{aharonov2003adiabatic} demonstrated that the existence of an efficient quantum algorithm for $QSSamp$ would imply that problems in the complexity class Statistical Zero Knowledge (SZK) such as \textit{Graph Isomorphism} would be solvable in polynomial-time using a quantum computer (BQP), i.e.\ $\mathrm{SZK}\subseteq \mathrm{BQP}$. This would be a surprising result as such a generic $QSSamp$ algorithm would be oblivious to the specific structure of the underlying problem. For example, consider the problem of \textit{Graph Isomorphism} \cite{aho1974design} (deciding whether two graphs are isomorphic to each other). Given graphs $G_1$ and $G_2$, a quantum algorithm for mixing could be used to prepare states that are a uniform superposition of all graphs that are isomorphic to them. If these states are equal, then $G_1$ and $G_2$ are isomorphic. A simple SWAP Test could then be used in conjunction with a quantum algorithm for $QSSamp$  to solve \textit{Graph Isomorphism}. Thus, generic quantum algorithms for $QSSamp$ are unlikely to be efficient.
   
Having said that, there do exist quantum algorithms that solve this problem \cite{richter2007quantum, wocjan2008speedup, dunjko2015quantum}, some of which have even been instrumental in obtaining speedups for quantum machine learning \cite{paparo2014quantum, orsucci2018optimal, dunjko2018machine}. Richter \cite{richter2007almost} conjectured that one could construct a quantum algorithm for this problem that has a running time that is in $\widetilde{\mathcal{O}}(1/\sqrt{\Delta})$, yielding a quadratic speedup over its classical counterpart. Developing quantum algorithms that match this conjectured bound have been challenging. Most of the existing quantum algorithms are based on Szegedy's framework for discrete-time quantum walks \cite{szegedy2004quantum}. 

The key idea that encompasses all existing algorithms for $QSSamp$ is to make use of the so called quantum spatial search algorithm \cite{magniez2011search}. Given an ergodic, reversible Markov chain $P$ with a set of marked nodes, a spatial search algorithm finds an element from this marked set. Classically, this task requires a time known as the \textit{hitting time} of the corresponding random walk on $P$. It has been shown that a discrete-time quantum walk-based quantum algorithm for spatial search can accomplish this task quadratically faster (up to logarithmic factors) \cite{krovi2016quantum, ambainis2019quadratic}. Such quantum algorithms start from the coherent encoding of the stationary state of $P$ (it inherently assumes that this state can be prepared efficiently) and end up in a state that has a constant overlap with an element from the marked set. Thus, intuitively, quantum spatial search algorithms can be run in reverse to obtain quantum mixing algorithms. However, simply obtaining a constant overlap with the stationary state is not enough and these mixing algorithms require the use of quantum phase estimation \cite{cleve1998quantum} and quantum amplitude amplification \cite{brassard2002quantum} to solve the $QSSamp$ problem. Recently, Apers and Sarlette provided a quantum algorithm that can quadratically \textit{fast-forward} the dynamics of Markov chains which can also be used to solve the $QSSamp$ problem \cite{simon2018quantum}. The running time of these algorithms scale as the square root of the \textit{hitting time} of the corresponding quantum walk on the underlying Markov chain.  

To the best of our knowledge, there do not exist any analog quantum algorithm for solving the $QSSamp$ problem. In this framework, key algorithmic primitives such as quantum phase estimation and quantum amplitude amplification are missing as they are inherently discrete-time. In order to construct an analog quantum algorithm for $QSSamp$ we assume that, given an ergodic, reversible Markov chain $P$, we have access to a time-independent Hamiltonian that encodes the connectivity of $P$. This Hamiltonian, defined in Sec.~\ref{sec-main:hamiltonian-any-markov-chain}, corresponds to a quantum walk on the edges of $P$.  Furthermore it has been recently used to design continuous-time quantum walk-based quantum algorithms for spatial search that can find a single marked node on any ergodic, reversible Markov chain in square root of the hitting time \cite{chakraborty2018finding}. We use the time-evolution of this Hamiltonian as the key primitive to our algorithm. The second key primitive is to use von Neumann measurements \cite{von1955mathematical} for quantum state generation. Childs et al. used a sequence of such von Neumann measurements as an alternative to adiabatic quantum computation and for solving combinatorial search algorithms \cite{childs2002quantum}. In Sec.~\ref{sec-main:von-neumann} we demonstrate that this scheme can be used to prepare eigenstates of Hamiltonians.

We show (Sec.~\ref{sec-main:search-pointer}) that these two primitives allow us to develop a continuous-time quantum walk based algorithm for spatial search. This algorithm differs from the one developed in Ref.~\cite{chakraborty2018finding} which makes use of quantum phase randomization \cite{somma2010quantum}. It provides an alternative scheme by which one can find an element in a marked set of states of any ergodic, reversible Markov chain in square root of the \textit{extended hitting time}. Although this algorithm has the same running time as that of Ref.~\cite{chakraborty2018finding}, it provides useful intuition about how to build an analog quantum algorithm for $QSSamp$. 

Our quantum algorithm for mixing, explained in detail in Sec.~\ref{sec:preparation-of-stationary-state-von-neumann}, avoids the need for amplitude amplification by making use of the framework of interpolated Markov chains and switching between two different values of the interpolation parameter. The running time scales as the sum of the square root of the classical mixing time and the square root of the hitting time.

We also discuss the problem of $QLSamp$ on generic graphs. The limiting distribution of quantum walks can be quite different from that obtained from a quantum algorithm for solving $QSSamp$. Unlike its classical counterpart, for $QLSamp$, the limiting distribution is dependent on the initial state of the quantum walk. Moreover, instead of being dependent on the spectral gap $\Delta$, the quantum mixing time depends on all eigenvalue gaps of the underlying Hamiltonian. Aharonov et al.~\cite{aharonov2001quantum} were the first to study this problem. They showed that a discrete-time quantum walk on the cycle graph mixes faster than its classical counterpart. Since then several works have considered the mixing time of both continuous and discrete-time quantum walks on specific graphs  \cite{ahmadi2003mixing,kendon2003decoherence,fedichkin2005mixing,marquezino2008mixing,marquezino2010mixing,kieferova2012quantum}. The upper bound for the mixing time of quantum walks have been proven to be slower than its classical counterpart for some graphs while a quadratic speedup has been obtained for others.

Recently, we proved an upper bound for the mixing time of quantum walks for \textit{almost all graphs} \cite{chakraborty2020fast}. This implies that the fraction of graphs of $n$ nodes for which our upper bound holds, goes to 1 as $n$ goes to infinity or equivalently, if a graph is picked uniformly at random from the set of all graphs, our result provides an upper bound on the quantum mixing time \textit{almost surely}, i.e.\ with probability $1-o(1)$. Throughout the article, we shall use the phrase \textit{almost all graphs} to signify precisely this. 

We proved this by obtaining the mixing time for quantum walks on Erd\"os-Renyi random graphs: graphs of $n$ nodes such that the probability of an edge existing between any two nodes is $p$, typically denoted as $G(n,p)$. Here, we expand upon the results of the letter \cite{chakraborty2020fast}. In particular, our goal is to offer an intuitive explanation of our proof techniques with an emphasis on the several recently developed random matrix theory tools that were used to derive the aforementioned results. We also corroborate our analytical findings numerically and also extend the notion of $QLSamp$ to any ergodic, reversible Markov chain. In fact, our numerical findings confirm the fact that the mixing time for quantum walks on $G(n,p)$ is in $\bigOt{n^{3/2}}$ for dense random graphs (constant $p$). Additionally they also show that the limiting probability distribution is close to the uniform distribution. 

This article is organized as follows: In Sec.~\ref{sec:preliminaries}, we explain some basic concepts and quantities related to Markov chains that we shall use in subsequent sections. In Sec.~\ref{sec-main:von-neumann} we show how von Neumann measurements can be used for preparing eigenstates of Hamiltonians. In Sec.~\ref{sec-main:hamiltonian-any-markov-chain}, we define a Hamiltonian corresponding to a quantum walk on the edges of any ergodic, reversible Markov chain. In Sec.~\ref{sec-main:search-pointer}, we make use of von Neumann measurements and Hamiltonian evolution to provide a quantum algorithm for spatial search. This provides an intuitive understanding of our analog quantum algorithm for solving $QSSamp$, which we describe in Sec.~\ref{sec:preparation-of-stationary-state-von-neumann}. Next, in Sec.~\ref{sec-main:mixing-time-erg}, we deal with solving the $QLSamp$ problem. Finally, we conclude with a brief discussion and summary in Sec.~\ref{sec-main:discussion}.
\section{Preliminaries}
\label{sec:preliminaries}
In this section we state some basic definitions about Markov chains which we shall use subsequently.
\subsection{Basics of Markov chains} 
\label{subsec:basics-mc}
A Markov chain on a discrete state space $X$, such that $|X|=n$, can be described by a $n\times n$ stochastic matrix $P$ \cite{norris1998markov}. Each entry $p_{xy}$ of this matrix $P$ represents the probability of transitioning from state $x$ to state $y$. Any distribution over the state space of the Markov chain is represented by a stochastic row vector. 

A Markov chain is \textit{irreducible} if any state can be reached from any other state in a finite number of steps. Any \textit{irreducible} Markov chain is \textit{aperiodic} if there exists no integer greater than one that divides the length of every directed cycle of the graph. A Markov chain is \textit{ergodic} if it is both \textit{irreducible} and \textit{aperiodic}. By the Perron-Frobenius Theorem, any ergodic Markov chain $P$ has a unique stationary state $\pi$ such that $\pi P=\pi$. The stationary state $\pi$ is a stochastic row vector and has support on all the elements of $X$. Let us denote it as 
\begin{equation}
\label{eqmain:stationary-state-classical}
\pi=\left(\pi_1~~\pi_2~~\cdots~~\pi_n\right),
\end{equation}
such that $\sum_{j=1}^n \pi_j=1$. Starting from any initial probability distribution $\mu$ over the state space $X$, the repeated application of $P$ leads to convergence to the stationary distribution $\pi$, i.e.\ $\lim_{t\rightarrow\infty} \mu P^t=\pi$. This is known as the \textit{mixing} of a Markov chain. It follows from the Perron-Frobenius theorem that other than $\pi$, all eigenvectors have eigenvalues of absolute value strictly less than $1$. Thus, $\pi$ is the unique eigenvector with eigenvalue $1$ and all other eigenvalues lie between $-1$ and $1$. Throughout the paper we shall be working with the Markov chain corresponding to the \textit{lazy walk}, i.e.\ we shall map $P\mapsto (I+P)/2$. This transformation ensures that all the eigenvalues of $P$ lie between $0$ and $1$. This transformation will not affect our results other than by a factor of two, which is irrelevant in the asymptotic limit. Throughout the article, we shall denote the gap between the two highest eigenvalues of $P$ (the spectral gap) by $\Delta$.

Let $p_{x,y}$ denote the $(x,y)^{\text{th}}$-entry of the ergodic Markov chain $P$ with stationary state $\pi$. Then the $(x,y)^{\text{th}}$ entry of the time-reversed Markov chain of $P$, denoted by $P^*$, is 
$$p^*_{x,y}=p_{y,x}\dfrac{\pi_y}{\pi_x}.$$
We shall concern ourselves with ergodic Markov chains that are also \textit{reversible}, i.e. Markov chains for which \ $P=P^*$. Any reversible $P$ satisfies the \textit{detailed balance condition} 
$$\pi_x p_{xy}=\pi_y p_{yx},~\forall (x,y)\in X.$$ 
This can also be rewritten as 
$$\text{diag}(\pi)P=P^T\text{diag}(\pi),$$
where $\text{diag}(\pi)$ is a diagonal matrix with the $j^{\text{th}}$ diagonal entry being $\pi_j$. In other words, the reversibility criterion implies that the matrix $\text{diag}(\pi)P$ is symmetric. Henceforth we shall only deal with reversible (and hence ergodic) Markov chains.
~\\~\\
\textbf{Interpolated Markov chains:} Let us assume that a subset of the elements of the state space of the Markov chain $P$ is marked. Let $M\subset X$ denote the set of marked elements. Given any $P$, we define $P'$ as the \textit{absorbing Markov chain} obtained from $P$ by replacing all the outgoing edges from $M$ to $X$ by self-loops. If we re-arrange the elements of $X$ such that the unmarked elements $U:=X\backslash M$ appear first, then we can write 
\begin{align}
P=\begin{bmatrix}
P_{UU} & P_ {UM}\\
P_{MU} & P_{MM}
\end{bmatrix},~~~~~~~~~P'=\begin{bmatrix}
P_{UU} & P_ {UM}\\
0 & I
\end{bmatrix},
\end{align}
where $P_{UU}$ and $P_{MM}$ are square matrices of size $(n-|M|)\times (n-|M|)$ and $|M|\times |M|$ respectively. On the other hand $P_{UM}$ and $P_{MU}$ are matrices of size $(n-|M|)\times |M|$ and $|M|\times (n-|M|)$ respectively. Then the \textit{interpolated Markov chain} is defined as 
\begin{equation}
\label{eqmain:interpolated-mc-defintion}
P(s)=(1-s)P+sP',
\end{equation}
where $s\in[0,1]$. The interpolated Markov chain thus has a block structure
\begin{align}
P=\begin{bmatrix}
P_{UU} & P_ {UM}\\
(1-s)P_{MU} & (1-s)P_{MM}+sI
\end{bmatrix}.
\end{align}
Clearly, $P(0)=P$ and $P(1)=P'$. Notice that if $P$ is ergodic, so is $P(s)$ for $s\in [0,1)$. This is because any edge in $P$ is also an edge of $P(s)$ and so the properties of \textit{irreducibility} and \textit{aperiodicity} are preserved. However when $s=1$, $P(s)$ has outgoing edges from $M$ replaced by self-loops and as such the states in $U$ are not accessible from $M$, implying that $P(1)$ is not ergodic. The spectral gap of $P(s)$ is denoted by $\Delta(s)$. 

Now we shall see how the stationary state of $P$ is related to that of $P(s)$. Since $X=U \cup M$, the stationary state $\pi$ can be written as 
\begin{equation}
\label{eqmain:stationary-state-split}
\pi=(\pi_U~~\pi_M),
\end{equation}
where $\pi_U$ and $\pi_M$ are row-vectors of length $n-|M|$ and $|M|$ respectively. As mentioned previously, $P'$ is not ergodic and does not have a unique stationary state. In fact, any state having support over only the marked set is a stationary state of $P'$. 

On the other hand $P(s)$ is ergodic for $s\in [0,1)$. Let $p_M=\sum_{x\in M}\pi_x$ be the probability of obtaining a marked element in the stationary state of $P$. Then it is easy to verify that the unique stationary state of $P(s)$ is 
\begin{equation}
\label{eqmain:stationary-state-interpolated-mc-classical}
\pi(s)=\dfrac{1}{1-s(1-p_M)}\left((1-s)\pi_U~~\pi_M\right).
\end{equation}
~\\~\\
\textbf{Discriminant matrix:~} We denote by 
\begin{equation}
\label{eqmain:discriminant-matrix-definition}
D(P(s))=\sqrt{P(s)\circ P(s)^T}
\end{equation}
the symmetric matrix whose $(x,y)^{\mathrm{th}}$ entry is $D_{xy}(P(s))=\sqrt{p_{xy}(s)p_{yx}(s)}$.  Here $\circ$ indicates the Hadamard product. 
~\\
For any $s\in[0,1)$ as $P(s)$ is reversible, the detailed-balance condition is satisfied. So, each entry of $D(P(s))$ can be expressed as 
\begin{align}
D_{xy}(P(s))&=\sqrt{p_{xy}(s)p_{yx}(s)}\\
            &=p_{xy}(s)\sqrt{\frac{\pi_x(s)}{\pi_y(s)}}.
\end{align}
This leads us to the following fact:
~\\
\begin{fact}
\label{fact:reversibility-condition-discriminant-matrix}
For any ergodic, reversible Markov chain $P$, we have that for $s\in[0,1)$ 
$$D(P(s))=\mathrm{diag}(\sqrt{\pi(s)})P(s)\mathrm{diag}(\sqrt{\pi(s)})^{-1},$$
where $\sqrt{\pi(s)}$ is a row vector with its $j^{\text{th}}$-entry being $\sqrt{\pi_j(s)}$. 
\end{fact}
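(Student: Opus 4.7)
The plan is to reduce the statement to the fact that $P(s)$ is reversible with respect to $\pi(s)$, i.e.\ satisfies the detailed balance condition $\pi_x(s)\, p_{xy}(s) = \pi_y(s)\, p_{yx}(s)$ for all $x,y\in X$ and $s\in[0,1)$. Once detailed balance is established, the similarity formula drops out of a one-line entry-wise computation, so the real work lies in verifying detailed balance for the interpolated chain.

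To verify detailed balance, I would split into cases according to whether each of $x,y$ lies in $U$ or $M$, using the block form of $P(s)$ together with the stationary distribution from \eqref{eqmain:stationary-state-interpolated-mc-classical}. Let $Z(s)=1-s(1-p_M)$ denote the common normalizing constant. If both $x,y\in U$, then $p_{xy}(s)=p_{xy}$ and $\pi_x(s)=(1-s)\pi_x/Z(s)$, so both sides of the detailed balance equation equal $(1-s)\pi_x p_{xy}/Z(s) = (1-s)\pi_y p_{yx}/Z(s)$ by reversibility of $P$. If $x\in U$ and $y\in M$, then $p_{xy}(s)=p_{xy}$ but $p_{yx}(s)=(1-s)p_{yx}$; meanwhile $\pi_x(s)=(1-s)\pi_x/Z(s)$ and $\pi_y(s)=\pi_y/Z(s)$, and the factors of $(1-s)$ align on both sides so that the identity again follows from $\pi_x p_{xy}=\pi_y p_{yx}$. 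The case $x\in M,y\in U$ is symmetric. Finally, for $x,y\in M$ with $x\neq y$, both transition probabilities acquire a factor $(1-s)$, both stationary probabilities acquire a factor $1/Z(s)$, and detailed balance for $P$ again gives equality; for $x=y$ the identity is trivial. Hence $\pi(s)$ is the (unique) stationary distribution of $P(s)$ and $P(s)$ is reversible.

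Given detailed balance, I would finish by comparing entries. Writing $\pi_x(s)\,p_{xy}(s)=\pi_y(s)\,p_{yx}(s)$ as
\begin{equation*}
p_{yx}(s) = p_{xy}(s)\,\frac{\pi_x(s)}{\pi_y(s)},
\end{equation*}
one obtains
\begin{equation*}
D_{xy}(P(s))=\sqrt{p_{xy}(s)p_{yx}(s)} = p_{xy}(s)\sqrt{\frac{\pi_x(s)}{\pi_y(s)}},
\end{equation*}
which is exactly the $(x,y)$-entry of $\mathrm{diag}(\sqrt{\pi(s)})\,P(s)\,\mathrm{diag}(\sqrt{\pi(s)})^{-1}$, since that entry equals $\sqrt{\pi_x(s)/\pi_y(s)}\,p_{xy}(s)$.

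There is no real obstacle here, but the step most prone to error is the case analysis, in particular keeping track of where the factor $(1-s)$ appears in $P(s)$ (rows indexed by $M$ only) versus in $\pi(s)$ (entries indexed by $U$ only), and of the common denominator $Z(s)=1-s(1-p_M)$. Once those bookkeeping factors are set down carefully, reversibility of $P(s)$ reduces verbatim to reversibility of $P$, and the symmetric similarization of $P(s)$ by $\mathrm{diag}(\sqrt{\pi(s)})$ follows immediately.
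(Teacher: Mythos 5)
Your proposal is correct and follows essentially the same route as the paper: invoke detailed balance of $P(s)$ with respect to $\pi(s)$ and then compare entries, noting that $\sqrt{p_{xy}(s)p_{yx}(s)}=p_{xy}(s)\sqrt{\pi_x(s)/\pi_y(s)}$ is exactly the $(x,y)$-entry of the conjugated matrix. The only difference is that the paper simply asserts reversibility of the interpolated chain, whereas you verify the detailed balance condition explicitly via the $U$/$M$ case analysis (which is carried out correctly, with the factors of $(1-s)$ and the normalization $1-s(1-p_M)$ tracked properly); this makes your write-up more self-contained but does not change the argument.
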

~\\
From Fact \ref{fact:reversibility-condition-discriminant-matrix}, it follows that $D(P(s))$ is \textit{similar} to $P(s)$, i.e.\ they have the same set of eigenvalues \footnote{Note that throughout this paper, we shall be dealing with values of $s\in[0,1)$ and so the properties of $D(P(1))$ are not relevant here.}.

Let the spectral decomposition of $D(P(s))$ be
\begin{equation}
\label{eqmain:discriminant-matrix-spectral}
D(P(s))=\sum_{i=1}^{n}\lambda_i(s)\ket{v_i(s)}\bra{v_i(s)},
\end{equation}
where $\ket{v_i(s)}$ is an eigenvector of $D(P(s))$ with eigenvalue $\lambda_i(s)$. Furthermore, $\lambda_n(s)=1 > \lambda_{n-1}(s)\geq\cdots\geq \lambda_1(s)$. 
~\\
\begin{fact}
\label{fact:highest-eigenstate-equals-pi-s}
For $s\in[0,1)$, the eigenstate of $D(P(s))$ with eigenvalue $1$ is given by
$$\ket{v_n(s)}=\sqrt{\pi(s)^T},$$
where $\sqrt{\pi(s)}$ is a row vector with its $j^{\text{th}}$-entry being $\sqrt{\pi_j(s)}$. 
\end{fact}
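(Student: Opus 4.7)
The plan is to verify directly that $\sqrt{\pi(s)^T}$ is an eigenvector of $D(P(s))$ with eigenvalue $1$, using the similarity transformation given by the preceding Fact~\ref{fact:reversibility-condition-discriminant-matrix}. Since $P(s)$ is ergodic for $s\in[0,1)$ and has $1$ as its unique largest eigenvalue (by Perron--Frobenius applied to the stochastic matrix $P(s)$), and since $D(P(s))$ shares the same spectrum as $P(s)$, the eigenvalue $1$ is automatically the largest eigenvalue of $D(P(s))$ as well; this identifies the resulting eigenvector as $\ket{v_n(s)}$ in the ordering fixed by Eq.~\eqref{eqmain:discriminant-matrix-spectral}.

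Concretely, I would first write
\begin{equation*}
D(P(s))\sqrt{\pi(s)^T} = \mathrm{diag}(\sqrt{\pi(s)})\, P(s)\, \mathrm{diag}(\sqrt{\pi(s)})^{-1}\sqrt{\pi(s)^T},
\end{equation*}
using Fact~\ref{fact:reversibility-condition-discriminant-matrix}. Since $\pi_j(s)>0$ for all $j$ (as $P(s)$ is ergodic on $X$ when $s\in[0,1)$), $\mathrm{diag}(\sqrt{\pi(s)})^{-1}$ is well defined, and applying it to the column vector $\sqrt{\pi(s)^T}$ yields the all-ones vector $\mathbf{1}$. Stochasticity of $P(s)$ then gives $P(s)\mathbf{1}=\mathbf{1}$, and finally $\mathrm{diag}(\sqrt{\pi(s)})\,\mathbf{1}=\sqrt{\pi(s)^T}$, so the chain of equalities closes with $D(P(s))\sqrt{\pi(s)^T}=\sqrt{\pi(s)^T}$.

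To finish I would check normalization: $\|\sqrt{\pi(s)^T}\|^2 = \sum_j \pi_j(s) = 1$ since $\pi(s)$ is a probability distribution (this is immediate from the explicit form of $\pi(s)$ in Eq.~\eqref{eqmain:stationary-state-interpolated-mc-classical}, or just from the fact that $\pi(s)$ is the stationary distribution of a stochastic matrix). Hence $\sqrt{\pi(s)^T}$ is a unit eigenvector of $D(P(s))$ with eigenvalue $1$, and by the previous paragraph it is the top eigenvector, which is the claimed identification with $\ket{v_n(s)}$.

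There is no serious obstacle here: the entire argument is a direct verification once the similarity from Fact~\ref{fact:reversibility-condition-discriminant-matrix} is invoked. The only subtlety worth flagging is the well-definedness of $\mathrm{diag}(\sqrt{\pi(s)})^{-1}$, which hinges on $\pi_j(s)>0$ for all $j$; this holds precisely because the restriction $s\in[0,1)$ keeps $P(s)$ ergodic (the degenerate case $s=1$, where marked states become absorbing, is exactly the case excluded in the statement).
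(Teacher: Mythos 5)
Your proposal is correct and follows the same route as the paper: the paper's own justification is exactly the one-line computation $D(P(s))\sqrt{\pi(s)^T}=\mathrm{diag}(\sqrt{\pi(s)})P(s)\mathrm{diag}(\sqrt{\pi(s)})^{-1}\sqrt{\pi(s)^T}=\sqrt{\pi(s)^T}$ via Fact~\ref{fact:reversibility-condition-discriminant-matrix}, with the intermediate steps (reduction to the all-ones vector and row-stochasticity of $P(s)$) left implicit. Your added remarks on positivity of $\pi_j(s)$, normalization, and the Perron--Frobenius identification of this as the top eigenvector are sound elaborations of the same argument.
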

~\\
This fact follows from the reversibility condition stated in Fact~\ref{fact:reversibility-condition-discriminant-matrix}, i.e.\ for $s\in [0,1)$ we have
\begin{align}
D(P(s))\sqrt{\pi(s)^T}&=\mathrm{diag}(\sqrt{\pi(s)})P(s)\mathrm{diag}(\sqrt{\pi(s)})^{-1}\sqrt{\pi(s)^T}\\
                      &=\sqrt{\pi(s)^T}.
\end{align}
The $1$-eigenvector of $D(P(s))$,~$\ket{v_n(s)}$ can also be expressed in a different form. 
~\\
\begin{prop}
\label{prop-main:zero-estate-somma-ortiz}
The eigenstate of eigenvalue $1$ of $D(P(s))$ can be expressed as 
\begin{equation}
\label{eqmain:highest-eigenstate-discriminant-matrix}
\ket{v_n(s)}=\sqrt{\dfrac{(1-s)(1-p_M)}{1-s(1-p_M)}}\ket{U}+\sqrt{\dfrac{p_M}{1-s(1-p_M)}}\ket{M},
\end{equation} 
where $\ket{U}$ and $\ket{M}$ are defined as
\begin{align}
\ket{U}&=\frac{1}{\sqrt{1-p_M}}\sum_{x\notin M}\sqrt{\pi_x}\ket{x}\\
\ket{M}&=\frac{1}{\sqrt{p_M}}\sum_{x\in M}\sqrt{\pi_x}\ket{x}.
\end{align}  
\end{prop}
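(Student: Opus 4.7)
The plan is to start from the characterization of $\ket{v_n(s)}$ already provided in Fact~\ref{fact:highest-eigenstate-equals-pi-s}, namely $\ket{v_n(s)} = \sqrt{\pi(s)^T}$, i.e.\ the vector whose $j^{\text{th}}$ entry is $\sqrt{\pi_j(s)}$, and then simply rewrite it in the $\{\ket{U},\ket{M}\}$ basis. The only non-trivial input I need is the explicit form of the stationary distribution $\pi(s)$ of the interpolated chain, which is already recorded in Eq.~\eqref{eqmain:stationary-state-interpolated-mc-classical}. So the proof should reduce to an algebraic substitution.

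Concretely, I would first split the sum defining $\ket{v_n(s)}$ according to whether the basis index lies in $U$ or $M$:
\begin{equation}
\ket{v_n(s)} = \sum_{x\notin M}\sqrt{\pi_x(s)}\,\ket{x} + \sum_{x\in M}\sqrt{\pi_x(s)}\,\ket{x}.
\end{equation}
Next, I would substitute the components of $\pi(s)$ read off from Eq.~\eqref{eqmain:stationary-state-interpolated-mc-classical} and Eq.~\eqref{eqmain:stationary-state-split}: for $x\notin M$, $\pi_x(s) = (1-s)\pi_x/[1-s(1-p_M)]$, and for $x\in M$, $\pi_x(s) = \pi_x/[1-s(1-p_M)]$. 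Pulling the $x$-independent factors outside of the square root and out of the sum then gives
\begin{equation}
\ket{v_n(s)} = \sqrt{\tfrac{1-s}{1-s(1-p_M)}}\sum_{x\notin M}\sqrt{\pi_x}\,\ket{x} + \sqrt{\tfrac{1}{1-s(1-p_M)}}\sum_{x\in M}\sqrt{\pi_x}\,\ket{x}.
\end{equation}

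The last step is to recognize the two remaining sums as $\sqrt{1-p_M}\,\ket{U}$ and $\sqrt{p_M}\,\ket{M}$ respectively, directly from the definitions of $\ket{U}$ and $\ket{M}$ in the statement. Absorbing these factors into the prefactors yields exactly Eq.~\eqref{eqmain:highest-eigenstate-discriminant-matrix}. There is no real obstacle here, the argument is a bookkeeping exercise; the only things to be careful about are (i) the restriction $s\in[0,1)$, which guarantees that $1-s(1-p_M)>0$ so all the square roots make sense and $\pi(s)$ is indeed the unique stationary state of $P(s)$, and (ii) confirming that the resulting vector has unit norm, which follows since $(1-s)(1-p_M)/[1-s(1-p_M)] + p_M/[1-s(1-p_M)] = 1$, consistent with $\ket{U}$ and $\ket{M}$ being orthonormal (they are supported on disjoint sets of basis states and each is unit-norm).
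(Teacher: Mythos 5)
Your proposal is correct and follows exactly the route the paper intends: the paper's entire justification is the one-line remark that the result ``follows directly from Fact~\ref{fact:highest-eigenstate-equals-pi-s},'' and your substitution of the components of $\pi(s)$ from Eq.~\eqref{eqmain:stationary-state-interpolated-mc-classical} into $\ket{v_n(s)}=\sqrt{\pi(s)^T}$, followed by splitting the sum over $U$ and $M$, is precisely that direct computation. The normalization check $(1-s)(1-p_M)+p_M=1-s(1-p_M)$ is a nice sanity check that the paper omits.
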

~\\
This follows directly from Fact \ref{fact:highest-eigenstate-equals-pi-s}. 
~\\~\\
%
\subsection{Some quantities related to Markov chains: Hitting and mixing times}
\label{subsec:quantities-related-to-mc}
In this subsection, we define certain quantities related to Markov chains which we shall use in subsequent sections for our analysis.
\\~\\
\textbf{Spatial search problem and hitting time:} Consider a graph $G(X,E)$ with $|X|=n$ vertices and $|E|=e$ edges. Consider a subset $M \subset X$ of vertices that are marked. Then the spatial search problem involves finding any of the marked vertices in $M$. This problem can be solved by both classical random walks and quantum walks.

Given an ergodic, reversible Markov chain $P$ with a stationary state $\pi$, the random walk based algorithm to solve the spatial search problem is described in Algorithm \ref{algo:spatial-search-by-random-walk}.
\RestyleAlgo{boxruled}
\begin{algorithm}[ht]
\caption{Spatial search by random walk}\label{algo:spatial-search-by-random-walk}
\begin{itemize}
\item[1.~] Sample a vertex $x\in X$ from the stationary state $\pi$ of $P$.
\item[2.~] Check if $x\in M$.
\item[3.~] If $x$ is marked, output $x$.
\item[4.~] Otherwise update $x$ according to $P$ and go to step 2.
\end{itemize}
\end{algorithm}
The \textit{hitting time} of $P$ with respect to $M$ is the expected number of times step $4$ of Algorithm \ref{algo:spatial-search-by-random-walk} is executed. Let us denote this by $HT(P,M)$. Thus, the random walk based algorithm finds a marked vertex in time $\mathcal{O}\left(HT(P,M)\right)$. Note that the random walk algorithm stops as soon as a marked element is reached. Thus, this is equivalent to applying an absorbing Markov chain $P'$ that is obtained by replacing all the outgoing edges from the marked vertices of $P$ by self loops. From this we can define $HT(P,M)$.
~\\~\\
\textbf{Hitting time of a Markov chain:} The hitting time of any Markov chain $P$ with respect to a set of marked elements $M$ can be expressed as 
\begin{equation}
\label{eqmain:hitting_time}
HT(P,M)=\sum_{j=1}^{n-m}\dfrac{|\braket{v'_j}{U}|^2}{1-\lambda'_j},
\end{equation}
where $\lambda'_j$ and $\ket{v'_j}$ are the eigenvalues and eigenvectors of the matrix $D(P')$ and
$$\ket{U}=\dfrac{1}{\sqrt{1-p_M}}\sum_{x\neq M}\sqrt{\pi_x}\ket{x},$$
where $p_M$ is the probability of sampling a marked vertex from the stationary state of $P$.
~\\~\\
\textbf{Interpolated hitting time and Extended hitting time:} For any interpolated Markov chain $P(s)$, in refs.~\cite{krovi2016quantum,chakraborty2018finding}, the authors define a quantity known as the \textit{interpolated hitting time} in the context of spatial search which will also be useful here for subsequent analysis. This is defined as
\begin{equation}
\label{eqmain:interpolated-hitting-time}
HT(s)=\sum_{j=1}^{n-1}\dfrac{|\braket{v_j(s)}{U}|^2}{1-\lambda_j(s)}.
\end{equation}
There is a relationship between the spectral gap of the Markov chain and $HT(s)$ since 
\begin{equation}
\label{eqmain:iht-spectral-gap}
HT(s)\leq \dfrac{1}{\Delta(s)}\sum_{j=1}^{n-1}|\braket{v_j(s)}{U}|^2.
\end{equation}
For the spatial search algorithm, we shall find that the quantity of interest is the \textit{extended hitting time}. The extended hitting time of $P$ with respect to a set $M$ of marked elements is given by
\begin{equation}
\label{eqmain:extended-hitting-time}
HT^+(P,M)=\lim_{s\rightarrow 1} HT(s),
\end{equation}
Clearly for $|M|=1$, we have that $HT^+(P,M)=HT(P,M)$. Krovi et al. proved an explicit relationship between $HT(s)$ and $HT^+(P,M)$ \cite{krovi2016quantum}. They showed that
\begin{equation}
\label{eqmain:interpolated-vs-extended-hitting-time}
HT(s)=\dfrac{p_M^2}{\left(1-s(1-p_M)\right)^2}HT^+(P,M).
\end{equation} 
Combining Eqs.~\eqref{eqmain:iht-spectral-gap} and \eqref{eqmain:interpolated-vs-extended-hitting-time}, we have
\begin{equation}
\label{eqmain:hitting-time-vs-gap}
HT^+(P,M)\leq \dfrac{1}{\Delta(s)}.\dfrac{\left(1-s(1-p_M)\right)^2}{p_M^2}\sum_{j=1}^{n-1}|\braket{v_j(s)}{U}|^2.
\end{equation} 
~\\
\textbf{Mixing-time of a Markov chain:} Given a reversible Markov chain $P$, any initial probability distribution over the state space converges to the stationary distribution $\pi$, i.e.\ $\lim_{t\rightarrow\infty}\mu=\pi$, for any initial distribution $\mu$. Given $P$ and an initial state $\mu$, the mixing-time of a classical random walk is defined as the minimum time $T_{\mathrm{mix}}$ such that $\forall t\geq T_{\mathrm{mix}}$ we have that
$$\dfrac{1}{2}\|\mu P^t-\pi\|_1\leq \epsilon,$$
for some $\epsilon\in (0,1)$, where $\frac{1}{2}\|.\|_1$ is the total variation distance. 

That is, $T_{\mathrm{mix}}$ is the minimum time required for the Markov chain to converge to a distribution that is $\epsilon$-close to the stationary distribution which implies that \cite{aldous1982some}:
~\\
\begin{equation}
\label{eqmain:mixing-time-definition}
T_{\mathrm{mix}}\leq \dfrac{1}{\Delta}\log\left(\dfrac{1}{\epsilon\pi_{\min}}\right),
\end{equation}
where $\Delta$ is the spectral gap of $P$ and $\pi_{\min}=\min_x \pi_x$.
~\\
Thus, given an ergodic, reversible Markov chain $P$ with stationary state $\pi$ and spectral gap $\Delta$, one can sample from a distribution that is $\epsilon$-close to $\pi$ in time $\bigOt{1/\Delta}$. Next we discuss how one can use von Neumann measurements to prepare eigenstates of Hamiltonians, a tool which will help us provide an analog quantum algorithm for solving $QSSamp$.
\section{Quantum state generation by von Neumann measurements} 
\label{sec-main:von-neumann}
In this section, we make use of von Neumann measurements to prepare eigenstates of a Hamiltonian. The goal would be to use this technique to prepare the eigenstate of the quantum walk Hamiltonian (encoding an ergodic reversible Markov chain $P$) that corresponds to a coherent encoding of the stationary distribution of $P$.

In this framework, in order to measure any observable $\widehat{O}$, the system of interest is coupled to a pointer, which is simply a free particle in one dimension. If $H$ represents the Hamiltonian of the system and $\widehat{p}$ the momentum operator corresponding to the pointer, then the total Hamiltonian corresponding to the coupling between the system and the pointer is given by
\begin{equation}
\widetilde{H}=H+\dfrac{\widehat{p}^2}{2m}+g\ \widehat{O}\otimes \widehat{p},
\end{equation}
where $m$ is the mass of the free particle and $g$ is the interaction strength between the observable and the pointer. Since we are interested in measuring the energy of the system, we have $\widehat{O}=H$. We consider the particle as ``massive", thereby enabling us to neglect the free Hamiltonian of the particle. Furthermore, we assume that we are working with units such that the interaction strength $g=1$. These imply that 
\begin{equation}
\widetilde{H}= H\otimes \widehat{p}.
\end{equation}
It is well known that the momentum operator, $\widehat{p}=-i\frac{d}{dx}$ is a generator of translation in the position of the particle. In other words, the operator $e^{-ix_0\widehat{p}}$ applied to a wavepacket whose wavefunction is $\psi(x)$ results in 
\begin{align}
e^{-ix_0\widehat{p}}\psi(x)&=e^{-x_0\frac{d}{dx}}\psi(x)\\
							  &=\left(I-x_0\frac{d}{dx}+...\right)\psi(x)\\
							  &=\psi(x-x_0).
\end{align}
Thus the wavepacket is translated in position by $x_0$. Now consider that the system Hamiltonian $H$ has eigenvalues 
$$
\lambda_n=0<\Delta=\lambda_{n-1}\leq \lambda_{n-2}\leq\cdots \lambda_1\leq 1,
$$ 
such that $H\ket{v_j}=\lambda_j\ket{v_j}$. Furthermore, suppose that we initialize the pointer to a state $\ket{x=0}$, a wavepacket centred around $0$. Then, 
\begin{equation}
e^{-i\widetilde{H}t}\ket{v_j}\ket{x=0}=\ket{v_j}\ket{x=\lambda_jt}.
\end{equation} 

That is, the wavepacket is translated in position by $\lambda_j t$ and as such, measuring the displacement of the pointer register can in principle reveal information about the eigenstate of $H$ in the first register. By linearity, for any initial state $\ket{\psi_0}=\sum_{j=1}^n\alpha_j\ket{v_j}$, we have
\begin{align}
e^{-i\widetilde{H}t}\ket{\psi_0}\ket{x=0}&=e^{-iH\widehat{p}t}\sum_{j=1}^n\ket{\psi_0}\ket{x=0}\\
										 &=\sum_{j=1}^n\alpha_j\ket{v_j}\ket{x=\lambda_j t}.
\end{align}
In order to implement this on a quantum computer, we assume that the pointer register is of $l$ qubits. The choice of $l$ is crucial as it determines the precision up to which the position of the pointer is obtained. In fact, if we measure the position of the pointer with a high enough precision to resolve all eigenvalue gaps, $(\lambda_i-\lambda_j)t$, a measurement of the position of the pointer results in a measurement of the system Hamiltonian $H$.

For our purposes, we shall show how this formalism can be used to prepare the $0$-eigenstate of $H$, i.e.\ $\ket{v_n}$, in a purely analog fashion. To that end, we formally state via Lemma \ref{lem_main:prep-state-von-neumann} and Corollary \ref{cor_main:prep-state-high-accuracy}.
~\\
\begin{lemma}
\label{lem_main:prep-state-von-neumann}
Let $H$ be a Hamiltonian with eigenvalues $\lambda_n=0<\Delta=\abs{\lambda_{n-1}}\leq\cdots\abs{\lambda_1}\leq 1$ such that $H\ket{v_j}=\lambda_j\ket{v_j}$. Let $\widehat{p}$ represent the momentum operator corresponding to a free particle in one dimension with its mass large enough so that its free Hamiltonian can be neglected and so that it can be represented in $l$ qubits as 
$$\widehat{p}=\sum_{q=0}^{2^l-1}\dfrac{q}{2^l}\ket{q}\bra{q},$$
where 
\begin{equation}
l=\lceil \log_2(\tau/\pi) \rceil.
\end{equation}
for some $\tau>0$.
Furthermore let 
$$\ket{\psi_0}=\sum_{j=1}^n{\alpha_j}\ket{v_j}.$$ 
Then starting from the state $\ket{\psi_0}\ket{x=0}$ and evolving for a time $\tau$
according to the Hamiltonian $\widetilde{H}=H\otimes \widehat{p}$, results in a state 
$$\ket{\widetilde{\psi}}=\alpha_n\ket{v_n}\ket{0}+\sum_{k=1}^{n-1}\alpha_k\ket{v_k}\left(\gamma_k\ket{0}+\Gamma_k\ket{\Gamma_k}\right),$$
where $|\gamma_k|\leq\pi/(\abs{\lambda_k}\tau),~|\Gamma_k|=\sqrt{1-|\gamma_k|^2}$ and $\braket{\Gamma_k}{0}=0$ for $1\leq k\leq n-1$.

In particular, for
$$\tau=\dfrac{2\pi}{\Delta},$$ 
we have $|\gamma_k|\leq1/2$ and $|\Gamma_k|\geq\sqrt{3}/{2}$ for $1\leq k\leq n-1$.
\end{lemma}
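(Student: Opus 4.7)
The plan is to realize the Lemma as a direct computation in the momentum eigenbasis of the pointer. First I would fix notation: identify $\ket{x=0}$ with the uniform superposition $\frac{1}{\sqrt{2^l}}\sum_{q=0}^{2^l-1}\ket{q}$ over momentum eigenstates, so that after applying $e^{-i\widetilde{H}\tau}=e^{-i H\otimes\widehat{p}\,\tau}$ to $\ket{v_j}\ket{x=0}$ one obtains, by diagonality of $\widehat{p}$ in the $\ket{q}$ basis,
\begin{equation*}
\ket{v_j}\otimes \frac{1}{\sqrt{2^l}}\sum_{q=0}^{2^l-1}e^{-2\pi i \phi_j q/2^l}\ket{q},\quad \phi_j:=\lambda_j/\Delta,
\end{equation*}
since $\tau=2\pi/\Delta$. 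Linearity then yields $e^{-i\widetilde{H}\tau}\ket{\psi_0}\ket{x=0}=\sum_j\alpha_j\ket{v_j}\ket{\chi_j}$, where $\ket{\chi_j}$ is the above pointer state.

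Next I would handle the two cases. For $j=n$, we have $\lambda_n=0$, so $\ket{\chi_n}=\ket{x=0}$; this gives the unscrambled term $\alpha_n\ket{v_n}\ket{0}$. For $j=k\le n-1$, I write $\ket{\chi_k}=\gamma_k\ket{x=0}+\Gamma_k\ket{\Gamma_k}$ with $\ket{\Gamma_k}\perp\ket{x=0}$, so by construction $\gamma_k=\braket{x=0}{\chi_k}$ and $\braket{\Gamma_k}{0}=0$; it remains to bound $|\gamma_k|$.

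The overlap is a geometric sum,
\begin{equation*}
\gamma_k=\frac{1}{2^l}\sum_{q=0}^{2^l-1}e^{-2\pi i \phi_k q/2^l}=\frac{1}{2^l}\cdot\frac{1-e^{-2\pi i \phi_k}}{1-e^{-2\pi i \phi_k/2^l}},
\end{equation*}
so $|\gamma_k|=\sin(\pi\phi_k)/(2^l\sin(\pi\phi_k/2^l))$ in magnitude. Here I would use the hypothesis $\lambda_k>\Delta$, giving $\phi_k>1$, together with the choice $l=\lceil\log_2(1/\Delta)\rceil+1$, which forces $2^l\ge 2/\Delta$ and therefore $\pi\phi_k/2^l\le \pi\lambda_k/(2\Delta\cdot 2^{l-1})\le \pi/2$ (since $\lambda_k\le 1$). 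On $[0,\pi/2]$ the inequality $\sin x\ge 2x/\pi$ applies, yielding $2^l\sin(\pi\phi_k/2^l)\ge 2\phi_k$, so $|\gamma_k|\le |\sin(\pi\phi_k)|/(2\phi_k)\le 1/(2\phi_k)<1/2$.

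Finally, since the pointer state $\ket{\chi_k}$ is normalized and its component along $\ket{x=0}$ has magnitude $|\gamma_k|<1/2$, the orthogonal complement satisfies $|\Gamma_k|=\sqrt{1-|\gamma_k|^2}>\sqrt{3}/2$, closing the claim. The only nontrivial point is the trigonometric estimate above; the main obstacle is to make sure the strict inequality $|\gamma_k|<1/2$ survives in the corner regime where $\phi_k$ is close to $1$, which is precisely why the ``$+1$'' in the definition of $l$ matters — it buys the factor of two that makes $\pi\phi_k/2^l\le \pi/2$ without equality.
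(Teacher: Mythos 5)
Your proposal is correct and follows essentially the same route as the paper's proof: expand $\ket{x=0}$ in the momentum eigenbasis, evaluate the geometric sum for the overlap of the shifted pointer with $\ket{0}$, bound the numerator by its maximum and the denominator from below via the linear lower bound on $\sin$ over $[0,\pi/2]$ (equivalently the paper's $|\sin(z/2)|\ge |z|/\pi$ on $[-\pi,\pi]$), and conclude $|\gamma_k|\le \Delta/(2\lambda_k)<1/2$, whence $|\Gamma_k|>\sqrt{3}/2$ by normalization. No gaps; your remark about the role of the ``$+1$'' in $l$ matches the paper's use of $2^l\Delta\ge 2$.
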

~\\
\begin{proof}

 If $\ket{q}$ represents the momentum eigenstates, then the momentum operator is represented by
\begin{equation}
\widehat{p}=\sum_{q=0}^{2^l-1}\dfrac{q}{2^l}\ket{q}\bra{q}.
\end{equation}
Note that the position and momentum states are equivalent up to a Fourier transform and so the localized wavepacket centred at $x=0$ is completely delocalized in the momentum basis. That is,
\begin{equation}
\label{eqmain:delocalized-wavepacket}
\ket{x=0}=\dfrac{1}{\sqrt{2^l}}\sum_{q=0}^{2^l-1}\ket{q}.
\end{equation}

Therefore,
\begin{align}
\label{eq-main:pointer-evolution}
e^{-i(H\otimes \widehat{p})\tau}\ket{\psi_0}\ket{x=0}&=e^{-i(H\otimes \widehat{p})\tau}\ket{\psi_0}\left(\dfrac{1}{\sqrt{2^l}}\sum_{q=0}^{2^l-1}\ket{q}\right)\\
	                    &=\sum_{k=1}^n \alpha_k\ket{v_k}\left(\dfrac{1}{\sqrt{2^l}}\sum_{q=0}^{2^l-1}e^{\frac{-i\lambda_k\tau q}{2^l}}\ket{q}\right).
\end{align}  
Since we ultimately want to read off the position of the pointer variable, we re-express the pointer register in the position-basis to obtain
\begin{align}
\begin{split}
&e^{-i(H\otimes \widehat{p})\tau}\ket{\psi_0}\ket{x=0}=\\
&\sum_{k=1}^n \alpha_k\ket{v_k}\left(\dfrac{1}{2^l}\sum_{x=0}^{2^l-1}\sum_{q=0}^{2^l-1}e^{\frac{i(x-\lambda_k\tau) q}{2^l}}\ket{x}\right).
\end{split}
\end{align}
The pointer register has a measure of the displacement of the wavepacket which was initially centered at $x=0$. In fact, as shown previously, the shift will be proportional to the eigenvalue corresponding to the eigenstate in the first register (expressed in $l$ qubits). That is, we will have states of the form $\ket{v_j}\ket{\lambda_j\tau}$. We are interested in preparing the $0$-eigenstate $\ket{v_n}$. We first observe that the amplitude of obtaining $\ket{0}$ in the pointer register when the first register is in the state $\ket{v_n}\ket{0}$ is one, i.e.\ 
$$ e^{-i\tau(H\otimes\widehat{p})}\ket{v_n}\ket{0}\mapsto\ket{v_n}\ket{0}.$$
On the other hand, for any other eigenstate $\ket{v_k}$, the amplitude corresponding to measuring $\ket{0}$ in the second register is
\begin{align}
\dfrac{1}{2^l}\left|\sum_{q=0}^{2^l-1}e^{\frac{i(x-\lambda_k\tau) q}{2^l}}\right|&=\dfrac{1}{2^l}\left|\sum_{q=0}^{2^l-1}e^{\frac{-i\lambda_k\tau q}{2^l}}\right|\\
								&=\dfrac{1}{2^l}\left|\dfrac{1-e^{-i\lambda_k\tau}}{1-e^{-i\lambda_k\tau/2^l}}\right|,
\end{align}
Since $|1-e^{-iz}|\leq 2$ for any $z$ and $|1-e^{-iz}|\geq 2|z|/\pi$ for any $z\in[-\pi,\pi]$, we can bound this amplitude as
\begin{align}
\dfrac{1}{2^l}\left|\sum_{q=0}^{2^l-1}e^{\frac{i(x-\lambda_k\tau) q}{2^l}}\right|&\leq\frac{\pi}{\abs{\lambda_k}\tau}
\end{align}
where we have used the fact that $l\geq \log_2(\tau/\pi)$ so $\abs{\lambda_k}\tau/2^l\leq\pi$.

Thus, after the time evolution for a time $\tau$, the state of the system and the pointer is given by
\begin{equation}
\label{eq:final-state-ham-evolution}
\ket{\widetilde{\psi}}=\alpha_n\ket{v_n}\ket{0}+\sum_{k=1}^{n-1}\alpha_k\gamma_k\ket{v_k}\ket{0}+\sum_{k=1}^{n-1}\alpha_k\Gamma_k\ket{v_k}\ket{\Gamma_k},
\end{equation}
where $|\gamma_k|\leq\pi/(\abs{\lambda_k}\tau),~|\Gamma_k|=\sqrt{1-|\gamma_k|^2}$ and $\braket{\Gamma_k}{0}=0$ for $1\leq k\leq n-1$.

Since $\abs{\lambda_k}\geq\Delta$, for $\tau=2\pi/\Delta$ we have  $|\gamma_k|\leq 1/2$ and $|\Gamma_k|\geq\sqrt{3}/{2}$ for $1\leq k\leq n-1$.
\end{proof}
~\\
We shall use this lemma to derive the following corollary.
~\\
\begin{corollary}
\label{cor_main:prep-state-high-accuracy}
Let $\epsilon'\leq\epsilon |\alpha_n|/\sqrt{2}$ where $\epsilon\in (0,1)$ and suppose that the pointer register contains 
\begin{equation}
\label{eqmain:total-qubits-pointer}
m=l\ .\ \lceil \log_2(1/\epsilon')\rceil
\end{equation}
qubits initialized in the state $\ket{x=0}^{\otimes m}$, where $l=\lceil\log_2(\tau/\pi)\rceil$.

Then repeating the Hamiltonian evolution of Lemma \ref{lem_main:prep-state-von-neumann} with $\tau=\dfrac{2\pi}{\Delta}$ a total of $\lceil \log(1/\epsilon')\rceil$-times using a fresh block of $l$ pointer qubits each time, followed by post-selecting on the pointer register to be in $\ket{0}^{\otimes{m}}$, succeeds with probability at least $|\alpha_n|^2$ in constructing a quantum state $\ket{\phi}$ such that
$$
\nrm{\ket{v_n}-\ket{\phi}}_2\leq \epsilon,$$
in time
\begin{equation}
\label{eqmain:time-evolution-pointer}
T=\Theta\left(\dfrac{1}{\Delta}\log\left(\frac{1}{\epsilon|\alpha_n|}\right)\right).
\end{equation}
\end{corollary}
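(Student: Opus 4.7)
The overall plan is to iterate the one-round Hamiltonian evolution of Lemma \ref{lem_main:prep-state-von-neumann} $N = \lceil \log(1/\epsilon')\rceil$ times on fresh pointer blocks, so that the unwanted-eigenstate amplitudes at $\ket{0}^{\otimes m}$ in the pointer decay geometrically, and then post-select on the pointer reading $\ket{0}^{\otimes m}$. First I would note that round $i$ couples the system only to the $i$-th fresh $l$-qubit block, so the $N$ round unitaries commute and act independently on the separate blocks when the system is in a definite eigenstate $\ket{v_k}$. Lemma \ref{lem_main:prep-state-von-neumann} then gives, for each block, amplitude exactly $1$ on $\ket{0}$ when $k=n$, and amplitude $\gamma_k$ with $|\gamma_k|<1/2$ on $\ket{0}$ when $k\neq n$. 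Linearity therefore yields, for the unnormalized system state obtained by projecting the pointer register onto $\ket{0}^{\otimes m}$,
$$\alpha_n \ket{v_n} + \sum_{k=1}^{n-1}\alpha_k \gamma_k^N \ket{v_k},$$
with $|\gamma_k^N|\leq 2^{-N}\leq \epsilon'$ by our choice of $N$.

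Next I would analyse the post-selection. Denoting $\delta^2 := \sum_{k\neq n}|\alpha_k|^2|\gamma_k|^{2N}\leq \epsilon'^2$, post-selection succeeds with probability $|\alpha_n|^2+\delta^2\geq |\alpha_n|^2$, and the normalized post-selected state $\ket{\phi}$ satisfies (after fixing the phase so that $\alpha_n$ is real and non-negative)
$$\nrm{\ket{\phi}-\ket{v_n}}_2^2 \;=\; 2-\frac{2|\alpha_n|}{\sqrt{|\alpha_n|^2+\delta^2}} \;\leq\; \frac{\delta^2}{|\alpha_n|^2} \;\leq\; \left(\frac{\epsilon'}{|\alpha_n|}\right)^2,$$
using the elementary bound $1-1/\sqrt{1+x}\leq x/2$ for $x\geq 0$. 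Substituting $\epsilon'=\epsilon|\alpha_n|^2$ gives $\nrm{\ket{\phi}-\ket{v_n}}_2\leq \epsilon|\alpha_n|\leq \epsilon$, which is the stated accuracy bound.

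For the runtime I would combine a per-attempt cost of $N\tau = O(\log(1/\epsilon')/\Delta)$ with the expected $1/|\alpha_n|^2$ attempts needed to observe a successful post-selection, giving $O\bigl((\Delta|\alpha_n|^2)^{-1}\log(1/(\epsilon|\alpha_n|^2))\bigr)$, matching \eqref{eqmain:time-evolution-pointer}. I expect the main conceptual point to be justifying the multiplicative, independent-blocks decay: a single run already contributes $|\gamma_k|<1/2$ from Lemma \ref{lem_main:prep-state-von-neumann}, but one has to use a \emph{fresh} pointer block each round so that the junk amplitudes factorize across rounds rather than interfering coherently, turning the per-round $1/2$ suppression into a genuine geometric $2^{-N}$ decay. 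Everything else (the distance bound and the expected-repetitions argument) is then routine.
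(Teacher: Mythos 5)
Your proposal is correct and follows essentially the same route as the paper: iterate the Lemma~\ref{lem_main:prep-state-von-neumann} evolution on fresh pointer blocks so the off-target amplitudes at $\ket{0}^{\otimes m}$ factorize to $\gamma_k^N \leq \epsilon'$, post-select, bound the resulting error by $\epsilon'/|\alpha_n| \leq \epsilon$, and fold the $1/|\alpha_n|^2$ expected post-selection repetitions into the runtime. Your treatment of the normalized post-selected state via $1-1/\sqrt{1+x}\leq x/2$ is in fact slightly more careful than the paper's, which works with the unnormalized state $\ket{v_n}+\ket{\mathrm{err}}$ directly, but the argument is the same.
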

~\\
\begin{proof}
After the application of $e^{-i(H\otimes \widehat{p})\tau}$ a total of $\lceil\log(1/\epsilon')\rceil$-times using $l$ blocks of qubits in the pointer register each time, observe that for any $k\neq n$, the amplitude for observing $\ket{0}^{\otimes m}$ in the pointer register when there is $\ket{v_k}$ in the first register is bounded by
\begin{equation}
|\epsilon_k|=|\gamma_k|^{\lceil\log_2(1/\epsilon')\rceil} \leq \left(\dfrac{1}{2}\right)^{\lceil\log_2(1/\epsilon')\rceil}\leq \epsilon'.
\end{equation}
This implies that the resulting state after this procedure is given by
\begin{align}
\label{eq-main:psi-f}
\begin{split}
&\ket{\psi_f}=\alpha_n\ket{v_n}\ket{0}^{\otimes m}+\sum_{k=1}^{n-1}\epsilon_k\alpha_k\ket{v_k}\ket{0}^{\otimes m}\\
&~~~+\sum_{k=1}^{n-1}\alpha_k\delta_k\ket{v_k}\ket{\Gamma_k^{(m)}},
\end{split}
\end{align}
where $0\leq \epsilon_k\leq \epsilon'$ and $\sqrt{1-\epsilon'^2}\leq \delta_k\leq 1$. This takes time
$$T=\frac{2\pi}{\Delta}\lceil\log(1/\epsilon')\rceil
=\Theta\left(\dfrac{1}{\Delta}\log\left(\frac{1}{|\alpha_n|\epsilon}\right)\right).$$
The state in Eq.~\eqref{eq-main:psi-f} can be re-written as
\begin{equation}
\ket{\psi_f}=\alpha_n\left(\ket{v_n}+\ket{\text{err}}\right)\ket{0}^{\otimes m}+\sum_{k=1}^{n-1}\alpha_k\delta_k\ket{v_k}\ket{\Gamma_k^{(m)}},
\end{equation}
where the state
$$\ket{\text{err}}=\sum_{k=1}^{n-1}\dfrac{\epsilon_k\alpha_k}{\alpha_n}\ket{v_k}$$
has norm $\delta$ with
$$
\delta^2:=\|\ket{\text{err}}\|^2=\sum_{k=1}^{n-1}\left|\dfrac{\epsilon_k\alpha_k}{\alpha_n}\right|^2\leq \dfrac{\epsilon'^2}{|\alpha_n|^2}=\frac{\epsilon^2}{2}.
$$
This implies that post-selecting on obtain $\ket{0}^{\otimes m}$ in the pointer register we obtain the state 
\begin{equation}
\ket{\phi}=\frac{1}{\sqrt{1+\delta^2}}\left[\ket{v_n}+\ket{\text{err}}\right],
\end{equation}
with probability $|\alpha_n|^2(1+\delta^2)\geq|\alpha_n|^2$, such that
\begin{equation}
\nrm{\ket{v_n}-\ket{\phi}}^2=\left(1-\frac{1}{\sqrt{1+\delta^2}}\right)^2+\delta^2\leq 2\delta^2\leq\epsilon^2
\end{equation} 
\end{proof}
Thus Lemma \ref{lem_main:prep-state-von-neumann} and corollary \ref{cor_main:prep-state-high-accuracy} can be used to prepare the eigenstate $\ket{v_n}$ with success probability at least $|\alpha_n|^2$, and repeating this procedure $\Theta(1/|\alpha_n|^2)$ times allows to boost this probability to $\Theta(1)$, which leads to a protocol using total time
$$T=\Theta\left(\dfrac{1}{\Delta|\alpha_n|^2}\log\left(\frac{1}{\epsilon|\alpha_n|}\right)\right).$$
Note that it would have been possible to use quantum amplitude amplification to reduce quadratically the dependency on $|\alpha_n|$. However, we are interested in developing analog algorithms, assuming that we have access to a time-independent Hamiltonian. Provided that the cost of preparing the initial state $\ket{\psi_0}$ is small, the cost of the algorithm is the total time of Hamiltonian evolution. Moreover our protocol (Sec.~\ref{sec:preparation-of-stationary-state-von-neumann}) to prepare the stationary state of any reversible Markov chain ensures that $|\alpha_n|=\Theta(1)$, thereby resulting in at most a constant slowdown with respect to amplitude amplification.


\section{Hamiltonian for quantum walk on any reversible Markov chain} 
\label{sec-main:hamiltonian-any-markov-chain}
Given any ergodic, reversible Markov chain, we shall make use of the Hamiltonian introduced by Somma and Ortiz \cite{somma2010quantum} and subsequently used in Refs.~\cite{krovi2010adiabatic, chakraborty2018finding}. We recall the Hamiltonian and its spectral properties here for completeness and it will be used in our quantum algorithm for $QSSamp$.
\subsection{Defining the Hamiltonian}
\label{subsec:hamiltonian-definition}
Let $p_{xy}(s)$ denote the $(x,y)^{\mathrm{th}}$-entry of $P(s)$ and let $E$ be the set of edges of $P(s)$. Furthermore let $\mathcal{H}=\mathrm{span}\{\ket{x}: x\in X \}$ . Then one can define a unitary $V(s)\in \mathcal{H} \times \mathcal{H} $ such that for all $x\in X$,
\begin{equation}
\label{eqmain:unitary-for-hamiltonian}
V(s)\ket{x,0}=\sum_{y\in X}\sqrt{p_{xy}(s)}\ket{x,y},
\end{equation}
where the state $\ket{0}$ represents a fixed reference state in $\mathcal{H}$. Let us also define the swap operator 
$$ S\ket{x,y}=
\begin{cases}
\ket{y,x}, & \text{if $(x,y)\in E$} \\
\ket{x,y}, & \text{otherwise}.
\end{cases}
$$
Observe that $\braket{x,0|V(s)^\dag S V(s)}{y,0}=\sqrt{p_{yx}(s)p_{xy}(s)}=D_{xy}(P(s))$. Then, if $\Pi_0=I\otimes \ket{0}\bra{0}$, we have,
\begin{equation}
V^\dag (s)SV(s)\Pi_0\ket{y,0}=\sum_{x\in X}\sqrt{p_{yx}(s)p_{xy}(s)}\ket{x,0}+\ket{\Phi}^\perp,
\end{equation}
so that $\Pi_0\ket{\Phi}^\perp=0$. We define the search Hamiltonian as
\begin{equation}
\label{eqmain:search_hamiltonian}
H(s)=i[V(s)^\dag S V(s),\Pi_0].
\end{equation}

In Ref.~\cite{chakraborty2018finding}, we have shown that $H(s)$, in a rotated basis, corresponds to a quantum walk on the edges of $P(s)$. That is, the rotated Hamiltonian
\begin{align}
\overline{H}(s)&=V(s)H(s)V(s)^\dag\\
            &=i[S,V\Pi_0V^\dag],
\end{align}
corresponds to a quantum walk on the edges of $P(s)$. If the walker is localized in a directed edge from node $x$ to node $y$, i.e.\ $\ket{x,y}$, then the walker can move to a superposition of outgoing edges from node $y$ of the form $\ket{y,.}$. Note that our algorithms (See Algorithm \ref{algo:search-ctqw} and Algorithm \ref{algo1}) could be implemented using the Hamiltonian $\overline{H}(s)$ instead of $H(s)$. In such a case, we need to apply the same rotation to the initial state of the algorithm and the final state of the algorithm. However, subsequently we shall be working with $H(s)$ as it simplifies the analysis considerably. In the next subsection, we will characterize the spectrum of $H(s)$.
\subsection{Spectrum of $H(s)$}
\label{subsec:spectrum-somma-ortiz}
As discussed in Sec.~\ref{subsec:basics-mc}, the spectrum of $H(s)$ is related to that of $D(P(s))$ and in particular, the state $\ket{v_n(s),0}$ is an eigenstate of $H(s)$ with eigenvalue zero. The spectrum of $H(s)$ has been explicitly described in Ref.~\cite{krovi2010adiabatic} and we mention it here for completeness. The total Hilbert space of $H(s)$ can be divided into the following set of invariant subspaces: 

For $1\leq k\leq n-1$,
\begin{align}
\mathcal{B}_k(s)&=\mathrm{span}\{\ket{v_k(s),0},V(s)^\dag S V(s)\ket{v_k(s),0}\},\\
\mathcal{B}_n(s)&=\mathrm{span}\{\ket{v_n(s),0}\}\\
\mathcal{B}^\perp(s)&=(\oplus_{k=1}^n\mathcal{B}_k)^\perp.
\end{align}

Now, observe that
\begin{align}
\Pi_0 V(s)^\dag SV(s)\ket{v_n(s),0}=\ket{v_n(s),0}\\
V(s)^\dag SV(s)\Pi_0\ket{v_n(s),0}=\ket{v_n(s),0}.
\end{align}
This implies 
\begin{equation}
H(s)\ket{v_n(s),0}=0,
\end{equation}
i.e.\ $\ket{v_n(s),0}$ is an eigenstate with eigenvalue $0$.
\\~\\
On the other hand, note that for $1\leq k \leq n-1$, the eigenstates and eigenvalues of $H(s)$ in $\mathcal{B}_k(s)$ are
\begin{align}
\label{eqmain:eigenpair-somma-ortiz}
\ket{\Psi^\pm_k(s)}&=\dfrac{\ket{v_k(s),0}\pm i\ket{v_k(s),0}^\perp}{\sqrt{2}},\\
E^\pm_k(s)&=\pm \sqrt{1-\lambda_k(s)^2},\label{eqmain:eigenvalues-somma-ortiz}
\end{align}
where $\ket{v_k(s),0}^\perp$ is a quantum state that is in $\mathcal{B}_k(s)$ such that $\Pi_0\ket{v_k(s),0}^\perp=0$. Thus, if the underlying Markov chain has a spectral gap $\Delta(s)$, then in this subspace $H(s)$ has a quadratically amplified spectral gap given by  
\begin{equation}
|E_n(s)-E^{\pm}_{n-1}(s)|=\sqrt{1-\lambda^2_{n-1}(s)}=\Theta\left(\sqrt{\Delta(s)}\right).
\end{equation}
Now, there are $n^2$ eigenvalues of $H(s)$ out of which $2n-1$ belong to $\mathcal{B}_k(s)\cup\mathcal{B}_n(s)$. The remaining $(n-1)^2$ eigenvalues are $0$ and belong to $\mathcal{B}^\perp(s)$ which is the orthogonal complement of the union of the invariant subspaces. We need not worry about this subspace as we start from a state that has no support on $B^\perp (s)$ which is an invariant subspace of $H(s)$. Thus, throughout the evolution under $H(s)$, our dynamics will be restricted to $\mathcal{B}_k(s)\cup\mathcal{B}_n(s)$.

The following connection, first observed in~\cite{krovi2010adiabatic}, between the eigenstates of $H(s)$ in $\mathcal{B}_k(s)\cup\mathcal{B}_n(s)$ and the interpolated hitting time $HT(s)$ introduced in Sec.~\ref{subsec:quantities-related-to-mc} will be useful to the complexity analysis of our algorithms
\begin{prop}\label{prop-main:ht-connection}
 Let $h(s)$ be defined as
 \begin{equation}
     h(s)=\sum_{\sigma=\pm 1}\sum_{k=1}^{n-1}\frac{\abs{\braket{\Psi_k^\sigma(s)}{U,0}}^2}{\abs{E_k^\sigma(s)}^2}
 \end{equation}
 Then, we have
 \begin{equation}
     \frac{1}{2}HT(s)\leq h(s)\leq HT(s).
 \end{equation}
\end{prop}
\begin{proof}
From Eqs.~(\ref{eqmain:eigenpair-somma-ortiz}-\ref{eqmain:eigenvalues-somma-ortiz}), we have
\begin{align*}
    \abs{\braket{\Psi_k^\sigma(s)}{U,0}}^2&=\frac{\abs{\braket{v_k^\sigma(s)}{U}}^2}{2}\\
    \abs{E_k^\sigma(s)}^2&=1-\lambda_k(s)^2
\end{align*}
so that
\begin{equation*}
h(s)=\sum_{k=1}^{n-1}\dfrac{|\braket{v_k(s)}{U}|^2}{1-\lambda_k(s)^2}.
\end{equation*}
The bounds on $h(s)$ then follow from the definition of $HT(s)$ and the fact that $$1-\lambda_k(s)\leq1-\lambda_k(s)^2=(1-\lambda_k(s))(1+\lambda_k(s))\leq 2(1-\lambda_k(s))$$
\end{proof}

\section{Spatial search by continuous-time quantum walk using von-Neumann measurements}
\label{sec-main:search-pointer}
We first show how to make use of the state-generation scheme described in Sec.~\ref{sec-main:von-neumann} to provide a continuous-time quantum walk based algorithm to solve the spatial search problem. This algorithm will provide an intuitive understanding of our analog quantum algorithm for $QSSamp$.
 
Suppose we are given an ergodic, reversible Markov chain $P$ with the marked set denoted by $M\subset X$. The spatial search algorithm on $P$ involves finding a node within this marked set and is often tackled by the formalism of random walks. We have seen previously in Sec.~\ref{subsec:quantities-related-to-mc} that the expected number of steps taken by the walker to find a node within this marked set is known as the \textit{hitting time} of $P$ with respect to $M$. Quantum walks provide a natural framework to tackle this problem. A natural question to ask is whether a quantum walk can offer any speedup over its classical counterpart in order to solve the spatial search problem. Here, we concentrate on the continuous-time quantum walk framework to tackle this problem.

The spatial search algorithm by continuous-time quantum walk on $P$ involves evolving a time-independent Hamiltonian (which encodes the connectivity of $P$), starting from some initial state, for some time, and then measuring in the basis spanned by the states of $P$. 

Childs and Goldstone \cite{childs2004spatial} introduced the first continuous-time quantum walk-based algorithm to tackle the spatial search problem for simple, unweighted graphs. They showed that this algorithm, defined as a quantum walk on the nodes of the underlying graph, could find a marked node in $\mathcal{O}(\sqrt{n})$ time for certain graphs with $n$ nodes such as the complete graph, hybercube and $d$-dimensional lattices with $d>4$. This offered a quadratic speedup over classical random walks for the spatial search problem on these graphs. When $d=4$, the running time of the Childs and Goldstone algorithm is $\mathcal{O}(\sqrt{n}\log n)$, offering a less than quadratic speedup whereas there is no substantial speedup for $d<4$. Since then, a plethora of results have been published exhibiting a $\mathcal{O}(\sqrt{n})$ running time on certain specific graphs each requiring an ad-hoc analysis~\cite{janmark2014global, foulger2014quantum, childs2014spatial-crystal, meyer2015connectivity, novo2015systematic, philipp2016continuous, wong2016quantum, chakraborty2016spatial, chakraborty2017optimal}. Recently in Ref.~\cite{chakraborty2020optimality}, the authors provided the necessary and sufficient conditions for this algorithm to be optimal under very general conditions on the spectrum of the quantum walk Hamiltonian. They showed that attaining a generic quadratic speedup is impossible using this algorithm. 

In Ref.~\cite{chakraborty2018finding}, the authors provided a different spatial search algorithm by continuous-time quantum walk which finds a marked element on any ergodic, reversible Markov chain in square-root of the \textit{extended hitting time}, thereby matching the running time of best known algorithms in the DTQW-framework in the case of where a single node is marked, i.e.\ $|M|=1$ \cite{krovi2016quantum, ambainis2019quadratic}. Given any $P$, their algorithm made use of the framework of interpolating quantum walks
$$
P(s)=(1-s)P+sP',
$$
where $P'$ is the absorbing Markov chain such that any state having support only over the marked vertices is its stationary state. In fact, they used the Somma-Ortiz Hamiltonian $H(s)$ (described in Sec.~\ref{sec-main:hamiltonian-any-markov-chain}), to define quantum walk on the edges of $P(s)$. The underlying technique is to use a procedure called quantum phase randomization to (approximately) prepare a (mixed) state that has a constant overlap with the $0$-eigenstate of $H(s)$. For some specific value of $s$, this eigenstate has a constant overlap with the marked subspace $M$. This  required measurement at a time chosen uniformly at random between $[0,~\sqrt{HT^+(P,M)}]$, where $HT^+(P,M)$ is the \textit{extended hitting time}, defined in Eq.~\eqref{eqmain:extended-hitting-time}.

In this section, we provide an alternative spatial search algorithm (Algorithm \ref{algo:search-ctqw}) by continuous-time quantum walk that finds an element from a marked set $M$ in time that scales as the square root of the \textit{extended hitting time}. Algorithm \ref{algo:search-ctqw} is similar in spirit to that of Ref.~\cite{chakraborty2018finding} in that both make use of the Somma-Ortiz Hamiltonian $H(s)$ defined in Eq.~\eqref{eqmain:search_hamiltonian}. 

\RestyleAlgo{boxruled}
\begin{algorithm}[ht]
  \caption{Spatial search by continuous-time quantum walk}\label{algo:search-ctqw}
  Let $\delta\in (0,1/4)$ and  $\tau=\frac{\pi}{\delta}\sqrt{2HT(s^*)}$.
\begin{itemize}
\item[1.~] Prepare the state $\ket{\psi_0}=\ket{v_n(0),0}\ket{x=0}$ and check if the first register is a marked node. If so, go to step 3, otherwise continue.
\item[2.~] Evolve according to $H(s^*)\otimes\widehat{p}$ for time $\tau$ starting from the state $\ket{\psi_0}$.
\item[3.~] Measure in the basis of the state-space of the Markov chain in the first register.
\end{itemize} 
 \end{algorithm}

However, motivated by the problem of quantum state generation using von Neumann measurements, Algorithm \ref{algo:search-ctqw} prepares the $0$-eigenstate of $H(s)$ by coupling this Hamiltonian to a free-particle in one dimension. Unlike the algorithm of Ref.~\cite{chakraborty2018finding}, we evolve the Hamiltonian for a fixed time $\tau=\frac{\pi}{\delta}\sqrt{2HT(s^*)}=\Theta(\sqrt{HT^+(P,M)})$, where
\begin{equation}
\label{eq:s*}
s^*=1-p_M/(1-p_M).
\end{equation}

Furthermore, the $0$-eigenstate $\ket{v_n(s^*)}$ can be written as
\begin{equation}
\label{eq-main:highest-eigenstate-somma-ortiz-at-s*}
\ket{v_n(s^*)}=\dfrac{\ket{U}+\ket{M}}{\sqrt{2}},
\end{equation}
where $\ket{U}$ and $\ket{M}$ are as defined in Proposition \ref{prop-main:zero-estate-somma-ortiz}. Thus it has a constant overlap with both $\ket{U}$ and $\ket{M}$. The intuition behind the algorithm is as follows. In step 1, it either detects a marked node, in which case it succeeds with probability 1, or it prepares state $\ket{U,0}$. Since this state has overlap $1/\sqrt{2}$ with $\ket{v_n(s^*),0}$, step 2 allows to prepare a state such that, if we were to project the third register on $\ket{0}$, we would obtain with probability close to $1/2$ a state close to $\ket{v_n(s^*),0}$. This state therefore has overlap close to $1/\sqrt{2}$ over $\ket{M,0}$, so that step 3 will yield a marked node with probability close to $1/2$.

We now formally state Algorithm \ref{algo:search-ctqw} and prove its correctness in Lemma \ref{lem_main:correctness-search-ctqw}.
~\\
\begin{lemma}
\label{lem_main:correctness-search-ctqw}
Algorithm \ref{algo:search-ctqw} outputs a marked node with probability at least $1/4-\delta$ in time
$$T=\Theta\left(\sqrt{HT^+(P,M)}\right).$$
\end{lemma}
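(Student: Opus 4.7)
The plan is to invoke the von Neumann measurement scheme of Corollary \ref{cor_main:prep-state-high-accuracy} with the Somma--Ortiz Hamiltonian $H(s^*)$ playing the role of $H$, and then convert the resulting gap bound into the hitting-time bound using Eq.~\eqref{eqmain:hitting-time-vs-gap}. First I would verify the hypotheses of the corollary for $H(s^*)$. By the spectral analysis of Sec.~\ref{subsec:spectrum-somma-ortiz}, on the invariant subspace $\bigoplus_{k=1}^{n}\mathcal{B}_k(s^*)$ the operator $H(s^*)$ has a unique zero eigenstate $\ket{v_n(s^*),0}$, while every other eigenvalue in this subspace has absolute value at least $\sqrt{1-\lambda_{n-1}(s^*)^2}=\Theta(\sqrt{\Delta(s^*)})$. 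Crucially, the initial state $\ket{v_n(0),0}$ lies inside $\bigoplus_{k=1}^{n}\mathcal{B}_k(s^*)$ (it has zero overlap with $\mathcal{B}^\perp(s^*)$ by construction), so the dynamics are confined to the subspace where the spectral separation hypothesis of Corollary \ref{cor_main:prep-state-high-accuracy} applies with effective gap $\sqrt{\Delta(s^*)}$, consistent with the choice of $\tau=2\pi/\sqrt{\Delta(s^*)}$.

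Second, I would compute the amplitude $|\alpha_n|^2=|\braket{v_n(0)}{v_n(s^*)}|^2$ on the zero-eigenstate. Substituting Proposition \ref{prop-main:zero-estate-somma-ortiz} at $s=0$ and at $s=s^*$, which by the choice \eqref{eq:s*} satisfies $\ket{v_n(s^*)}=(\ket{U}+\ket{M})/\sqrt{2}$, a direct expansion yields
\begin{equation}
|\alpha_n|^2=\tfrac12\bigl(\sqrt{1-p_M}+\sqrt{p_M}\bigr)^2\geq \tfrac12.
\end{equation}
Corollary \ref{cor_main:prep-state-high-accuracy} then guarantees that, after the $m(s^*)$ repetitions prescribed by \eqref{eqmain:pointer-reg-one-block-search}--\eqref{eqmain:pointer-reg-total-search}, post-selection on $\ket{0}^{\otimes m}$ in the pointer register produces a state $\epsilon$-close to $\ket{v_n(s^*),0}$ with probability $|\alpha_n|^2\geq 1/2$, in total evolution time $T=\Theta(\log(1/\epsilon)/\sqrt{\Delta(s^*)})$.

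Third, I would conclude by projecting the first register. Proposition \ref{prop-main:zero-estate-somma-ortiz} gives $|\braket{M}{v_n(s^*)}|^2=1/2$, so measuring in the state-space basis yields a marked vertex with probability at least $1/2-\Theta(\epsilon)$ conditional on successful preparation. Multiplying the two success probabilities gives $\geq 1/4-\Theta(\epsilon)$, and after absorbing the constant into a rescaling of $\epsilon$ one obtains the claimed $1/4-\epsilon$. To translate the time bound, I would evaluate \eqref{eqmain:hitting-time-vs-gap} at $s=s^*$, where the identity $1-s^*(1-p_M)=2p_M$ collapses the prefactor to a constant and the orthonormality of $\{\ket{v_j(s^*)}\}$ bounds the sum by $1$, yielding $HT^+(P,M)\leq 4/\Delta(s^*)$. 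Hence $1/\sqrt{\Delta(s^*)}\geq \tfrac12\sqrt{HT^+(P,M)}$, and so $T=\Omega\bigl(\sqrt{HT^+(P,M)}\log(1/\epsilon)\bigr)$.

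The principal subtlety is the bookkeeping surrounding the quadratic amplification. One must apply Corollary \ref{cor_main:prep-state-high-accuracy} with the amplified gap $\sqrt{\Delta(s^*)}$ rather than $\Delta(s^*)$ itself, and confirm that the zero-eigenvalue degeneracies living in $\mathcal{B}^\perp(s^*)$ (which would violate the corollary's uniqueness-of-zero-eigenvector hypothesis) are harmless because the initial state has no weight on that subspace. Once this invariance is established, the rest of the argument is a direct substitution into previously proved lemmas.
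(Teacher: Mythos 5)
Your proposal is correct and follows essentially the same route as the paper's proof: apply Corollary~\ref{cor_main:prep-state-high-accuracy} to $H(s^*)$ with $|\alpha_n|^2\geq 1/2$, use $|\braket{M}{v_n(s^*)}|^2=1/2$ for the final measurement, and convert $1/\Delta(s^*)$ into $HT^+(P,M)$ via Eq.~\eqref{eqmain:hitting-time-vs-gap} at $s=s^*$ (your bound of the overlap sum by $1$ rather than $1/2$ only changes a constant). Your explicit verification that the initial state has no support on $\mathcal{B}^\perp(s^*)$, so the degenerate zero eigenvalues there do not violate the corollary's uniqueness hypothesis, is a detail the paper handles in Sec.~\ref{subsec:spectrum-somma-ortiz} rather than inside this proof.
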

~\\
\begin{proof}
We shall make use of Lemma \ref{lem_main:prep-state-von-neumann}. Observe that for $s=s^*=1-p_M/(1-p_M)$, the $0$-eigenstate of $H(s^*)$ is simply $\ket{v_n(s^*),0}$ with
\begin{equation}
\ket{v_n(s^*)}=\dfrac{\ket{U}+\ket{M}}{\sqrt{2}},
\end{equation}
where $\ket{U}$ and $\ket{M}$ are as defined in Proposition \ref{prop-main:zero-estate-somma-ortiz}. Also, the initial state in the first register is
\begin{equation}
\ket{v_n(0)}=\sqrt{1-p_M}\ket{U}+\sqrt{p_M}\ket{M},
\end{equation}
so at the end of step 1, we project on $\ket{M}$ with probability $p_M$ (in which case we will immediately obtain a marked node in step 3), or we project on $\ket{U}$.

From Lemma \ref{lem_main:prep-state-von-neumann},
the time evolution of the Hamiltonian $\widetilde{H}(s^*)=H(s^*)\otimes \widehat{p}$, starting from the state
\begin{align*}
\ket{\psi_0}&=\ket{U,0}\ket{x=0}\\
&=\alpha_n\ket{v_n(s^*),0}\ket{0}+\sum_{\sigma=\pm 1}\sum_{k=1}^{n-1}\alpha_k^\sigma\ket{\Psi_k^{\sigma}(s^*)},
\end{align*}
where $\alpha_n=\braket{v_n(s^*)}{U}=\frac{1}{\sqrt{2}}$ and $\alpha_k^\sigma=\braket{\Psi_k^\sigma(s^*)}{U,0}$,
prepares a state
\begin{align*}
    \ket{\psi_1}=&\alpha_n\ket{v_n(s^*),0}\ket{0}\\
    &+\sum_{\sigma=\pm 1}\sum_{k=1}^{n-1}\alpha_k^\sigma\ket{\Psi_k^{\sigma}(s^*)}\left(\gamma_k^\sigma\ket{0}+\Gamma_k^\sigma\ket{\Gamma_k^\sigma}\right),
\end{align*}
where $|\gamma_k^\sigma|\leq\pi/(\abs{E_k^\sigma(s^*)}\tau)$ and $\braket{\Gamma_k^\sigma}{0}=0$ for $1\leq k\leq n-1$.

Consider the measurement operator
\begin{equation}
\label{eqmain:measurement-state-space-markov}
\M=\Pi_X\otimes \ket{0}\bra{0} \otimes \ket{0}\bra{0},
\end{equation}
where $\Pi_X$ is a projection on the states of the Markov chain. Since $\braket{\Gamma_k^\pm}{0}=0$, the probability to measure $0$ in the third register is
\begin{equation}
    p_0=\abs{\alpha_n}^2+\sum_{\sigma=\pm 1}\sum_{k=1}^n\abs{\alpha_k^{\sigma}}^2\abs{\gamma_k^\sigma}^2.
\end{equation}
From $|\alpha_n|^2=\frac{1}{2}$ and $|\gamma_k^\sigma|\leq\pi/(\abs{E_k^\sigma(s^*)}\tau)$, we can bound $p_0$ as
\begin{align*}
    \frac{1}{2}\leq p_0&\leq \frac{1}{2}+\frac{\pi^2}{\tau^2}\sum_{\sigma=\pm 1}\sum_{k=1}^n\frac{\abs{\alpha_k^\sigma}^2}{\abs{E_k^\sigma(s^*)}^2}=\frac{1}{2}+\frac{\pi^2}{\tau^2}h(s^*)\\
    &\leq \frac{1}{2}+\frac{\pi^2}{\tau^2}HT(s^*)=\frac{1}{2}+\frac{1}{2}\delta^2,
\end{align*}
where we have used Prop.~\ref{prop-main:ht-connection} and the fact that $\tau=\frac{\pi}{\delta}\sqrt{2HT(s^*)}$.

In summary, we prepare with probability at least $1/2$ the state
$$
\ket{\psi_2}=
\frac{1}{\sqrt{p_0}}\left[
\alpha_n\ket{v_n(s^*),0}+\sum_{\sigma=\pm 1}\sum_{k=1}^{n-1}\alpha_k^\sigma\gamma_k^\sigma\ket{\Psi_k^{\sigma}(s^*)}
\right],
$$
where, from the calculation above, the second term has norm at most $\delta/\sqrt{2}$. This state is therefore close to $\ket{v_n(s^*),0}$, more precisely, we have
\begin{align*}
    \nrm{\ket{v_n(s^*),0}-\ket{\psi_2}}^2
    &=\left(1-\frac{\alpha_n}{\sqrt{p_0}}\right)^2+\frac{1}{p_0}\sum_{\sigma=\pm 1}\sum_{k=1}^n\abs{\alpha_k^{\sigma}}^2\abs{\gamma_k^\sigma}^2\\
    &\leq\left(1-\frac{1}{\sqrt{1+\delta^2}}\right)^2+\delta^2\\
    &\leq \delta^2+\delta^2\leq 2\delta^2.
\end{align*}

Let us finally consider the measurement in the state-space of the Markov chain. If the state we had prepared was $\ket{v_n(s^*),0}$ we would obtain a marked node with probability $1/2$ since $|\braket{v_n(s^*)}{M}|^2 = 1/2$. Equivalently, if we define the projector $\Pi_M$ to be the projector over marked nodes of the Markov chain, we have
$\nrm{\Pi_M\ket{v_n(s^*),0}}=\frac{1}{\sqrt{2}}$. The triangle inequality then implies
\begin{align*}
    \nrm{\Pi_M\ket{\psi_2}}
    &\geq\nrm{\Pi_M\ket{v_n(s^*),0}}-\nrm{\Pi_M\left(\ket{v_n(s^*),0}-\ket{\psi_2}\right)}\\
    &\geq\frac{1}{\sqrt{2}}-\sqrt{2}\delta.
\end{align*}
This implies that measuring $\ket{\psi_2}$ in the state-space of the Markov chain yields a marked node with probability at least $\nrm{\Pi_M\ket{\psi_2}}^2\geq\frac{1}{2}-2\delta$. Since, in the previous step we had prepared this state with probability at least $1/2$, overall the algorithm outputs a marked node with probability at least $1/4-\delta$.

Finally, Eq.~(\ref{eqmain:interpolated-vs-extended-hitting-time}) implies that for $s=s^*=1-p_M/(1-p_M)$, we have $HT(s^*)=\frac{HT^+(P,M)}{4}$, so that the time to execute Step 2 of the algorithm is $\tau=\frac{\pi}{\delta}\sqrt{2HT(s^*)}=\frac{\pi}{\delta}\sqrt{\frac{HT^+(P,M)}{2}}$, which concludes the proof.
\end{proof}
Algorithm \ref{algo:search-ctqw} can be thought of as an analog version of the quantum spatial search algorithm by Krovi et al \cite{krovi2016quantum}. Although this algorithm requires $\Theta(\log_2 HT^+(P,M))$ additional ancillary qubits unlike the algorithm of Ref.~\cite{chakraborty2018finding}, it will help create an intuitive understanding of our algorithm for solving $QSSamp$, discussed in Sec.~\ref{sec:preparation-of-stationary-state-von-neumann}. In fact, the cost of preparing the initial state of the spatial search algorithm corresponds to the $QSSamp$ problem.

In Ref.~\cite{chakraborty2018finding}, it was shown that $H(s)$ can be simulated be using only query access to the discrete-time quantum walk unitary $W(s)$ introduced in Ref.~\cite{krovi2016quantum}. This connection would allow us to quantify the running time of our quantum algorithm in terms of basic Markov chain operations.

To that end, given a Markov chain $P$, let us define the following oracular operations:
~\\
\begin{itemize}
    \item Check ($M$): Cost of checking whether a given node is marked. We denote this by $\C$.
    \item Update ($P$): Cost of applying one step of the walk $P$, which we denote by $\U$.
    \item Setup ($P$): The cost of preparing the initial state $\ket{v_n(0)}$, denoted by $\mathcal{S}$.
\end{itemize}
~\\
As from Refs.~\cite{krovi2016quantum,ambainis2019quadratic}, the cost of implementing $W(s)$, and consequently, the cost of evolving $H(s)$ for constant time, is in $\Oo(\C+\U)$, the running time of Algorithm \ref{algo:search-ctqw} is 

\begin{equation}
\label{eqmain:total-cost-search}
T=\Oo\left(\mathcal{S}+\sqrt{HT^+(P,M)}\left(\U+\C\right)\right)
\end{equation}

The $QSSamp$ problem helps quantify the cost $\mathcal{S}$ and intuitively, a quantum algorithm for this problem can be obtained by running the spatial search algorithm in reverse.
\section{Analog quantum algorithm to prepare coherent encoding of the stationary state of a Markov chain}
\label{sec:preparation-of-stationary-state-von-neumann}
In this section we describe our algorithm which, given a reversible Markov chain $P$ with stationary state $\pi=(\pi_1,\cdots,\pi_n)$, prepares a state that is $\epsilon$-close to the state 
\begin{equation}
\ket{\pi}=\sum_{x\in X}\sqrt{\pi_x}\ket{x}.
\end{equation}
A measurement in the basis spanned by the states of the Markov chain will allow us to sample from $\pi$, thereby solving the $QSSamp$ problem. From Fact \ref{fact:highest-eigenstate-equals-pi-s} and Proposition \ref{prop-main:zero-estate-somma-ortiz}, we have that 
\begin{equation}
\ket{\pi}=\ket{v_n(0)}.
\end{equation}
Thus, this is simply the highest eigenstate of the discriminant matrix $D(P)$ or equivalently, the $0$-eigenstate of $H(0)$. Therefore, given $P$, the problem of preparing $\ket{\pi}$ boils down to the state-generation problem just as in the case of spatial search.

Following Lemma \ref{lem_main:prep-state-von-neumann} and Corollary \ref{cor_main:prep-state-high-accuracy}, one can think of an algorithm to prepare $\ket{v_n(0)}$ as follows. 

Starting from some initial localized state $\ket{j,0}$ where $(j\in X)$, one can evolve according to the Hamiltonian $H(0)\otimes \widehat{p}$ for a time that scales as $\bigOt{1/\sqrt{\Delta}}$ to prepare $\ket{v_n(0)}$ with probability $|\braket{v_n(0)}{j}|^2\geq \eta$. Then by using $\Theta(1/\sqrt{\eta})$-rounds of (fixed-point) amplitude amplification \cite{yoder2014fixed}, one can prepare $\ket{v_n(0)}$. 

However, amplitude amplification is a discrete quantum algorithm and to the best of our knowledge it has no analog counterpart. As such, while constructing an analog quantum algorithm for this problem we cannot make use of amplitude amplification. We shall switch the value of $s$ to get around the need for amplitude amplification.

Consider the scenario where, given $P$, one marks a single state $j$, i.e.\ all the outgoing edges from $j$ are replaced with self-loops.  We denote the absorbing Markov chain corresponding to this $P'_j$. Then the resulting interpolated Markov chain is
\begin{equation}
\label{eqmain:interpolated-mc-single-marked}
P(s)=(1-s)P+sP_j'.
\end{equation} 
If the entry of the stationary state of $P$, corresponding to the marked element is $\pi_j$, then we find that $p_M=\pi_j$ and so for 
\begin{equation}
\label{eqmain:s*-mixing}
s=s^*=1-\pi_j/(1-\pi_j),
\end{equation} 
and from Eq.~\eqref{eqmain:highest-eigenstate-discriminant-matrix} we have that
\begin{align}
\label{eqmain:highest-estate-s*}
\ket{v_n(s^*)}&=\dfrac{1}{\sqrt{2}}\left(\dfrac{1}{\sqrt{1-\pi_j}}\sum_{x\neq j}\sqrt{\pi_x}\ket{x}+\ket{j}\right)\\
			  &=\dfrac{\ket{U}+\ket{j}}{\sqrt{2}}.
\end{align} 
Thus, the state $\ket{j}$ has a constant overlap with $\ket{v_n(s^*)}$. We can use this fact to  establish the following connection between the average hitting time and the eigenstates of $H(s^*)$, following Proposition \ref{prop-main:ht-connection}.

\begin{prop}
\label{prop-main:ht-interpolated-s*}
Let $h(s)$ be as defined in Proposition \ref{prop-main:ht-connection} and $HT(P,\{j\})$ be the average hitting time of $P$ with respect to the vertex $j$. Then, for $s^*=1-\pi_j/(1-\pi_j)$,
$$
h(s^*)=\sum_{k=1}^{n-1}\dfrac{|\braket{v_k(s^*)}{j}|^2}{1-\lambda^2_k(s^*)}.
$$
Furthermore,
\begin{equation}
HT(P,\{j\})/8\leq h(s^*)\leq HT(P,\{j\})/4.
\end{equation}
\end{prop} 
\begin{proof}
For $s=s^*$, we have that for any $1\leq k\leq n-1$, $\braket{v_k(s^*)}{v_n(s^*)}=0$ and substituting the expression of $\ket{v_n(s^*)}$, we obtain that for any $k$ such that $1\leq k\leq n-1$,
$$
|\braket{v_k(s^*)}{j}|^2=|\braket{v_k(s^*)}{U}|^2.
$$
So, we have:
$$
h(s^*)=\sum_{k=1}^{n-1}\dfrac{|\braket{v_k(s^*)}{U}|^2}{1-\lambda^2_k(s^*)}=\sum_{k=1}^{n-1}\dfrac{|\braket{v_k(s^*)}{j}|^2}{1-\lambda^2_k(s^*)}.
$$
Following proposition \ref{prop-main:ht-connection}, we have
$$
HT(s^*)/2\leq h(s^*)\leq HT(s^*).
$$
Since we have a single marked node $j$, the extended hitting time with respect to $j$ is equal to $HT(P,\{j\})$. The final expression of the proposition can be obtained by observing that from Eq.~\eqref{eqmain:interpolated-vs-extended-hitting-time}, we have $HT(s^*)=HT(P,\{j\})/4$.
\end{proof}
~\\
\RestyleAlgo{boxruled}
\begin{algorithm}[ht]
  \caption{Quantum algorithm to the prepare stationary state of any reversible Markov chain}\label{algo1}
  Let $\epsilon\in (0,1)$ and $\delta$ is a constant in $(0,1/4)$. 
\begin{itemize}
\item[1.~] Set $s=s^*=1-\pi_j/(1-\pi_j)$:
  \begin{itemize}
  \item[(a)] Evolve according to $H(s^*)\otimes\widehat{p}$ for time $T_1=\frac{\pi}{\delta}\sqrt{HT(P,\{j\})}$ starting from the state $\ket{j,0}\ket{x=0}$.
  \item[(b)] Post-select on obtaining $\ket{0}$ in the pointer register 
  \end{itemize}
Let the state obtained after step 1 be $\ket{\psi^{(1)}_f}$. 
\item[2.~] Reinitialize the pointer register.
\item[3.~] Set $s=0$:
\begin{itemize}
\item[(a)] Evolve $H(0)\otimes \widehat{p}$, starting from the state $\ket{\psi^{(1)}_f}\ket{x=0}^{\lceil\log_2(4/\epsilon)\rceil}$ for time $T_2=2\pi/\Delta(0)$.
\item[(b)] Repeat the Hamiltonian evolution in step (a) $\lceil\log_2(4/\epsilon)\rceil$ times, using a fresh block of $l=1+\lceil \log(1/\Delta(0))\rceil$-qubits in the \\ pointer register each time.
\item[(c)] Post-select on obtaining $\ket{0}$ in all $\lceil\log_2(4/\epsilon)\rceil$ copies of the pointer register. 
\end{itemize}
\item[4.~] Output the state of the first register.
\end{itemize} 
 \end{algorithm}
~\\
Recall that the initial state of Algorithm \ref{algo:search-ctqw} contained $\ket{v_n(0)}$ in the first register and our state-generation scheme resulted in the preparation of a state that has a constant overlap of $|\braket{v_n(s^*)}{v_n(0)}|=\Theta(1)$ with $\ket{v_n(s^*)}$. For our algorithm, we assume that for any $j\in X$, the state $\ket{j,0}$ is easy to prepare. The idea of the algorithm (See Algorithm \ref{algo1}) is to first invoke Lemma \ref{lem_main:prep-state-von-neumann} to prepare an intermediate state that has a constant overlap with $v_n(s^*)$. We will prove that this can be achieved by setting $s=s^*=1-\pi_j/(1-\pi_j)$, $\tau=\Theta(\sqrt{HT(P,\{j\})})$ and choosing the initial state to be $\ket{j,0}$. For the second stage, we set $s=0$ and start from the state obtained in stage 1. By invoking Corollary \ref{cor_main:prep-state-high-accuracy}, we prepare a state that is $\epsilon$-close to $\ket{v_n(0)}=\ket{\pi}$ with a constant probability. By this two-stage procedure, we can avoid the need to use amplitude amplification. We formally state the algorithm in Algorithm \ref{algo1} and prove its correctness in Lemma \ref{lem-main:algo1-proof}.
~\\
\begin{lemma}
\label{lem-main:algo1-proof}
Algorithm \ref{algo1} outputs a quantum state $\ket{\phi_f}$ such that
$$\|\ket{\phi_f}-\ket{\pi}\|\leq \epsilon,$$
in time
$$
T=\Theta\left(\sqrt{HT(P,\{j\})}\log(1/\epsilon)\right),
$$
with success probability at least $1/4-\delta$.
\end{lemma}
\begin{proof}  

First note that the 0-eigenstate of $H(s)$ is given by Eq.~\eqref{eqmain:highest-eigenstate-discriminant-matrix} and so for $s^*=1-\pi_j/(1-\pi_j)$, we have that 
\begin{equation}
\label{eqmain:highest-estate-s*}
\ket{v_n(s^*)}=\dfrac{1}{\sqrt{2}}\left(\dfrac{1}{\sqrt{1-\pi_j}}\sum_{x\neq j}\sqrt{\pi_x}\ket{x}+\ket{j}\right)
\end{equation}
Thus, the state $\ket{j,0}$ can be written in the eigenbasis of $H(s^*)$ as 
$$
\ket{j,0}=\alpha_n\ket{v_n(s^*),0}+\sum_{\sigma=\pm}\sum_{k=1}^{n-1}\alpha^{\sigma}_k\ket{\Psi^{\sigma}_k(s^*)},
$$
where $\alpha_n=1/\sqrt{2}$ and $\alpha^{\sigma}_k=\braket{\Psi^{\sigma}_k(s^*)}{j,0}$. 

Now, we shall evolve the state $\ket{j,0}\ket{x=0}$ according to the Hamiltonian $\tilde{H}(s^*)=H(s^*)\otimes \hat{p}$ for a time $\tau$. From Lemma \ref{lem_main:prep-state-von-neumann}, we have that the resulting state
\begin{align*}
\ket{\tilde{\psi}}&=\alpha_n\ket{v_n(s^*),0}\ket{0}\\
				  &+\sum_{\sigma=\pm 1}\sum_{k=1}^{n-1}\alpha_k^\sigma\ket{\Psi_k^{\sigma}(s^*)}\left(\gamma_k^\sigma\ket{0}+\Gamma_k^\sigma\ket{\Gamma_k^\sigma}\right),
\end{align*}
where $|\gamma_k^\sigma|\leq\pi/(\abs{E_k^\sigma(s^*)}\tau)$ and $\braket{\Gamma_k^\sigma}{0}=0$ for $1\leq k\leq n-1$.

The probability to measure $\ket{0}$ in the third register is
\begin{equation}
    p_0=\abs{\alpha_n}^2+\sum_{\sigma=\pm 1}\sum_{k=1}^n\abs{\alpha_k^{\sigma}}^2\abs{\gamma_k^\sigma}^2.
\end{equation}  
From $|\alpha_n|^2=\frac{1}{2}$, $|\gamma_k^\sigma|\leq\pi/(\abs{E_k^\sigma(s^*)}\tau)$ and the analysis in Lemma \ref{lem_main:correctness-search-ctqw}, we can bound $p_0$ as
\begin{align*}
    \frac{1}{2}\leq p_0&\leq \frac{1}{2}+\frac{\pi^2}{\tau^2}h(s^*)\\
    &\leq \frac{1}{2}+\frac{\pi^2}{4\tau^2}HT(P,\{j\})=\frac{1}{2}+\frac{1}{2}\delta^2,
\end{align*}
where we have used Prop.~\ref{prop-main:ht-interpolated-s*} and the fact that $\tau=\frac{\pi}{\delta}\sqrt{HT(P,\{j\})/2}$. Thus,  after stage one, with probability $p_0\geq 1/2$, we have prepared the state
\begin{align}
\ket{\psi^{(1)}_f}&=\frac{1}{\sqrt{p_0}}\left[\alpha_n\ket{v_n(s^*),0}+\sum_{\sigma=\pm 1}\sum_{k=1}^{n-1}\alpha_k^\sigma\gamma_k^\sigma\ket{\Psi_k^{\sigma}(s^*)}\right]\nonumber\\
&=\ket{v_n(s^*),0}+\ket{\text{err}}\label{eqmain:intermediate-state}
\end{align}
in the first register, where $\spnorm{\ket{\text{err}}}\leq\sqrt{2}\delta$. The running time of stage one of the algorithm is
$$
T_1=\frac{\pi}{\delta}\sqrt{HT(P,\{j\})/2}.
$$
For any constant $\delta\in (0,1)$, we have $T_1=\Theta(\sqrt{HT(P,\{j\}})$. 

We shall use this state as the initial state for stage two of our algorithm. In this stage, we set $s=0$ and prepare a state that is $\epsilon$-close to $\ket{v_n(0)}$. Observe that the overlap $\braket{v_n(0)}{v_n(s^*)}=1/\sqrt{2}$, so
\begin{align*}
\abs{\braket{v_n(0),0}{\psi^{(1)}_f}}
&=\abs{\braket{v_n(0),0}{v_n(s^*),0}+\braket{v_n(0),0}{\text{err}}}\\
&\geq\frac{1-2\delta}{\sqrt{2}}.
\end{align*}

We begin by expressing $\ket{\psi^{(1)}_f}$ in the eigenbasis of $H$. We have
$$
\ket{\psi^{(1)}_f}=\delta_n\ket{v_n(0),0}+\sum_{\sigma=\pm 1}\sum_{k=1}^{n-1}\delta_k^\sigma\ket{\Psi_k^{\sigma}(0)},
$$ 
where $\abs{\delta_n}\geq (1-2\delta)/\sqrt{2}$ and for any $1\leq k \leq n-1$, $\delta_k^\sigma=\braket{\Psi_k^{\sigma}(0)}{\psi^{(1)}_f}$. 

Let $\Delta(0)=\sqrt{1-\lambda^2_{n-1}(0)}$, be the eigenvalue gap between the $0$-eigenstate of $H(0)$ and the rest of the spectrum. So, if $\Delta=1-\lambda_{n-1}(0)$ is the spectral gap of $P$, $\sqrt{\Delta}\leq \Delta(0)\leq \sqrt{2\Delta}$.

From Lemma \ref{lem_main:prep-state-von-neumann}, by choosing $\tau=2\pi/\Delta(0)$, we obtain a state
\begin{align*}
\ket{\tilde{\psi}}&=\delta_n\ket{v_n(0),0}+\\
			&=\sum_{\sigma=\pm 1}\sum_{k=1}^{n-1}\delta_k^\sigma\ket{\Psi_k^{\sigma}(0)}    \left(\gamma_k^\sigma\ket{0}+\Gamma_k^\sigma\ket{\Gamma_k^\sigma}\right),
\end{align*}
where now, for any $k$ such that $1\leq k\leq n-1$, $|\gamma_k^\sigma|\leq1/2$, $|\Gamma_k|\geq\sqrt{3}/2$ and $\braket{\Gamma_k}{0}=0$.

We can now use Corollary \ref{cor_main:prep-state-high-accuracy}, with $l=1+\lceil \log_2 (1/\Delta(0))\rceil$, $\epsilon'=\epsilon/4\leq|\delta_n|\epsilon/\sqrt{2}$ and $m=l\cdot\lceil\log(1/\epsilon')\rceil$. By choosing $\tau=2\pi/\Delta(0)$, we prepare a quantum state $\ket{\phi_f}$ that is $\epsilon$-close to $\ket{v_n(0)}$ with probability at least $\abs{\delta_n}^2\geq 1/2-2\delta$. The running time of the second stage of the algorithm is in 
$$
T_2=\dfrac{2\pi}{\Delta(0)}.\lceil\log(1/\epsilon')\rceil=\Theta\left(\dfrac{1}{\sqrt{\Delta}}\log(1/\epsilon)\right).
$$

Now putting things together, Algorithm \ref{algo1} prepares a state $\ket{\phi_f}$ which is $\epsilon$-close to $\ket{v_n(0)}$. The first stage succeeded with probability $p_0\geq 1/2$ and so the overall success probability is $\geq 1/4-\delta$. The overall running time
$$
T=T_1+T_2=\Theta\left(\sqrt{HT(P,\{j\})}+\dfrac{1}{\sqrt{\Delta}}\log(1/\epsilon)\right).
$$
Interestingly, by Eq.~(\ref{eqmain:mixing-time-definition}) this also implies that the running time of our algorithm is actually the sum of the square root of the classical hitting time and the square root of the classical mixing time, i.e.\
$$T=\widetilde{\Theta}\left(\sqrt{HT(P,\{j\})}+\sqrt{T_{\mathrm{mix}}}\right).$$ 
Note that in general, the hitting time is at least as large as the mixing time of an ergodic, reversible Markov chain. Thus the running time is in fact, 
\begin{equation}
\label{eqmain:run-time-qsamp}
T=\Theta\left(\sqrt{HT(P,\{j\})}\log(1/\epsilon)\right).
\end{equation}
\end{proof}
~\\~\\
As mentioned in the previous section, the $QSSamp$ problem helps quantify the Setup cost $\mathcal{S}$ of the spatial search problem (See Eq.~\eqref{eqmain:total-cost-search}). As such, Algorithm \ref{algo1} implies that the setup cost of Algorithm \ref{algo:search-ctqw} is given by Eq.~\eqref{eqmain:run-time-qsamp}.
\section{Time-averaged quantum mixing:~Limiting distribution and mixing time}
\label{sec-main:mixing-time-erg}
Now we shall deal with the $QLSamp$ problem and the notion of mixing time that arises from this problem. For any ergodic, reversible Markov chain $P$, we have seen from Sec.~\ref{subsec:quantities-related-to-mc} that it is possible to sample from its distribution at $T\rightarrow\infty$ (limiting distribution) after a time $T_{\mathrm{mix}}=\bigOt{1/\Delta}$, known as the mixing time of $P$, where $\Delta$ is the spectral gap of $P$. In fact, any initial distribution converges to the stationary distribution $\pi$ after $T_{\mathrm{mix}}$ applications of $P$. In a strict sense, such a limiting distribution is absent for quantum walks as the underlying dynamics is unitary and hence, distance preserving. 

However, one can define the mixing of quantum walks on a graph is defined in a time-averaged sense: the probability that the walker is at some node $f$ after some time $t$, picked uniformly at random in the interval $[0,T]$ \cite{aharonov2001quantum}. This gives a \textit{time-averaged probability distribution} at any time $t$ and also a \textit{limiting probability distribution} as $T\rightarrow\infty$. The mixing time of a quantum walk on any ergodic, reversible Markov chain $P$ is the time after which the time-averaged probability distribution is $\epsilon$-close to the limiting probability distribution.  
%

Consider any ergodic, reversible Markov chain $P$ with $|X|=n$. Given $P$, suppose $H_P$ denotes the underlying Hamiltonian corresponding to a quantum walk on $P$. We require that the eigenvalues of $H_P$ lie between $-1$ and $1$, i.e.\ $\nrm{H_P}=1$. Let the spectral decomposition of $H_P=\sum_i\lambda_i\ket{v_i}\bra{v_i}$ where $\ket{v_i}$ is the eigenstate corresponding to the eigenvalue $\lambda_i$, $i\in\{1,2,...,n\}$. Furthermore, suppose that the initial state of the walker is $\ket{\psi_0}$.

Consequently, the state of the walker after a time $t$ is governed by the Schr\"odinger equation, i.e.
\begin{equation}
\label{eqmain:state_time_t}
\ket{\psi(t)}=e^{-iH_Pt}\ket{\psi_0}.
\end{equation}
In order to define a limiting distribution for quantum walks, one obtains a C\'{e}saro-average of the probability distribution, i.e.\ one evolves for a time $t$ chosen uniformly at random between $0$ and $T$  followed by a measurement. The average probability that the state of the walker is some localized node $\ket{f}$ is given by 
\begin{equation}
\label{eqmain:prob_t}
P_{f}(T)=\frac{1}{T}\int_{0}^{T} dt\ |\braket{f|e^{-iH_Pt}}{\psi_0}|^2.
\end{equation}
Thus, as $T\rightarrow\infty$, this leads to a limiting probability distribution, i.e.\  
\begin{equation}
\label{eqmain:prob_infinite}
P_{f}(T\rightarrow\infty)=\lim_{T\rightarrow\infty}P_{f}(T)=\sum_{\lambda_i=\lambda_l}\braket{v_l}{f}\braket{f}{v_i}\braket{v_i}{\psi_0}\braket{\psi_0}{v_l},
\end{equation}
where the sum is over all pairs of degenerate eigenvalues. So, if $H_P$ has a simple spectrum, i.e.\ all its eigenvalues are distinct, then the sum is over all its eigenvalues. 

In order to calculate how fast the instantaneous time-averaged distribution of the quantum walk converges to this limiting distribution, i.e.\ we need to bound the quantity $
\nrm{P_{f}(T\rightarrow\infty)-P_{f}(T)}_1$. 

In fact, it is easy to verify that they are $\epsilon$-close, i.e.\
$$\nrm{P_{f}(T\rightarrow\infty)-P_{f}(T)}_1\leq \epsilon,$$
as long as
\begin{align}
T=\Omega\left(\dfrac{1}{\epsilon}\sum_{\lambda_i\neq \lambda_l}\dfrac{\left |\braket{v_i}{\psi_0}\right |. \left | \braket{\psi_0}{v_{l}}\right |}{\left|\lambda_{l}-\lambda_i\right|}\right).
\end{align}

This naturally leads to the following upper bound on the quantum mixing time
\begin{equation}
\label{eqmain:mixing-time-upper-bound}
T_{\mathrm{mix}}=\Oo\left(\dfrac{1}{\epsilon}\sum_{\lambda_i\neq \lambda_l}\dfrac{\left |\braket{v_i}{\psi_0}\right |. \left | \braket{\psi_0}{v_{l}}\right |}{\left|\lambda_{l}-\lambda_i\right|}\right).
\end{equation} 

There do exist differences between the quantum and classical limiting distributions. For example, in the quantum case, the limiting distribution is dependent on the initial state of the quantum walk. Also, unlike classical random walks, the quantum mixing time depends on all the eigenvalue gaps of $H_P$ as opposed to only the spectral gap. 

Ignoring the numerator in the right hand side of Eq.~\eqref{eqmain:mixing-time-upper-bound}, we need to evaluate the following quantity in order to upper-bound $T_{\mathrm{mix}}$:	
\begin{equation}
\label{eqmain:double-sum}
\Sigma=\sum_{\lambda_i\neq\lambda_l}\dfrac{1}{\left|\lambda_{l}-\lambda_i\right|}.
\end{equation}

As such we intend to obtain the best possible bounds for this quantity. To this end, let us define $\Delta_{\min}$ as the minimum eigenvalue gap of $H_P$, over all pairs of distinct eigenvalues, i.e.\
\begin{equation}
\label{eqmain:min-eigenvalue-gap-definition}
\Delta_{\min}=\min_{\\ i,j, \lambda_i\neq\lambda_j}\left\{|\lambda_i-\lambda_j|, s.t.~i\neq j \right\}.
\end{equation}
Note that this is different from the spectral gap $\Delta$, which is the difference between the two highest eigenvalues of $H_P$. We prove the following:
\\~\\
\begin{lemma}
\label{lem:bounds-sigma}
If $\Sigma$ and $\Delta_{\min}$ are defined as in Eqs.~\eqref{eqmain:double-sum} and \eqref{eqmain:min-eigenvalue-gap-definition} then
\begin{equation}
\label{eqmain:double-sum-bounds}
\dfrac{1}{\Delta_{\min}}\leq \Sigma \leq \bigOt{\dfrac{n}{\Delta_{\min}}}
\end{equation}
~\\ 
\end{lemma}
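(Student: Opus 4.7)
\textbf{Proof plan for Lemma \ref{lem:bounds-sigma}.}

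The lower bound is immediate. By definition of $\Delta_{\min}$, there exist indices $i \neq l$ (with distinct eigenvalues) realizing $|\lambda_i - \lambda_l| = \Delta_{\min}$. Retaining just this single term in the sum defining $\Sigma$ gives $\Sigma \geq 1/\Delta_{\min}$. Hence the left inequality requires no further work.

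For the upper bound, my plan is to exploit the fact that the $n$ eigenvalues of $H_P$ lie in $[-1,1]$ and can be enumerated in terms of their distinct values. Let $\mu_1 < \mu_2 < \cdots < \mu_k$ be the distinct eigenvalues of $H_P$, so $k \leq n$. By the very definition of $\Delta_{\min}$, consecutive distinct eigenvalues satisfy $\mu_{j+1} - \mu_j \geq \Delta_{\min}$, and iterating this gives the key geometric bound
\begin{equation*}
|\mu_j - \mu_{j'}| \;\geq\; |j - j'|\,\Delta_{\min} \qquad \text{for all } j \neq j'.
\end{equation*}

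The next step is to substitute this into $\Sigma$. Rewriting the sum over pairs of distinct eigenvalues,
\begin{equation*}
\Sigma \;\leq\; \sum_{j \neq j'} \frac{1}{|\mu_j - \mu_{j'}|} \;\leq\; \frac{1}{\Delta_{\min}} \sum_{j=1}^{k} \sum_{j' \neq j} \frac{1}{|j - j'|}.
\end{equation*}
The inner sum over $j'$ is bounded by twice the harmonic number $H_{k-1} = O(\log k)$, since for any fixed $j$ the distances $|j - j'|$ run over a subset of $\{1, 2, \ldots, k-1\}$ on each side of $j$. Combining this with $k \leq n$ yields
\begin{equation*}
\Sigma \;\leq\; \frac{2k \, H_{k-1}}{\Delta_{\min}} \;=\; O\!\left(\frac{n \log n}{\Delta_{\min}}\right) \;=\; \widetilde{O}\!\left(\frac{n}{\Delta_{\min}}\right),
\end{equation*}
which is the desired right inequality.

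There is essentially no serious obstacle here; the argument is a pigeonhole plus harmonic-sum computation, and the only mildly subtle point is the accounting for eigenvalue multiplicities (one must pass from the index pairs $(i,l)$ appearing in $\Sigma$ to the enumeration by distinct values $\mu_j$, so that the ``$|j-j'|\Delta_{\min}$'' lower bound on gaps is available). Once this bookkeeping is done, the harmonic sum immediately yields the logarithmic factor absorbed by the $\widetilde{O}$ notation.
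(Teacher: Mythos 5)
Your argument is essentially identical to the paper's: both prove the lower bound by keeping the single term realizing $\Delta_{\min}$, and the upper bound via the pigeonhole estimate $|\lambda_{l}-\lambda_{i}|\geq|l-i|\,\Delta_{\min}$ followed by the harmonic sum, giving $O(n\log n/\Delta_{\min})=\bigOt{n/\Delta_{\min}}$. Your explicit pass to distinct eigenvalues $\mu_j$ is a cosmetic difference; note only that your step $\Sigma\leq\sum_{j\neq j'}1/|\mu_j-\mu_{j'}|$ silently drops multiplicities, which is harmless here for the same reason the paper's indexing is (the lemma is used where the spectrum is effectively simple).
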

\begin{proof}
The lower bound is straightforward by noting that $\exists i,l$ such that $|\lambda_{l}-\lambda_i|=\Delta_{\min}$.

For the upper bound, we have that 
\begin{equation}
\Sigma \leq \dfrac{1}{\Delta_{\min}}\sum_{l\neq i}\dfrac{1}{|\lambda_{l}-\lambda_i|},
\end{equation}
where we have used the fact that for any $\lambda_i\neq \lambda_l$, $|\lambda_l-\lambda_i|\geq |l-i|\Delta_{\min}$.
This implies that if $|l-i|=r$, we obtain 
\begin{align}
\Sigma &\leq \dfrac{1}{\Delta_{\min}}\sum_{r=1}^{n-1}\sum_{l,i:|l-i|=r}\dfrac{1}{r}\\
       &\leq \dfrac{1}{\Delta_{\min}} (n-1)\left(1+\dfrac{1}{2}+\cdots+\dfrac{1}{n-1}\right)\\
       &\leq \dfrac{n\log n}{\Delta_{\min}}=\bigOt{\dfrac{n}{\Delta_{\min}}}.
\end{align}
\end{proof}
~\\
The upper bound on $\Sigma$ obtained in Lemma \ref{lem:bounds-sigma} leads directly to an upper bound on the quantum mixing time $T_{\mathrm{mix}}$. This can be seen from the fact that the numerator in Eq.~\eqref{eqmain:mixing-time-upper-bound} is less than one and so 
\begin{equation}
\label{eqmain:mixing-time-generic-ub}
T_{\mathrm{mix}}=\bigOt{n/\Delta_{\min}},
\end{equation}
for any $H_P$.

All prior works hitherto have analyzed the $QLSamp$ problem for simple, unweighted graphs. Given a graph $G(V,E)$ of $|V|=n$ nodes and $|E|$ edges, the underlying quantum walk is defined on the nodes of the graph with corresponding Hamiltonian being the (normalized) adjacency of the graph. That is, $H_P=A_G/\nrm{A_G}$, where $A_G$ is an $n\times n$ matrix such that each entry
\begin{equation}
a_{ij}=
\begin{cases}
&1,~~~(i,j)\in E\\
&0,~~~\text{otherwise},
\end{cases}
\end{equation}
and $\nrm{A_G}$ is the spectral norm of $A_G$.
  
In this section, we elaborate on the results of Ref.~\cite{chakraborty2020fast} and provide numerical evidence to back up our analytical bounds. In particular, we focus on the random matrix theory aspects of our proof, elaborating on the underlying concepts. Finally, by defining a quantum walk on the edges as in Sec.~\ref{subsec:hamiltonian-definition}, we extend the notion of $QLSamp$ to any ergodic, reversible Markov chain.
~\\
\subsection{Erd\"os-Renyi random graphs} 
Let us consider a graph $G$ with a set of vertices $V=\{1,\dots,n\}$. We restrict ourselves to simple graphs, i.e.\ unweighted graphs which do not contain self-loops or multiple edges connecting the same pair of vertices. The maximum number of edges that a simple graph $G$ can have is $N={n \choose 2}$. Thus, there are ${N \choose M}$ graphs of $M$ edges and the total number of (labelled) graphs is $\sum_{M=0}^N {N \choose M}=2^N$ \cite{graph_enumeration}. We consider the random graph model $G(n,p)$, a graph with $n$ vertices where we have an edge between any two vertices with probability $p$, independently of all the other edges \cite{ER59,ER60,bollobas_book} (See Fig.~\ref{figmain:erg-30-nodes-p-0.2}). In this model, a graph $G_0$ with $M$ edges appears with probability $P\{G(n,p)=G_0\}=p^M (1-p)^{N-M}$. In particular, if we consider the case $p=1/2$, each of the $2^N$ graphs appears with equal probability $P=2^{-N}$. We shall refer to random graphs having a constant $p$ as a \textit{dense random graph}. On the other hand, random graphs for which $p=o(1)$, i.e.\ when $p$ decreases with $n$ shall be referred to as \textit{sparse random graphs}. 

\begin{figure}[h!]
\includegraphics[scale=0.22]{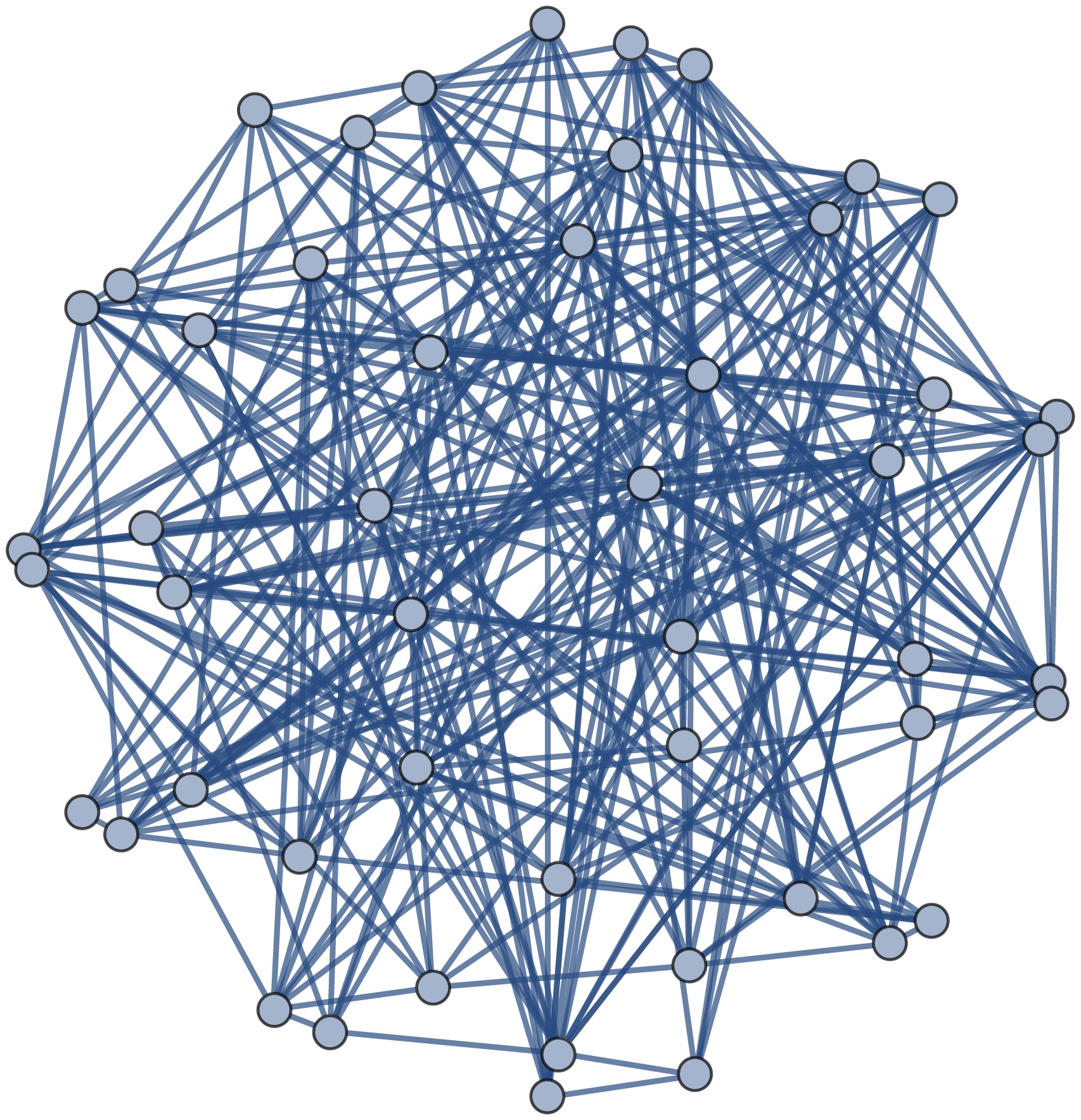}
\caption{\footnotesize{Erd\"os-Renyi random graph $G(50,0.2)$.}}
\label{figmain:erg-30-nodes-p-0.2}
\end{figure}

In their seminal papers, Erd\"{o}s and R\'enyi introduced this model of random graphs and studied the probability of a random graph to possess a certain property $Q$ \cite{ER59,ER60}. For example, they investigated properties such as the connectedness of the graph, the probability that a certain subgraph is present, etc. They stated that \emph{almost all graphs} have a property $Q$ if the probability that a random graph $G(n,p)$ has $Q$ goes to $1$ as $n\rightarrow \infty$. Equivalently, it can be stated that $G(n,p)$ \emph{almost surely} has property $Q$, i.e.\ property $Q$ holds with probability $1-o(1)$. 

Interestingly, certain properties of random graphs arise suddenly for a certain critical probability $p=p_c$, where this probability depends typically on $n$. More precisely, if $p(n)$ grows faster than $p_c(n)$, the probability that the random graph has property $Q$
goes to $1$ in the asymptotic limit, whereas if it grows slower than $p_c(n)$ it goes to $0$. For example when $p>\log(n)/n$ the graph is almost surely connected, whereas if $p<\log(n)/n$ the graph has almost surely isolated nodes. 

Here we shall concern ourselves with random graphs above the percolation threshold and calculate an upper bound on the quantum mixing time for quantum walks on such graphs. Observe that for a random graph, $G(n,p)$, its adjacency matrix, which we denote as $A_{G(n,p)}$, is an $n\times n$ symmetric matrix with each non-diagonal entry being $1$ with probability $p$ and $0$ with probability $1-p$. All diagonal entries of $A_{G(n,p)}$ are $0$. Thus $A_{G(n,p)}$ is a discrete random matrix and knowledge of its eigenvalues and eigenvectors is crucial to obtaining the quantum mixing time. 

Finally, from the aforementioned discussion, obtaining the quantum mixing time for $G(n,p)$ can be interpreted as holding for \textit{almost all} graphs.     
~\\
\subsection{Random matrices: Spectral properties of $\mathbf{A_{G(n,p)}}$} 

Here we look at the eigenvalues and eigenvectors of the random matrix $A_{G(n,p)}$. 

As mentioned earlier, the Hamiltonian corresponding to the quantum walk on $G(n,p)$ is simply the normalized Adjacency matrix of $G(n,p)$. The highest eigenvalue of $A_{G(n,p)}$, converges to a Gaussian distribution with mean $np$ and standard deviation $\sqrt{p(1-p)}$, as $n\rightarrow \infty$. This fact was first shown in Ref.~\cite{furedi1981eigenvalues} for constant $p$ and was later improved for sparse random graphs ($p=o(1)$) in Ref.~\cite{erdHos2013spectral}. In fact, as we shall show shortly it suffices to consider the matrix 
\begin{equation}
\label{eqmain:normalized-adjacency-matrix-erg}
\bar{A}_{G(n,p)}=\dfrac{A_{G(n,p)}}{np},
\end{equation}
as the quantum walk Hamiltonian. 

Let the eigenvalues of $\bar{A}_{G(n,p)}$ be $\lambda_n >\lambda_{n-1}\geq \cdots \lambda_1$, such that $\ket{v_i}$ is the eigenvector corresponding to the eigenvalue $\lambda_i$, $i\in\{1,2,...,n\}$, i.e.\ $\bar{A}_{G(n,p)}\ket{v_i}=\lambda_i\ket{v_i}$. Then we have that for $p\geq \log^8(n)/n$,
\begin{equation}
\label{eqmain:highest-eigenvalue-erdos-renyi}
\lambda_n=1+\sqrt{\dfrac{1-p}{np}}o(1)+o\left(\dfrac{1}{n\sqrt{p}}\right),
\end{equation}
with probability $1-o(1)$, which implies that $\nrm{\bar{A}_{G(n,p)}}\approx 1$ \cite{erdHos2013spectral, chakraborty2020fast}.

It can also be shown that for the same range of $p$, the second highest eigenvalue $\lambda_{n-1}$ can be upper bounded as
\begin{equation}
\label{eqmain:second-highest-evalue-erg}
\lambda_{n-1}\leq\dfrac{6}{\sqrt{np}}+\mathcal{O}\left(\dfrac{\log(n)}{(np)^{3/4}}\right),
\end{equation} 
with probability $1-o(1)$ \cite{furedi1981eigenvalues, vu2007, chakraborty2020fast}. This immediately implies that the spectral gap of $\bar{A}_{G(n,p)}$, $\Delta=\bigOt{1}$. Consequently, a classical random walk on $G(n,p)$ mixes quite fast - in $\bigOt{1}$ time. 

However, it is clear from the expression for $T^{G(n,p)}_{\mathrm{mix}}$ in Eq.~\eqref{eqmain:mixing-time-upper-bound} that the knowledge of \textit{all} eigenvalue gaps are crucial in obtaining the quantum mixing time. As such require the knowledge of the spacings between all the eigenvalues of the random matrix $\bar{A}_{G(n,p)}$.
~\\~\\
\textbf{Semicircle law:~} It is well known that as $np\rightarrow\infty$, the spectral density of the bulk of the spectrum of $A_{G(n,p)}$ converges to the well known semicircle distribution given by
\begin{equation}
\label{eqmain:semicircle_law}
\rho_{sc}(\lambda) = \begin{cases} \dfrac{\sqrt{4 n p(1-p)-\lambda^2}}{2\pi n p(1-p)} &\mbox{if } |\lambda|<2\sqrt{np(1-p)} \\
0 & \mbox{otherwise }\end{cases}.
\end{equation} 
This implies that $\Theta(n)$ eigenvalues of $A_{G(n,p)}$ lie within $[-R,R]$ where
\begin{equation}
R=2\sqrt{np(1-p)},
\end{equation} 
is the radius of the semicircle. On applying the appropriate normalization, we find that the spectral density of $\bar{A}_{G(n,p)}$ converges to a semicircle of radius
$$
\bar{R}=2\sqrt{\dfrac{1-p}{np}}.
$$
The fraction of eigenvalues of $\bar{A}_{G(n,p)}$ lying in some spectral window $\mathcal{I}\in [-R,R]$ converges to the area of the semicircle within $\mathcal{I}$ as $np\rightarrow\infty$. However, the semicircle law provides only a macroscopic description of the eigenvalues of $\bar{A}_{G(n,p)}$, i.e.\ the aforementioned result holds only when $|\mathcal{I}|\gg 1$. However, in order to obtain the quantum mixing time, we need information about all eigenvalue gaps including consecutive gaps where \ $|\I|\sim 1/n$, which renders the semicircle law useless. As a result, for our purposes we need to look at mesoscopic and microscopic statistics of eigenvalues of $\bar{A}_{G(n,p)}$.

However, for subsequent analysis, we shall require two results that can be obtained from the semicircle law itself which we state now. Note that there are $\Theta(n)$ eigenvalues with a radius of $\bar{R}$. This directly gives information about the average eigenvalue gap of $\bar{A}_{G(n,p)}$ given by
\begin{equation}
\label{eqmain:avg-eigenvalue-gap-erg}
\bar{\Delta}=\Theta\left(\dfrac{1}{n^{3/2}\sqrt{p}}\right).
\end{equation}
Also from the semicircle law itself, one can define the so-called \textit{classical eigenvalue locations} of $\bar{A}_{G(n,p)}$. For each $1\leq i\leq n-1$, we can define the classical location $\gamma_i$ as the solution to the following equation
\begin{equation}
\int_{-\infty}^{\gamma_i} \rho_{sc}(x) \,dx = \frac{i}{n}.
\end{equation}
Thus, the position of the $i^{\mathrm{th}}$ classical location is obtained by filling up $i/n$-area of the semicircle. From this condition one obtains that for $i\leq n/2$, $r\leq n-2i$ and some universal constant $c>0$, 
\begin{equation}
\label{eqmain:classical-location-distance}
\gamma_{i+r}-\gamma_i \geq c\dfrac{r}{n^{7/6}i^{1/3}\sqrt{p}}. 
\end{equation}
An identical estimate holds for the other half of the spectrum by symmetry. 
\\~\\
\textbf{Eigenvalue rigidity criterion:~} The semicircle law was shown to hold for smaller spectral windows in Refs.~\cite{erdHos2012rigidity,erdHos2013spectral}. An immediate consequence of this fact is that the every eigenvalue (with the exception of $\lambda_n$) of $\bar{A}_{G(n,p)}$ is located close to their classical eigenvalue positions. Formally, they showed that for $n^{-1/3}\leq p \leq 1-n^{-1/3}$ and any $\eps\geq 0$ the eigenvalues of $bar{A}_{G(n,p)}$ satisfy
\begin{equation}
\label{eqmain:eigenvalue-rigidity}
|\lambda_i - \gamma_i| \leq \frac{n^{\eps} (n^{-2/3} \alpha_i^{-1/3} + n^{-1-\phi} )}{(pn)^{1/2}}
\end{equation}
with probability $1 - o(1)$, where
$$
\phi := \frac{\log p}{ \log n} \quad \text{and} \quad \alpha_i := \max\{i, n-i\}.
$$
\begin{figure}[h!]
\centering
\includegraphics[scale=0.4]{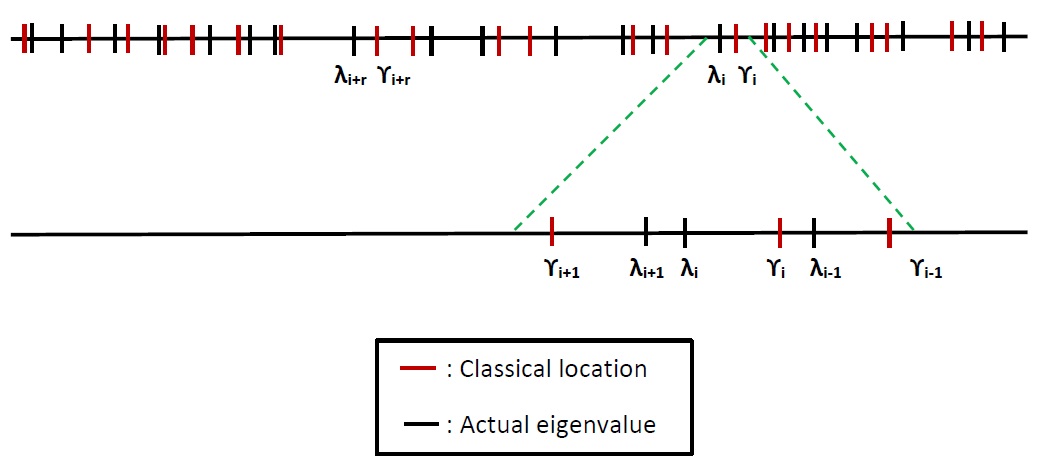}
\caption{A pictorial representation of eigenvalue rigidity: The actual eigenvalue locations of adjacency matrices of Erd\"os-Renyi random graphs, $\lambda_i$ (denoted by solid strokes) are close to their classical eigenvalue positions, $\gamma_i$ (denoted by dashed strokes) as predicted by the semicircle law. Although eigenvalue rigidity provides information about eigenvalue value gaps that are far away, it does not provide any information about the smallest eigenvalue gaps (such as consecutive eigenvalue gaps). As a result the eigenvalue rigidity criterion provides information about eigenvalues at a mesoscopic scale and information about the smallest gaps are obtained from eigenvalue statistics at a microscopic scale.}
\label{fig:rigidity-vs-gaps}
\end{figure}
Eigenvalue rigidity does reveal information about eigenvalue gaps of $\bar{A}_{G(n,p)}$. Note that for any $r\geq 1$ one obtains
\begin{equation}
\lambda_{i+r}-\lambda_{r}=\left(\lambda_{i+r}-\gamma_{i+r}\right)+\left(\gamma_{r}-\lambda_r\right)+\left(\gamma_{i+r}-\gamma_{r}\right).
\end{equation}
As a result, whenever $|\gamma_{i+r}-\gamma_{r}|$ scales larger than $\left|\lambda_{i+r}-\gamma_{i+r}\right|+\left|\lambda_r-\gamma_{r}\right|$, eigenvalue rigidity kicks in and an accurate estimate of $\lambda_{i+r}-\lambda_i$ is given by the difference between their classical eigenvalue locations, $\gamma_{i+r}-\gamma_i$. That is, there exists some $r=r^\star(i)$, such that for all $r\geq r^\star(i)$,
$$
\lambda_{i+r}-\lambda_{r}\approx \left(\gamma_{i+r}-\gamma_{r}\right). 
$$
From Eq.~\eqref{eqmain:classical-location-distance} and Eq.~\eqref{eqmain:eigenvalue-rigidity}, we obtain that 
\begin{equation}
\label{eqmain:rigidity-dominates}
r^\star(i)=n^{\eps} \max\{1, n^{2/3} \alpha_i^{1/3} n^{-1-\phi}\} \leq n^{\eps-\log p/\log n}.
\end{equation}
As such we cannot exploit eigenvalue rigidity to estimate gaps of the form $|\lambda_l-\lambda_i|$ as long as $|l-i|\leq r^\star(i)$. Thus, eigenvalue rigidity does not provide information about the smallest eigenvalue gaps (for a pictorial representation of this fact, see Fig.~\ref{fig:rigidity-vs-gaps}) and reveals eigenvalue statistics of $\bar{A}_{G(n,p)}$ at a mesoscopic scale. Thus in order to obtain information about consecutive eigenvalue gaps of $\bar{A}_{G(n,p)}$, we shall need to go to a microscopic scale.
\\~\\
\textbf{Microscopic eigenvalue statistics of $\mathbf{\bar{A}_{G(n,p)}}$:} At the microscopic scale, results are notoriously difficult to obtain. Tao and Vu \cite{tao2014random} showed that $\bar{A}_{G(n,p)}$ has a simple spectrum for dense graphs, resolving a long-standing conjecture due to Babai \cite{babai1982isomorphism}. Recently, this was resolved also for sparse graphs \cite{luh2018sparse}. We state their results formally.
\\~\\
\begin{fact}
\label{fact:erg-simple-spectrum}
There exists a constant $C>0$ such that for $\frac{C\log^6(n)}{n}\leq p \leq 1-\frac{C\log^6(n)}{n}$, $\bar{A}_{G(n,p)}$ has a simple spectrum with probability $1-o(1)$.
\end{fact}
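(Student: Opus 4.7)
The plan is to establish that, with probability $1-o(1)$, no two eigenvalues of $\bar{A}_{G(n,p)}$ coincide, via a union bound over pairs of indices combined with small-ball (Littlewood–Offord) anti-concentration, following the strategy of Tao–Vu in the dense case and Luh et al.\ in the sparse case. Since $\bar{A}_{G(n,p)} = A_{G(n,p)}/(np)$ has a simple spectrum if and only if $A_{G(n,p)}$ does, I would work throughout with $A = A_{G(n,p)}$.

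First I would reduce to a single pair of indices: if some eigenvalue of $A$ has multiplicity $\geq 2$, then there is an index $i$ with $\lambda_i(A) = \lambda_{i+1}(A)$, and the target probability is at most $\sum_{i=1}^{n-1}\Pr[\lambda_i(A) = \lambda_{i+1}(A)]$. It therefore suffices to show that each term is $o(1/n)$. Next, for a fixed $i$, I would invoke Cauchy interlacing with the $(n-1)\times(n-1)$ principal minor $A^{(n)}$ obtained by deleting the last row and column. A repeated eigenvalue $\lambda$ of $A$ forces $\lambda$ to also be an eigenvalue of $A^{(n)}$; writing the last row of $A$ as a random vector $X\in\{0,1\}^{n-1}$ with independent Bernoulli$(p)$ entries, the Schur-complement identity turns this coincidence into a system of linear constraints of the form $\langle u_k, X\rangle = t_k$, where $u_1,\dots,u_{n-1}$ are the eigenvectors of $A^{(n)}$ and $t_k$ depends only on $A^{(n)}$.

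The core technical input is eigenvector delocalization for $A^{(n)}$, which is itself a random Erd\H{o}s--R\'enyi adjacency matrix (on $n-1$ vertices): with probability $1-o(1)$, every non-Perron eigenvector $u_k$ satisfies $\|u_k\|_\infty \leq n^{-1/2 + o(1)}$, and moreover no $o(n)$-subset of its coordinates carries more than a fraction $1-\Omega(1)$ of its $\ell_2$-mass. Conditioning on $A^{(n)}$ in the high-probability event where this delocalization holds, each individual constraint $\langle u_k, X\rangle = t_k$ has conditional probability $o(1)$ uniformly in $t_k$, by a small-ball estimate for sums of independent Bernoulli$(p)$ variables weighted by a delocalized vector. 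For the sparse regime $p \geq C\log^6(n)/n$ this small-ball bound must be quantitative in $p$, and it is precisely here that the refined anti-concentration inequalities of Luh et al.\ replace the dense-case estimates of Tao--Vu.

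The main obstacle is that a single linear constraint is not enough: I need a \emph{double} coincidence $\lambda_i = \lambda_{i+1}$, which, through interlacing, requires two nearby eigenvalues of $A^{(n)}$ to be forced into exact alignment by the random vector $X$. The delicate step is to show that this joint constraint forces $X$ into an exceptional set whose measure is $o(1/n)$, which one obtains by combining the anti-concentration bound with an \emph{inverse} Littlewood–Offord argument that exploits the independence of the two delocalized eigenvectors $u_i, u_{i+1}$ together with a suitable level-repulsion estimate coming from the rigidity bound in Eq.~\eqref{eqmain:eigenvalue-rigidity}. Once this per-pair bound is in hand, a union bound over $i\in\{1,\dots,n-1\}$, combined with the delocalization event holding with probability $1-o(1)$, yields that $\bar{A}_{G(n,p)}$ has a simple spectrum with probability $1-o(1)$ throughout the stated range of $p$.
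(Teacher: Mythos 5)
First, a point of calibration: the paper does not prove this statement at all. It is imported as a Fact from Tao and Vu \cite{tao2014random} (dense case) and Luh and Vu \cite{luh2018sparse} (sparse case), so there is no in-paper proof to compare against; you are effectively reconstructing the cited works. Judged on its own terms, your sketch assembles the right toolbox (passage to a principal minor, anti-concentration, delocalization) but the central reduction is wrong. If $\lambda$ is an eigenvalue of $A=A_{G(n,p)}$ with multiplicity at least $2$, the eigenspace has dimension at least $2$ and therefore contains a unit vector $w$ whose last coordinate vanishes; writing $A$ in block form with minor $A'$ and last column $X$, the eigenvalue equation gives $A'w'=\lambda w'$ together with $\langle X, w'\rangle = 0$. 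The event to be excluded is thus a \emph{single} orthogonality constraint between the fresh random column and an eigenvector of the minor. Your claim that one must force a ``double coincidence'' of two eigenvalues of $A^{(n)}$ into exact alignment, and your appeal to a level-repulsion estimate derived from the rigidity bound of Eq.~\eqref{eqmain:eigenvalue-rigidity}, do not correspond to any step of the actual argument; rigidity is a mesoscopic statement and gives no information whatsoever about exact degeneracies.

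Second, the quantitative heart of the proof is harder than your sketch allows. Delocalization plus a standard Erd\H{o}s--Littlewood--Offord small-ball bound gives only $\Pr\left[\langle X, v\rangle = 0\right] = O((np)^{-1/2})$ per eigenvector, which does not survive the union bound over the $n-1$ eigenvectors of the minor: one needs each such probability to be $o(1/n)$, and in the cited proofs one actually obtains $n^{-A}$ for arbitrary $A$. This is achieved through the \emph{inverse} Littlewood--Offord theory: either the constraint probability is already $O(n^{-A})$, or the eigenvector is additively structured (most coordinates lie in a small generalized arithmetic progression), and a separate counting argument shows that with probability $1-o(1)$ no eigenvector of a random adjacency matrix has such structure. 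In the sparse regime $p \geq C\log^6(n)/n$ this last step is precisely where \cite{luh2018sparse} departs from \cite{tao2014random}. Your sketch names anti-concentration and inverse Littlewood--Offord but attaches them to the wrong event, and omits the no-structured-eigenvector argument that carries the real content of both papers.
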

~\\
The fact that every eigenvalue gap is non-zero implies that the expression for $\bar{A}_{G(n,p)}$, the double sum $\Sigma$ in Eq.~\eqref{eqmain:double-sum} can now be re-written as
\begin{align}
\label{eqmain:double-sum-defintion-1}
\Sigma&=\underbrace{\sum_{i=1}^{n-1}\dfrac{1}{\left|\lambda_{i+1}-\lambda_i\right|}}_{\Sigma_1}+\underbrace{\sum_{i=1}^{n-2}\dfrac{1}{\left|\lambda_{i+2}-\lambda_i\right|}}_{\Sigma_2}+\cdots\\
\label{eqmain:double-sum-defintion-2}
      &=\sum_{r=1}^{n-1}\Sigma_r=\sum_{r=1}^{n-1}\sum_{i=1}^{n-r}\dfrac{1}{\left|\lambda_{i+r}-\lambda_i\right|},
\end{align} 
while the limiting probability distribution is
\begin{equation}
\label{eqmain:prob_infinite}
P_{f}(T\rightarrow\infty)=\lim_{T\rightarrow\infty} P_{f}(T)=\sum_{i=1}^n|\braket{f}{v_i}\braket{v_i}{\psi_0}|^2.
\end{equation}
Furthermore, Nguyen, Tao and Vu \cite{nguyen2015random} proved that all the eigenvalue gaps of $\bar{A}_{G(n,p)}$ for dense random graphs are not only non-zero, but also separated. This was improved for the case of sparse random graphs by Lopatto and Luh \cite{lopatto2019tail}. In fact they asked the following question: how likely is it for any eigenvalue gap $\delta_i=\lambda_{i+1}-\lambda_i$ to be less than some $\delta$ times the average gap $\bar{\Delta}$? They proved that there exists a constant $C>0$ such that for $n^{-1/3}\leq p \leq 1-n^{-1/3}$,
\begin{equation}
\label{eqmain:tail-bound-random-graphs} 
\sup_{1 \leq i \leq n-1} \mathbb{P} \left( \delta_i \leq \frac{\delta}{n^{3/2} \sqrt{p}} \right) \leq C \delta \log n,
\end{equation}
for all $\delta\geq n^{-C}$.
~\\
Applying a union bound to this gives a lower bound on the minimum eigenvalue gap $\Delta_{\min}$ (defined in Eq.~\eqref{eqmain:min-eigenvalue-gap-definition}) for $\bar{A}_{G(n,p)}$. We prove that here.
~\\
\begin{lemma}{Lower bound on $\Delta_{\min}$:~} For $p\geq n^{-1/3}$,
\begin{equation}
\label{eqmain:lower-bound-min-gap}
\Delta_{\min}\geq \dfrac{1}{n^{5/2+o(1)}\sqrt{p}}.
\end{equation}
with probability $1-o(1)$.
\end{lemma}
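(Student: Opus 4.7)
The plan is to apply a union bound to the tail estimate in Eq.~\eqref{eqmain:tail-bound-random-graphs}. First I would invoke Fact~\ref{fact:erg-simple-spectrum}, which guarantees that with probability $1-o(1)$ the matrix $\bar{A}_{G(n,p)}$ has a simple spectrum. On this event the eigenvalues can be ordered strictly as $\lambda_1<\lambda_2<\cdots<\lambda_n$, and the minimum of $|\lambda_i-\lambda_j|$ over pairs $i\neq j$ is therefore attained on a consecutive pair; that is,
\begin{equation*}
\Delta_{\min}=\min_{1\leq i\leq n-1}\delta_i,\qquad \delta_i:=\lambda_{i+1}-\lambda_i.
\end{equation*}
So it suffices to lower-bound every consecutive gap simultaneously.

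Next I would choose a small parameter $\varepsilon>0$ and set $\delta=n^{-1-\varepsilon}$ in Eq.~\eqref{eqmain:tail-bound-random-graphs}. Since $\delta\geq n^{-C}$ whenever $\varepsilon$ is bounded, the tail bound gives, for each fixed $i$,
\begin{equation*}
\mathbb{P}\!\left(\delta_i\leq\frac{1}{n^{5/2+\varepsilon}\sqrt{p}}\right)\leq\frac{C\log n}{n^{1+\varepsilon}}.
\end{equation*}
A union bound over the $n-1$ consecutive gaps then produces
\begin{equation*}
\mathbb{P}\!\left(\Delta_{\min}\leq\frac{1}{n^{5/2+\varepsilon}\sqrt{p}}\right)\leq\frac{C(n-1)\log n}{n^{1+\varepsilon}}=\frac{C\log n}{n^{\varepsilon}}=o(1).
\end{equation*}
Combining this with the simple-spectrum event from Fact~\ref{fact:erg-simple-spectrum} (and another application of the union bound on these two failure events) gives the desired lower bound with probability $1-o(1)$. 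Finally, letting $\varepsilon=\varepsilon(n)\to 0$ sufficiently slowly (for example $\varepsilon=1/\log\log n$) keeps the failure probability $o(1)$ while absorbing the explicit $\varepsilon$ into the $n^{o(1)}$ factor in the statement.

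The main obstacle is not so much the argument itself, which is a short union bound, but rather the underlying input: the tail estimate of Eq.~\eqref{eqmain:tail-bound-random-graphs} from \cite{lopatto2019tail} is the substantive ingredient and is precisely what makes the regime $p\geq n^{-1/3}$ appear in the hypothesis. One minor bookkeeping point is to confirm that $\delta=n^{-1-\varepsilon}$ lies in the admissible range $\delta\geq n^{-C}$ of that tail bound, which is automatic for any fixed (or slowly vanishing) $\varepsilon$.
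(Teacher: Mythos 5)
Your proposal is correct and follows essentially the same route as the paper: a union bound over the $n-1$ consecutive gaps applied to the tail estimate of Eq.~\eqref{eqmain:tail-bound-random-graphs} with $\delta = n^{-1-o(1)}$. The only addition is your explicit reduction of $\Delta_{\min}$ to consecutive gaps (via the simple-spectrum fact), a step the paper leaves implicit since the minimum over all distinct pairs of ordered eigenvalues is automatically attained on a consecutive pair.
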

~\\
\begin{proof}
Let $\A_i$ be the event that $\delta_i\leq \frac{\delta}{n^{3/2} \sqrt{p}}$. Then using the union bound and Eq.~\eqref{eqmain:tail-bound-random-graphs}, we obtain
\begin{equation}
\label{eqmain:union-bound}
\mathbb{P}\left(\bigcup_{i}\A_i\right)\leq \sum_{i}\mathbb{P}\left(\A_i\right)\leq C\ n\ \delta\log n.
\end{equation}
This implies that the probability that at least one of the gaps is less than $\frac{\delta}{n^{3/2} \sqrt{p}}$ is upper bounded by the right hand side of Eq.~\eqref{eqmain:union-bound}. By choosing
$$
\delta=\dfrac{1}{n^{1+o(1)}},
$$
we have that
\begin{equation}
\mathbb{P}\left(\bigcup_{i}\A_i\right)\leq o(1),
\end{equation}
i.e.\ with probability $1-o(1)$, no $\delta_i$ is less than $\frac{1}{n^{5/2+o(1)}\sqrt{p}}$. This in turn implies that,
\begin{equation}
\label{eqmain:min-gap-erg}
\Delta_{\min}\geq \dfrac{1}{n^{5/2+o(1)}\sqrt{p}}.
\end{equation}
with probability $1-o(1)$.
\end{proof}
~\\
We are now equipped with the random matrix theory results, and in the next subsection, we derive a tight upper bound on the double sum $\Sigma$, defined in Eq.~\eqref{eqmain:double-sum-defintion-2}.
\subsection{Upper bound on $\mathbf{\Sigma}$}
As mentioned previously, in order to obtain the quantum mixing time, we first obtain bounds for the double sum $\Sigma$. Recall that we an obtain lower and upper bounds for $\Sigma$ as
$$
\dfrac{1}{\Delta_{\min}}\leq \Sigma \leq \bigOt{\dfrac{n}{\Delta_{\min}}}.
$$
In this subsection, our goal is to obtain an upper bound for $\Sigma$ that is as close as possible to its lower bound. To that end, our strategy would be to make use of the results on the eigenvalue statistics of $\bar{A}_{G(n,p)}$ at macroscopic, mesoscopic and microscopic levels. In particular, in order to evaluate $\lambda_{i+r}-\lambda_i$, for $r<r^\star(i)$, we shall make use of the tail-bounds on consecutive eigenvalue gaps in Eq.~\eqref{eqmain:tail-bound-random-graphs}. On the other hand for $r>r^{\star}(i)$, the eigenvalue rigidity criterion (See Eq.~\eqref{eqmain:eigenvalue-rigidity}) kicks in and we can replace $\lambda_{i+r}-\lambda_i$ with $\gamma_{i+r}-\gamma_i$.
~\\~\\
\textbf{Upper bound on $\mathbf{\Sigma_1}$:~} We first obtain an upper bound on the sum of the inverse of consecutive eigenvalue gaps, i.e.\
\begin{equation}
\Sigma_1=\sum_{i=1}^n \dfrac{1}{\lambda_{i+1}-\lambda_i}.
\end{equation}
In the Supplemental Material of Ref.~\cite{chakraborty2020fast}, we have explicitly derived an upper bound for $\Sigma$. We restate the result here
~\\
\begin{lemma}[\textbf{Upper bound on $\mathbf{\Sigma_1}$ \cite{chakraborty2020fast}}]
\label{lem:sum-bound}
\begin{equation}
\Sigma_1=\sum_{i=1}^{n-1} \frac{1}{\lambda_{i+1} - \lambda_i} \leq n^{5/2+o(1)} \sqrt{p},
\end{equation}
with probability $1-o(1)$.
\end{lemma}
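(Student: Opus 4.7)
The plan is to bound $\Sigma_1$ by combining the Lopatto--Luh tail bound on consecutive gaps (Eq.~\eqref{eqmain:tail-bound-random-graphs}) with the lower bound on the minimum gap (Eq.~\eqref{eqmain:min-gap-erg}) via a dyadic-shell decomposition, so that the sum $\sum_i 1/\delta_i$ is controlled by a geometrically convergent series with $O(\log n)$ terms. Normalizing by the average gap $\bar{\Delta} = n^{-3/2} p^{-1/2}$, I introduce $\delta'_i := \delta_i/\bar{\Delta}$, so that the tail bound rereads as $\mathbb{P}[\delta'_i \leq \delta] \leq C\delta\log n$ for $\delta \geq n^{-C}$, and the target becomes $\sum_i 1/\delta'_i \leq n^{1+o(1)}$ (which after multiplying by $\bar{\Delta}^{-1} = n^{3/2}\sqrt{p}$ yields the claim).

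Next, I would decompose the index set into dyadic shells $s_k := 2^{-k}$ for $k = 0, 1, \dots, K$, with $K = O(\log n)$ chosen so that $s_K$ matches the rescaled minimum gap $\Delta_{\min}/\bar{\Delta} \geq n^{-1-o(1)}$. Indices with $\delta'_i > 1$ contribute at most $n$ to $\sum_i 1/\delta'_i$; indices with $\delta'_i < s_K$ contribute nothing, with probability $1-o(1)$, by Eq.~\eqref{eqmain:min-gap-erg}. For each dyadic shell, define the counting function
$$N(s_k) = |\{i : \delta'_i \leq s_k\}|.$$
Linearity of expectation and the tail bound give $\mathbb{E}[N(s_k)] \leq Cn s_k \log n$, and Markov's inequality yields $N(s_k) \leq n^{\eps}\cdot Cn s_k \log n$ with probability at least $1 - n^{-\eps}$. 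A union bound over the $O(\log n)$ scales $s_k$ promotes this to a simultaneous bound whose overall failure probability is still $o(1)$.

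Substituting $1/\delta'_i \leq 2^{k+1}$ for indices in shell $k$ gives
$$\sum_i \frac{1}{\delta'_i} \leq n + \sum_{k=0}^{K-1} 2^{k+1} N(s_k) \leq n + C n^{1+\eps}\log n \sum_{k=0}^{K-1} 2^{k+1}\cdot 2^{-k} = n^{1+o(1)},$$
since the geometric series collapses and $K = O(\log n)$. Rescaling by $\bar{\Delta}^{-1}$ produces the bound $\Sigma_1 \leq n^{5/2+o(1)}\sqrt{p}$.

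The subtle step I anticipate is coordinating the high-probability statements uniformly over all $O(\log n)$ dyadic scales, since any polynomial-in-$n$ blowup in the failure probability would destroy the $1-o(1)$ guarantee; this is handled by keeping the Markov slack at a single $n^{\eps}$ absorbed into the $n^{o(1)}$ exponent. The complementary obstacle is that the Lopatto--Luh tail bound is only valid for $\delta \geq n^{-C}$, so the dyadic decomposition cannot be pushed arbitrarily deep using the tail bound alone: it is precisely the minimum-gap lower bound from Eq.~\eqref{eqmain:min-gap-erg} that closes the sum from below and guarantees the deepest shell $s_K$ still lies within the regime of validity of the tail bound.
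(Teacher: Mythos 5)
Your proposal is correct and follows essentially the same route as the paper, which only sketches the argument here and defers the details to the Supplemental Material of Ref.~\cite{chakraborty2020fast}: both rest on the Lopatto--Luh tail bound for consecutive gaps, a shell-counting argument (expectation plus Markov plus a union bound over $O(\log n)$ scales), and the lower bound on $\Delta_{\min}$ to truncate the decomposition at the deepest scale. The only cosmetic difference is your use of dyadic shells versus the paper's counting of gaps in windows of width $1/\log n$ times the average gap.
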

~\\
The key idea is that we count the number of consecutive eigenvalue gaps ($\delta_i$) lying within an interval of $1/\log n$ times the average gap and find that a high fraction of the $\delta_i$-s lie within this window around the average gap. For a detailed derivation, we refer the readers to Ref.~\cite{chakraborty2020fast}.
~\\
Now we can derive an upper bound on $\Sigma$ by combining mesoscopic and microscopic eigenvalue statistics of $\bar{A}_{G(n,p)}$ at different scales of $r$. In particular, we use the upper bound on $\Sigma_1$ along with the eigenvalue rigidity condition. We state the upper bound on $\Sigma$ that we obtained in Ref.~\cite{chakraborty2020fast}.
~\\
\begin{lemma}[\textbf{Upper bound on $\mathbf{\Sigma}$ \cite{chakraborty2020fast}}]
\label{lem:sum-bound-2}
For $p \geq n^{-1/3}$, the eigenvalues of $\bar{A}_{G(n,p)}$ satisfy 
\begin{equation}
\Sigma=\sum_{i=1}^{n-1}\sum_{r=1}^{n-i}\dfrac{1}{\left|\lambda_{i+r}-\lambda_i\right|} \leq n^{5/2-\frac{\log p}{\log n} + o(1)}\sqrt{p},
\end{equation}
with probability $1-o(1)$.
\end{lemma}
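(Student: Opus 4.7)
The plan is to split the double sum at the uniform threshold $r^\star := n^{\epsilon - \log p/\log n}$ from Eq.~\eqref{eqmain:rigidity-dominates}, treating the range $r \leq r^\star$ by bootstrapping from Lemma~\ref{lem:sum-bound} and the range $r > r^\star$ by the eigenvalue rigidity estimate combined with the classical location spacing in Eq.~\eqref{eqmain:classical-location-distance}. Everything will be done on the single high-probability event (obtained by a union bound) on which both Lemma~\ref{lem:sum-bound} holds and the rigidity bound in Eq.~\eqref{eqmain:eigenvalue-rigidity} holds for every index $i$; since each event has probability $1-o(1)$, so does their intersection.

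For the small-$r$ regime I will use the trivial monotonicity $|\lambda_{i+r}-\lambda_i| \geq |\lambda_{i+1}-\lambda_i|$ to reduce to a sum of consecutive gaps:
\begin{align*}
\sum_{r=1}^{r^\star}\sum_{i=1}^{n-r}\frac{1}{|\lambda_{i+r}-\lambda_i|}
\;\leq\; r^\star\cdot\Sigma_1
\;\leq\; n^{\epsilon - \log p/\log n}\cdot n^{5/2+o(1)}\sqrt{p},
\end{align*}
which after absorbing the $\epsilon$ into the $n^{o(1)}$ factor already matches the target bound. The point is that although a termwise estimate by $\Delta_{\min}$ would lose a factor of $n$, the whole inner sum has already been controlled tightly in Lemma~\ref{lem:sum-bound}, and the number $r^\star$ of repetitions of this sum is only $n^{o(1)-\log p/\log n}$.

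For the large-$r$ regime, rigidity guarantees $|\lambda_j - \gamma_j| = o(|\gamma_{i+r}-\gamma_i|)$ whenever $r > r^\star(i)$ by the very definition of $r^\star(i)$, so $|\lambda_{i+r}-\lambda_i| \geq \tfrac{1}{2}|\gamma_{i+r}-\gamma_i|$. Plugging in the lower bound $|\gamma_{i+r}-\gamma_i| \gtrsim r/(n^{7/6}\alpha_i^{1/3}\sqrt{p})$ from Eq.~\eqref{eqmain:classical-location-distance} (with the symmetric estimate handling $i > n/2$) yields
\begin{align*}
\sum_{i}\sum_{r>r^\star(i)}\frac{1}{|\lambda_{i+r}-\lambda_i|}
\;\lesssim\; n^{7/6}\sqrt{p}\,\log n\cdot\sum_{i=1}^{n-1}\alpha_i^{1/3}
\;=\; n^{5/2+o(1)}\sqrt{p},
\end{align*}
using $\sum_i \alpha_i^{1/3} = O(n^{4/3})$ and $\sum_r 1/r = O(\log n)$. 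Since $-\log p/\log n \geq 0$ for $p\leq 1$, the small-$r$ contribution dominates, yielding the claimed bound. The edge eigenvalue $\lambda_n$, which lies outside the semicircle and is not controlled by rigidity, must be treated separately: Eqs.~\eqref{eqmain:highest-eigenvalue-erdos-renyi} and~\eqref{eqmain:second-highest-evalue-erg} force $|\lambda_n-\lambda_i| = \Omega(1)$ for every $i<n$, contributing only $O(n)$ to $\Sigma$, which is absorbed into $n^{o(1)}$.

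The main obstacle I anticipate is not any single estimate but the careful bookkeeping needed to verify that the crossover threshold $r^\star(i)$, which in principle depends on $i$ via $\alpha_i$, can be replaced by the single uniform threshold $r^\star$ without losing the extra $n$-factor that the lower bound $\Sigma \geq 1/\Delta_{\min}$ already forbids; in particular, one must check that the rigidity window for the $i$ closest to the spectral bulk center (where $\alpha_i = n$) is still inside $r^\star$, so that the rigidity replacement is valid for every pair $(i,r)$ in the large-$r$ regime. Once this is handled, the two regimes stitch together cleanly.
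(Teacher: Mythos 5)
Your proposal follows essentially the same route as the paper: splitting $\Sigma$ at the rigidity threshold $r^\star$, bounding the small-$r$ block by $r^\star\cdot\Sigma_1$ via Lemma~\ref{lem:sum-bound} and gap monotonicity, and controlling the large-$r$ block by eigenvalue rigidity together with the classical-location spacing of Eq.~\eqref{eqmain:classical-location-distance}. The additional bookkeeping you flag (uniformizing $r^\star(i)$ over $i$ and treating the outlier eigenvalue $\lambda_n$ separately via Eqs.~\eqref{eqmain:highest-eigenvalue-erdos-renyi} and~\eqref{eqmain:second-highest-evalue-erg}) is handled correctly and matches the detailed argument the paper defers to the Supplemental Material of Ref.~\cite{chakraborty2020fast}.
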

~\\
We provide an an intuition of the proof techniques and we refer the reader to the Supplemental Material of Ref.~\cite{chakraborty2020fast} for details. We first split $\Sigma$ into two different parts:
\begin{align}
\label{eqmain:sum-split}
\Sigma=\sum_{r=1}^{r^\star(i)}\Sigma_r + \sum_{r=r^\star(i)+1}^{n-1}\Sigma_r
\end{align}
For the first sum in the right hand side of Eq.~\eqref{eqmain:sum-split}, we are dealing with small eigenvalue gaps and hence we make use of the microscopic eigenvalue statistics, namely the upper bound on $\Sigma_1$, i.e.\ we replace this sum with with the upper bound $r^\star(i).\Sigma_1$. On the other hand, for the second double sum, eigenvalue rigidity provides kicks in and the gaps between the classical eigenvalue locations $(\gamma_{i+r}-\gamma_i)$, is a better estimate of $\lambda_{i+r}-\lambda_i$ than the tail bounds. In fact, an upper bound is obtained by replacing each eigenvalue gap $\lambda_{i+r}-\lambda_i$ with the lower bound from Eq.~\eqref{eqmain:classical-location-distance}.

Observe that for dense random graphs, the upper bound on $\Sigma$ is quite close to its lower bound of $1/\Delta_{\min}$. Having obtained this bound, we shall now upper bound the quantum mixing time for $G(n,p)$.
\subsection{Mixing of continuous-time quantum walks on $\mathbf{G(n,p)}$}

Here, we shall obtain the (i) limiting distribution of the quantum walk and the time after which the quantum walk converges (in a time-averaged sense) to this distribution, the quantum mixing time.

\begin{figure}[h!]
\centering
\includegraphics[trim=0cm 0cm 0cm 0cm, clip=true, width=0.45\textwidth]{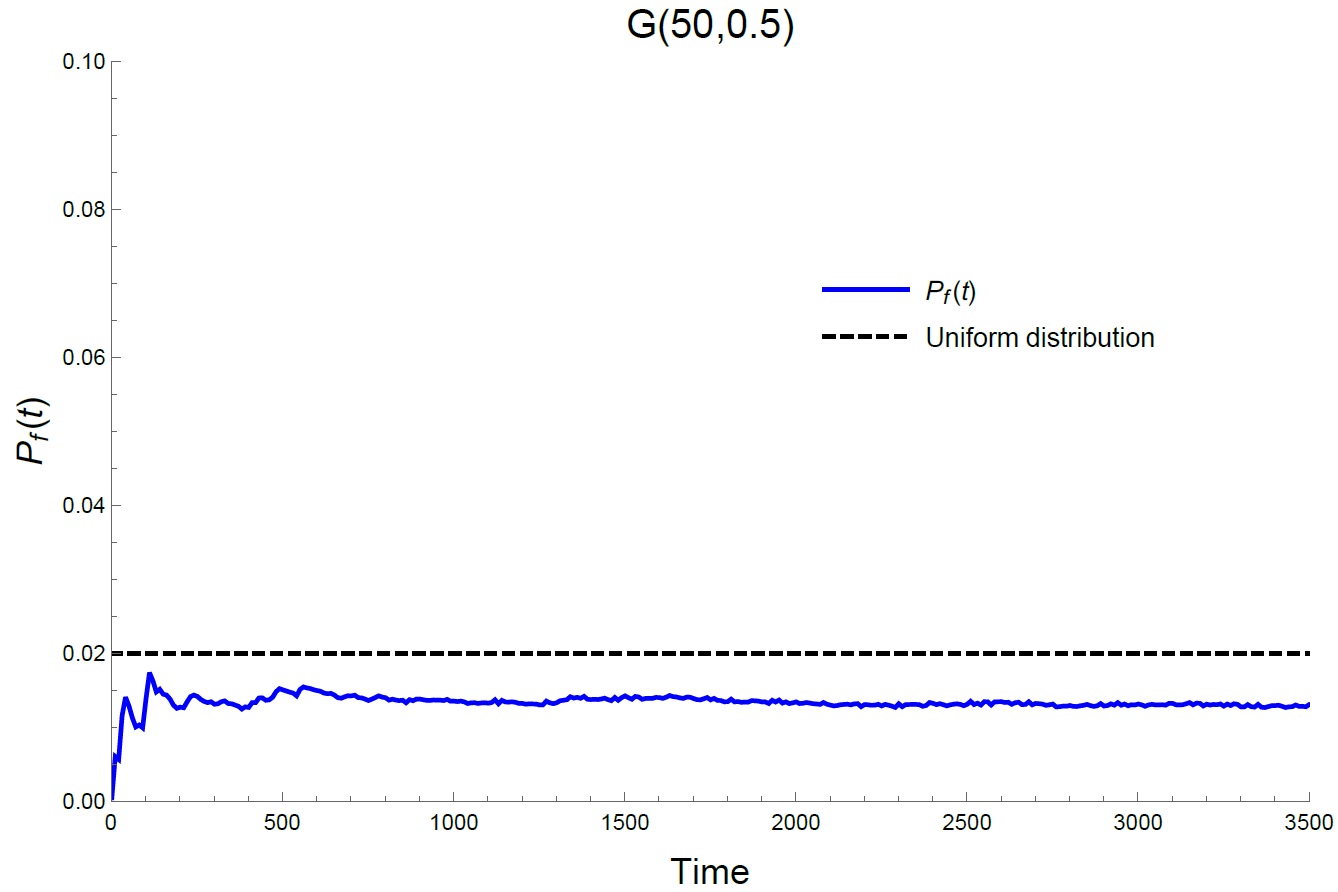}
\caption{\small{The limiting probability distribution is close to the uniform distribution for a quantum walk on $G(n,p)$. The figure shows that the instantaneous time-averaged probability distribution (thick blue line) for a quantum walk on $G(50,0.5)$ remains close to the uniform distribution (horizontal dashed black) after a long enough time.}}
\label{fig:limiting}
\end{figure} 

In order to obtain both these results, we make use of the fact that all the eigenvectors of $\bar{A}_{G(n,p)}$ are completely delocalized. In fact, it was conjectured in Ref.~\cite{dekel2011eigenvectors}, that for dense random graphs, the eigenstates of $\bar{A}_{G(n,p)}$ are completely delocalized. This implies that when any of its eigenvectors $\ket{v_i}$ is expressed in the basis of the nodes of the underlying graph, the absolute value of each entry is at most $n^{-1/2}$ (up to logarithmic factors). Erd\"os et al. \cite{erdHos2013spectral} answered this optimally even for sparse $p$ and the results therein were subsequently extended for any $p$ above the percolation threshold recently by He et al. \cite{he2018local}. This implies that as long as $p\geq \omega(\log(n)/n)$, for all $j\in\{1,\cdots,n\}$
\begin{equation}
\label{eqmain:delocalized-evector}
\|\ket{v_j}\|_\infty\leq n^{-1/2+o(1)},
\end{equation}
with probability $1-o\left(\dfrac{1}{n}\right)$.

To the expression for the limiting probability distribution in Eq.~\eqref{eqmain:prob_infinite}, we substitute the delocalization of eigenvectors from Eq.~\eqref{eqmain:delocalized-evector} to obtain
\begin{align}
P_{f}(T\rightarrow\infty)&=\sum_{i=1}^n|\braket{f}{v_i}\braket{v_i}{\psi_0}|^2\\
                         &\leq \bigOt{1/n}\sum_{i=1}^n|\braket{v_i}{\psi_0}|^2\\
                         &\leq \bigOt{1/n},
\end{align}
independent of $\ket{\psi_0}$, i.e.\ the limiting distribution converges to a (nearly) uniform distribution.

Observe that the upper bound on $\Sigma$ already provides an upper bound on the quantum mixing time. However, we can improve the bound it further if we assume that the quantum walk commences from an \textit{easy to prepare} initial state. By this we mean that the initial state $\ket{\psi_0}$ is a superposition over atmost a $\mathrm{polylog}(n)$ number of nodes. In fact, generally it is assumed that the initial state is localized at some node of the underlying graph, i.e.\ $\ket{\psi_0}=\ket{l}$, which is standard. 

\begin{figure}[h!]
\centering
\includegraphics[trim=0cm 0cm 0cm 0cm, clip=true, width=0.45\textwidth]{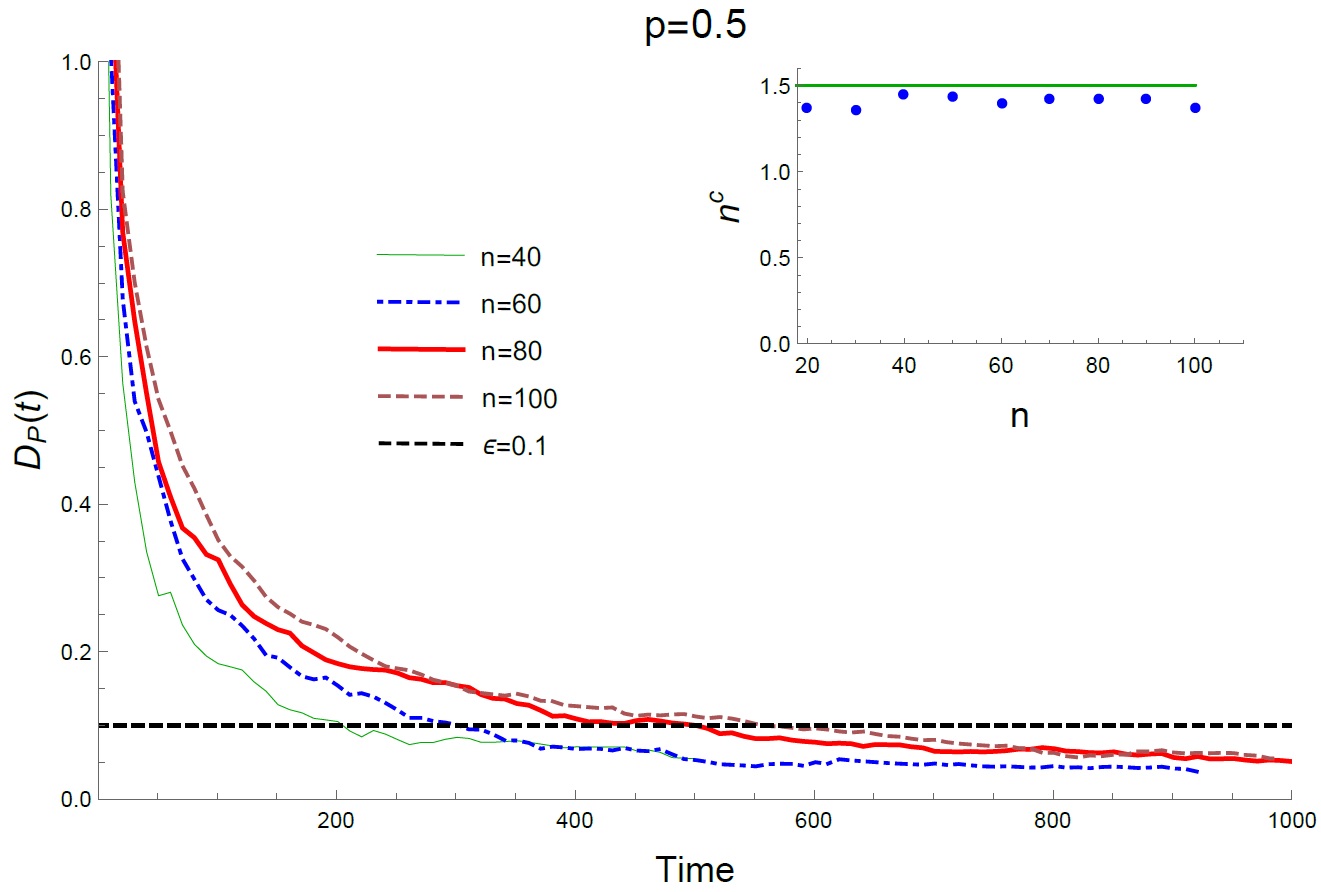}
\caption{\small{Figure shows the time for the instantaneous time-averaged probability distribution at any time $t$, denoted by $P_f(t)$ to be $\epsilon$-close to the limiting probability distribution, $P_f(t\rightarrow\infty)$ for Erd\"os-Renyi random graphs $G(n,p)$. The Y-axis denotes the distance between these two distributions (as measured in one norm), i.e.\ $D_P(t)=\nrm{P_f(t)-P_f(t\rightarrow\infty)}_1$, while the X-axis denotes time. We plot $D_P(t)$ for random graphs of $40$ nodes (dotted green), $60$ nodes (dot-dashed blue), $80$ nodes (solid red) and $100$ nodes (dashed pink), with $p=0.5$. The dotted horizontal line (dashed black) corresponds to $\epsilon=0.1$ which helps indicate the time after which $D_P(t)\leq \epsilon$ for the aforementioned instances. The inset plot shows the exponent $c$ where $n^c$ corresponds to the minimum time after which $D_P(t)\leq 0.1$ (quantum mixing time) for $G(10,0.5),G(20,0.5),\cdots,G(100,0.5)$. The quantum mixing time is thus upper bounded by $n^{3/2}$ which matches with our analytical predictions.}}
\label{fig:convergence-to-limiting}
\end{figure}

If the quantum walk commences from an \textit{easy to prepare} state, 
$$
\ket{\psi_0}=\sum_{k=1}^{q} c_k\ket{k},
$$
where $q$ is in $\mathcal{O}(\mathrm{polylog}(n))$, we can use Eq.~\eqref{eqmain:delocalized-evector} to obtain
\begin{align}
T^{G(n,p)}_{\mathrm{mix}}&=\Oo\left(\dfrac{1}{\epsilon}\sum_{i=1}^{n-1}\sum_{r=1}^{n-i}\dfrac{\left |\braket{v_i}{\psi_0}\right |. \left | \braket{\psi_0}{v_{i+r}}\right |}{\left|\lambda_{i+r}-\lambda_i\right|}\right) \\
						 &= \Oo\left(\dfrac{1}{n^{1-o(1)}\epsilon}\sum_{i=1}^{n-1}\sum_{r=1}^{n-i}\dfrac{\sum_{l=1}^{q}|c_l|.\sum_{m=1}^{q}|c^*_m|}{\left|\lambda_{i+r}-\lambda_i\right|}\right) \\
						 &=	\bigOt{\frac{1}{\epsilon}\  \dfrac{\Sigma}{n}}=\bigOt{\frac{1}{\epsilon}\ n^{3/2 - \frac{\log p}{\log n} + o(1)} \sqrt{p}},
\end{align}
with probability $1 - o(1)$.
~\\~\\
Thus for $n^{-1/3}\leq p \leq 1-n^{-1/3}$, 
\begin{equation}
\label{eqmain:upper-bound-mix-time-sparse-p-high}
T^{G(n,p)}_{\mathrm{mix}}=\widetilde{\mathcal{O}}\left(n^{3/2 - \log p/\log n}\sqrt{p}/\epsilon\right),
\end{equation}
for $p\geq n^{-1/3}$.

Observe that for dense Erd\"os-Renyi random graphs, 
\begin{equation}
\label{eqmain:mixing-time-dense-random-graphs}
T^{G(n,p)}=\bigOt{\dfrac{n^{3/2}}{\epsilon}}.
\end{equation}

Also, as $p$ decreases the upper bound on the mixing time increases. Unfortunately for sparser random graphs, i.e.\ for $p=\log^D(n)/n$, such that $D>8$, we cannot make use of eigenvalue rigidity. However simply using Lemma \ref{lem:sum-bound} along with the observation  that 

$$\sum_{i=1}^{n-r-1} \dfrac{1}{\left|\lambda_{i+r}-\lambda_i\right|} \leq \sum_{i=1}^{n-1} \dfrac{1}{|\lambda_{i+1} - \lambda_i|},$$ 
for $2 \leq r \leq n-1$, gives us a weaker upper bound for the quantum mixing time in such regimes of sparsity. We obtain that
\begin{equation}
\label{eqmain:mixing-time-bound-sparser-random-graphs}
T^{G(n,p)}_{\mathrm{mix}}=\mathcal{O}\left(\dfrac{n^{5/2+o(1)}\sqrt{p}}{\epsilon}\right).
\end{equation}
In fact, the breakdown of rigidity estimates in \cite{erdHos2013spectral} is not an artifact of the proof.  For extremely sparse graphs, the optimal rigidity estimates that hold in dense graphs are known to break down \cite{huang2017transition}.

  Note that there exist weaker forms of rigidity of sparse graphs when $p \leq n^{-1/3}$ which may lead to modest improvements of the exponent of $n$ in the mixing time. However, we have not expended too much effort optimizing the exponent as we are fundamentally limited by the smallest gap, $\Delta_{\min}$ (see lower bound of Eq.~\eqref{eqmain:min-gap-erg}) for which the bounds in \cite{lopatto2019tail} are still quite far from the conjectured behaviour. Obtaining the conjectured smallest gap behaviour appears to be a difficult problem in random matrix theory.
  
  Finally, we numerically verify the analytical results obtained. Fig.~\ref{fig:limiting} shows that for $G(50,0.5)$, the instantaneous time-averaged probability distribution ($P_f(t)$) converges to a distribution that is close to the uniform distribution (horizontal dashed black line). While in Fig.~\ref{fig:convergence-to-limiting} we plot $D_P(t)=\nrm{P_f(t)-P_f(t\rightarrow\infty)}_1$ with time and the inset plot depicts the exponent for the quantum mixing time ($D_P(t)\leq \epsilon$) for random graphs of various sizes and $p=0.5$. The numerical results conform with the analytically obtained upper bound for the quantum mixing time in Eq.~\eqref{eqmain:mixing-time-dense-random-graphs}.
\subsection{Mixing time for continuous-time quantum walks on any ergodic, reversible Markov chain}
\label{subsec:mixing-time-reversible-markov-chain}
Our results thus far have provided an upper bound on the quantum mixing time for almost all simple unweighted graphs. Now we address the quantum mixing time for any ergodic, reversible Markov chain $P$. Any symmetric matrix that captures the local connectivity of $P$ can be a used as a Hamiltonian for performing a CTQW on $P$. As $P$ need not be symmetric in general, one cannot consider a quantum walk on $P$ directly. Given any such Markov chain $P$, one can define the Hamiltonian $H=i[V^\dag S V,\Pi_0]$ as stated in Sec.~\ref{sec-main:hamiltonian-any-markov-chain} (for $s=0$). In this section we consider the limiting distribution of a continuous-time quantum walk under $H$, on the edges of $P$. 


Here, we shall explore whether any generic speedup is obtained for the $QLSamp$ problem. Note that the time evolution of some initial state $\ket{\psi(0),0}$, under the action of $H$ is given by
\begin{align}
\label{eqmain:time-evolution-search-ham}
\ket{\psi(t)}=\braket{v_n}{\psi_0}\ket{v_n,0}+\sum_{j=1,\sigma=\pm}^{n-1}e^{-itE^\sigma_j}\dfrac{\braket{v_j}{\psi_0}}{\sqrt{2}}\ket{\Psi^\sigma_j}.
\end{align}
The limiting probability distribution (note that now we are projecting on obtaining $\ket{0}$ in the second register) is given by

$$P_{f}(T)=\frac{1}{T}\int_{0}^{T} dt\ |\braket{f,0|e^{-iHt}}{\psi_0,0}|^2.$$


\begin{figure}[h!]
\centering
\includegraphics[scale=0.4]{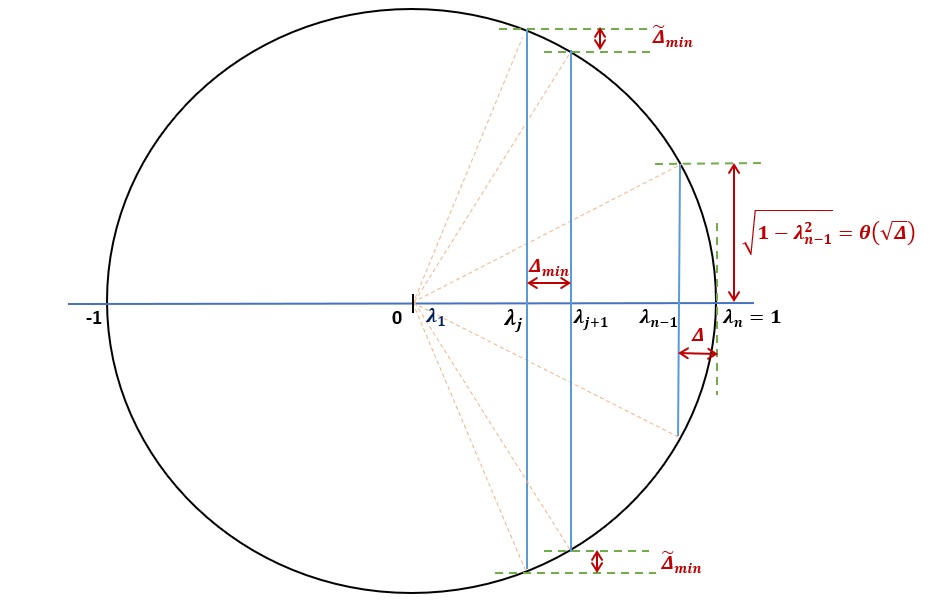}

\caption{\footnotesize{Comparison of the gaps between eigenvalues of a Markov chain $P$ and the corresponding Hamiltonian $H$ defined in Sec.~\ref{sec-main:hamiltonian-any-markov-chain}. The eigenvalues of $P$ lie between $0$ and $1$. Any such eigenvalue $\lambda$ of $P$ is mapped to the eigenvalue pair $\pm \sqrt{1-\lambda^2}$ in the relevant subspace of $H$. As a result the spectral gap, $\Delta$, of $P$ is mapped to $\Theta(\sqrt{\Delta})$ for $H$. However, this is not the case for all eigenvalue gaps. In fact, the minimum over all eigenvalue gaps of $P$, $\Delta_{\min}$ is mapped to $\widetilde{\Delta}_{\min}$, such that $\widetilde{\Delta}_{\min}>\Delta_{\min}$ if $\Delta_{\min}$ appears between two eigenvalues that are close to $\lambda_{n-1}$. On the other hand, $\widetilde{\Delta}_{\min}>\Delta_{\min}$ if it appears between two eigenvalues that are close to $\lambda_2$. This has been elucidated in Sec.~\ref{subsec:mixing-time-reversible-markov-chain}}.}
\label{figmain:ham-spec}
\end{figure}

This implies,
\begin{equation}
\label{eqmain:limiting-somma-ortiz}
P_{f}(T\rightarrow\infty)=\frac{1}{2}\sum_{\lambda_i=\lambda_l}\braket{v_l}{f}\braket{f}{v_i}\braket{v_i}{\psi_0}\braket{\psi_0}{v_l}.
\end{equation}

Also, the upper bound on the quantum mixing time is given by,
\begin{align}
\label{eqmain:upper-bound-simplified-somma-ortiz}
T^{P}_{\mathrm{mix}}&= \Oo\left(\dfrac{1}{\epsilon}\sum_{i\neq l} \dfrac{\left |\braket{E_i}{\psi_0}\right | .\left | \braket{\psi_0}{E_l}\right |}{\left | E_i-E_l\right |}\right),
\end{align}
where recall from Eq.~\eqref{eqmain:eigenpair-somma-ortiz} in Sec.~\ref{subsec:spectrum-somma-ortiz} that $E_j=\sqrt{1-\lambda_j^2}$. 

Now the generic upper bound on the quantum mixing time $T^{P}_{\mathrm{mix}}$ is upper bounded by the double sum $\Sigma$ and as such 
\begin{equation}
\label{eq:upper-bound-mixing-somma-ortiz}
T^{P}_{\mathrm{mix}}\leq \Sigma=\sum_{i\neq l} \dfrac{1}{\left | E_i-E_l\right |} \leq \bigOt{\dfrac{n}{\widetilde{\Delta}_{\min}}},
\end{equation}
where $\widetilde{\Delta}_{\min}$ is the minimum eigenvalue gap of the Hamiltonian $H$. We now need bound $\widetilde{\Delta}_{\min}$ in terms of the minimum eigenvalue gap of $P$, $\Delta_{\min}$. To that end we have the following lemma: 
~\\
\begin{lemma}
\label{lem-main:min-evalue-gap-somma-ortiz}
Suppose $P$ is an ergodic, reversible Markov chain with eigenvalues $\lambda_n=1>\lambda_{n-1}\geq \cdots \lambda_1\geq 0$. Suppose $\Delta$ is the spectral gap of $P$ and the minimum of all gaps between distinct eigenvalues of $P$ be $\Delta_{\min}$. Then the minimum eigenvalue gap of the Hamiltonian $H=i[V^\dag S V, \Pi_0]$, $\widetilde{\Delta}_{\min}$ is bounded as 
$$\Theta\left(\lambda_2\Delta_{\min}\right)\leq \widetilde{\Delta}_{\min}\leq \Theta\left(\dfrac{\Delta_{\min}}{\sqrt{\Delta}}\right).$$
\end{lemma}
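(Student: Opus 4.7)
The plan is to leverage the explicit spectral description of $H$ from Section~\ref{subsec:spectrum-somma-ortiz}: the eigenvalues of $H$ on the relevant invariant subspaces $\oplus_k \mathcal{B}_k$ are $\{0\}\cup\{\pm E_j : 1\le j\le n-1\}$ with $E_j = \sqrt{1-\lambda_j^2}$. I will then enumerate the three kinds of gaps one can form between \emph{distinct} eigenvalues of $H$:
(i) same-sign gaps $|E_i - E_j|$ with $1\le i < j \le n-1$;
(ii) opposite-sign gaps $E_i + E_j$ with $1\le i,j\le n-1$;
(iii) gaps to zero, equal to $E_j$ for $1\le j\le n-1$.
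Since $E_{n-1} = \sqrt{(1-\lambda_{n-1})(1+\lambda_{n-1})} = \Theta(\sqrt{\Delta})$ and $E_j\ge E_{n-1}$ for all such $j$ (the map $\lambda\mapsto\sqrt{1-\lambda^2}$ is decreasing on $[0,1]$), all gaps of types (ii) and (iii) are at least $\Theta(\sqrt{\Delta})$. Hence the minimum gap is either one of order $\sqrt{\Delta}$ or a same-sign gap.

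The workhorse identity for same-sign gaps is
\begin{equation*}
E_i - E_j \;=\; \sqrt{1-\lambda_i^2}-\sqrt{1-\lambda_j^2} \;=\; \frac{(\lambda_j-\lambda_i)(\lambda_j+\lambda_i)}{E_i + E_j}, \qquad \lambda_j>\lambda_i.
\end{equation*}
For the lower bound, I would apply this to every same-sign gap: because $i\ge 1$ and $j>i$ force $j\ge 2$, we have $\lambda_j + \lambda_i \ge \lambda_j \ge \lambda_2$, while $E_i+E_j\le 2$. This gives $E_i - E_j \ge \tfrac{1}{2}\lambda_2 \Delta_{\min}$. Combined with the $\Theta(\sqrt{\Delta})$ floor for the other two gap types, and noting $\sqrt{\Delta}\ge\sqrt{\Delta_{\min}}\ge\Delta_{\min}\ge\lambda_2\Delta_{\min}$ (since $\lambda_2,\Delta_{\min}\le 1$), we conclude $\widetilde{\Delta}_{\min}\ge \Theta(\lambda_2\Delta_{\min})$.

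For the upper bound, I would exhibit a single small gap by splitting on where $\Delta_{\min}$ is attained in the spectrum of $P$. If $\Delta_{\min} = \Delta = \lambda_n - \lambda_{n-1}$, then the type-(iii) gap $E_{n-1} = \sqrt{\Delta(2-\Delta)} = \Theta(\sqrt{\Delta}) = \Theta(\Delta_{\min}/\sqrt{\Delta})$ does the job. Otherwise $\Delta_{\min}$ is realized by a pair $\lambda_a>\lambda_b$ with $a,b\le n-1$; then using the identity with $\lambda_a+\lambda_b\le 2$ and $E_a+E_b\ge 2E_{n-1}=\Theta(\sqrt{\Delta})$ (since $\lambda_a,\lambda_b\le\lambda_{n-1}$ forces $E_a,E_b\ge E_{n-1}$) gives $E_b-E_a \le \Theta(\Delta_{\min}/\sqrt{\Delta})$, as desired.

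The routine algebra is the identity for $E_i - E_j$; the main obstacle is the case split on the location of $\Delta_{\min}$ in the upper bound, together with the observation in the lower bound that the larger of the two indices in a same-sign gap is necessarily at least $2$, which is what lets us replace the generically weaker factor $\lambda_1$ (possibly $0$) by $\lambda_2$.
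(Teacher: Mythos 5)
Your proposal is correct, and its core estimate is the same one the paper uses: the gap $|E_i-E_j|$ between same-sign eigenvalues of $H$ equals $|\lambda_i-\lambda_j|$ times the factor $(\lambda_i+\lambda_j)/(E_i+E_j)$, which is at least $\lambda_2/2$ (since the larger index is necessarily $\geq 2$) and at most $O(1/\sqrt{\Delta})$ (since $E_i+E_j\geq 2E_{n-1}=\Theta(\sqrt{\Delta})$). The paper reaches this factor by writing $\lambda_j=\lambda_{j+1}-\delta_j$ and Taylor-expanding $\sqrt{1-\lambda_j^2}$ around $\sqrt{1-\lambda_{j+1}^2}$, whereas you use the difference-of-squares identity; yours is cleaner and sidesteps having to justify that the expansion parameter is small. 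The genuine value added in your version is the explicit enumeration of all three gap types in the spectrum of $H$ — same-sign gaps, opposite-sign gaps $E_i+E_j$, and gaps to the eigenvalue $0$ — together with the observation that the latter two are all $\Omega(\sqrt{\Delta})\geq\Omega(\lambda_2\Delta_{\min})$, and the case split in the upper bound according to whether the minimizing pair of $P$ involves $\lambda_n=1$ (in which case the witness is the gap $E_{n-1}=\Theta(\sqrt{\Delta})=\Theta(\Delta_{\min}/\sqrt{\Delta})$ to zero). The paper's proof considers only consecutive same-sign gaps and implicitly assumes the minimum gap of $H$ lives there, so your argument closes a small gap in rigor while arriving at the identical bounds.
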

~\\
\begin{proof}
We know that for $H$, in the relevant subspace, each eigenvalue of $P$, $\lambda_j$, maps to $\pm\sqrt{1-\lambda_j^2}$.
Thus if $\delta_j=|\lambda_{j+1}-\lambda_{j}|$, then we have
\begin{align}
\widetilde{\delta_j}&=\left|\sqrt{1-\lambda_{j+1}^2}-\sqrt{1-\lambda_j^2}\right|\\
                    &=\left|\sqrt{1-\lambda_{j+1}^2}-\sqrt{1-(\lambda_{j+1}-\delta_j)^2}\right|\\
\label{eqmain:proof-mixing-somma-ortiz-pre-taylor}                    
                    &=\sqrt{1-\lambda_{j+1}^2}\left|1-\sqrt{1+\dfrac{2\delta_j\lambda_{j+1}}{1-\lambda_{j+1}^2}-\dfrac{\delta_j^2}{1-\lambda_{j+1}^2}}\right|.
\end{align}
We are concerned with the minimum eigenvalue gap $\widetilde{\Delta}_{\min}$. Without loss of generality, we assume that $P$ has a simple spectrum (consequently, so does $H$) and for some $1\leq j \leq n-1$, the eigenvalue gap is minimum for two consecutive distinct eigenvalues $\lambda_j$ and $\lambda_{j+1}$. That is, for some value of $j$, $\delta_j=\Delta_{\min}$ and henceforth we consider that value of $j$. Observe that in such a case, the second term inside the square-root
\begin{equation}
\dfrac{2\delta_j\lambda_{j+1}}{1-\lambda_{j+1}^2}=\dfrac{2\Delta_{\min}\lambda_{j+1}}{1-\lambda_{j+1}^2}< \dfrac{2\Delta_{\min}}{\Delta}<1.
\end{equation}
So expanding Eq.~\eqref{eqmain:proof-mixing-somma-ortiz-pre-taylor} according to Taylor series, we have
\begin{align}
\widetilde{\Delta}_{\min}&=\dfrac{2\Delta_{\min}\lambda_{j+1}}{\sqrt{1-\lambda^2_{j+1}}}+\Theta\left(\dfrac{\Delta^2_{\min}}{\sqrt{1-\lambda^2_{j+1}}}\right)\\
                         &= \Theta\left(\dfrac{\Delta_{\min}\lambda_{j+1}}{\sqrt{1-\lambda^2_{j+1}}}\right).
\end{align}
This expression implies that the minimum eigenvalue gap of $P$ is mapped to the minimum eigenvalue gap of $H$ multiplied by the ratio of the corresponding eigenvalues of $P$ and $H$. The upper and lower bounds follow from observing that for all $1\leq j \leq n-1$, $\sqrt{1-\lambda^2_{j+1}}\leq \Theta(\sqrt{\Delta})$ and $\lambda_{j+1}=\Omega\left(\lambda_2\right)$, respectively. 
\end{proof}
~\\
So from Lemma \ref{lem-main:min-evalue-gap-somma-ortiz}, we have that for any ergodic, reversible Markov chain $P$
\begin{equation}
\label{eqmain:quantum-mixing-time-ub-hamP}
T^{P}_{\mathrm{mix}}= \Oo\left(\dfrac{1}{\epsilon} \dfrac{n}{\lambda_2\Delta_{\min}}\right).
\end{equation} 

Let us now consider that $P$ is a symmetric, i.e.\ $P=P^T$. Then the underlying quantum walk can also be performed on $P$ itself. Assuming that the eigenvalues of $P$ are ordered, for a continuous-time quantum walk on $H$, from Eq.~\eqref{eq:upper-bound-mixing-somma-ortiz} and Lemma \ref{lem-main:min-evalue-gap-somma-ortiz}, we observe that the upper bound for the quantum mixing time may be faster or slower than a quantum walk performed on $P$ depending on where the minimum eigenvalue gap appears (See Fig.~\ref{figmain:ham-spec} for a pictorial representation). 

If $\Delta_{\min}$ happens to be between two eigenvalues that are close to $\lambda_{n-1}$, $\tilde{\Delta}_{\min}\approx\Delta_{\min}/\sqrt{\Delta}$ and hence the upper bound on the quantum mixing time is in $\bigOt{n\sqrt{\Delta}/\Delta_{\min}}$ which is faster than the bound in Eq.~\eqref{eqmain:mixing-time-generic-ub}. On the other hand, if $\Delta_{\min}$ is in the vicinity of $\lambda_2$, the upper bound on the quantum mixing time is given by Eq.~\eqref{eqmain:quantum-mixing-time-ub-hamP}. 

For generic ergodic, reversible Markov chains however, this comparison is inapplicable as $P$ may not be symmetric and cannot be used as a CTQW Hamiltonian.    

This is in contrast to the $QSSamp$ problem, where using $H$ offers a generic quadratic speedup over using $P$ as the Hamiltonian in Algorithm \ref{algo1}. This shows a fundamental difference between the two different notions of mixing for quantum algorithms as elucidated by $QSSamp$ and $QLSamp$ problems.
\section{Discussion}
\label{sec-main:discussion}
In this article we have discussed the two notions of quantum  mixing and designed analog quantum algorithms to tackle these problems. First, using Hamiltonian evolution and von Neumann measurements, we have presented an analog quantum algorithm that, given an ergodic, reversible Markov chain outputs a coherent encoding of its stationary state. The running time of our algorithm matches that of its discrete-time counterparts. Secondly, we have also discussed the problem of sampling from the limiting distribution of a (time-averaged) continuous-time quantum walk. We have offered an intuitive explanation of the tools used in Ref.~\cite{chakraborty2020fast} to derive upper bounds on the mixing time for random graphs. We have also backed up the analytical results therein with numerical simulations and extended the time-averaged notion of mixing to any ergodic, reversible Markov chain.

Our results could pave the way for further research. For example, quantum state-generation using von Neumann measurements can be used to develop novel analog quantum algorithms. Note that our methods could be used to obtain other analog quantum algorithms for solving the $QSSamp$ problem. One could reverse the spatial search algorithm by Childs and Goldstone \cite{childs2004spatial, chakraborty2020optimality} and use von Neumann measurements to prepare a coherent encoding of the highest eigenstate of underlying Hamiltonian. In the case of state-transitive graphs, this will allow for uniform sampling. 

It would be interesting to explore whether using our framework, one can construct an analog quantum algorithm to fast-forward the dynamics of any ergodic, reversible Markov chain much like the results of Apers and Sarlette in discrete-time \cite{simon2018quantum}. The challenge is that most of the underlying techniques that enable this, such as the recently developed techniques in the context of quantum simulation \cite{low2016hamiltonian, chakraborty2018power, gilyen2018quantum}, are absent in continuous-time. However, the fact that the Hamiltonian defining our continuous-time quantum walk can be efficiently simulated using query access to the unitary defining the discrete-time quantum walk of Ref.~\cite{krovi2016quantum} might offer useful insights towards designing such algorithms. 

Our algorithm can also be used to prepare stationary states of \textit{slowly-evolving} Markov chains, i.e.\ given a sequence of Markov chains $\{P_1,\cdots,P_n\}$, such that there is a significant overlap between the stationary distributions of any two consecutive Markov chains, meaning  $|\braket{\pi_{j+1}}{\pi_j}|$ is large \cite{aharonov2003adiabatic, wocjan2008speedup, orsucci2018optimal}. Given that one can prepare $\ket{\pi_1}$ efficiently, the task is to prepare $\ket{\pi_n}$. Such situations arise in a host of approximation algorithms for counting as has been pointed out in Ref.~\cite{aharonov2003adiabatic}. Our algorithm will provide a quadratic speedup over that of Ref.~\cite{aharonov2003adiabatic} as given any $P_j$, the spectral gap of the Hamiltonian defined in Sec.~\ref{sec-main:hamiltonian-any-markov-chain} is amplified quadratically over the corresponding discriminant matrix, which acts as the Hamiltonian for the approach in \cite{aharonov2003adiabatic}.

For the problem of time-averaged mixing, it would be interesting to explore the possibility of obtaining better bounds on the quantum mixing time for any ergodic, reversible Markov chain. Furthermore, this notion of quantum mixing is closely related to the problem of equilibration of isolated quantum systems, a widely studied problem in quantum statistical mechanics \cite{wilming2018equilibration}. As a result, our results can help obtain better upper bounds for the equilibration times of isolated quantum systems defined by random Hamiltonians.
\begin{acknowledgments}
S.C. acknowledges funding from F.R.S.-FNRS. S.C. and J.R. are supported by the Belgian Fonds de la Recherche Scientifique - FNRS under grants no F.4515.16 (QUICTIME) and R.50.05.18.F (QuantAlgo). K.L. has been partially supported by NSF postdoctoral fellowship DMS-1702533.
\end{acknowledgments}
\bibliographystyle{unsrt}
\bibliography{References}
\end{document}